\documentclass[pdflatex,sn-mathphys-ay]{sn-jnl}%

\usepackage{graphicx}%
\usepackage{multirow}%
\usepackage{amsmath,amssymb,amsfonts}%
\usepackage{mathrsfs}%
\usepackage[title]{appendix}%
\usepackage{xcolor}%
\usepackage{textcomp}%
\usepackage{manyfoot}%
\usepackage{booktabs}%
\usepackage{array}
\usepackage{algorithm}%
\usepackage{algorithmicx}%
\usepackage{algpseudocode}%
\usepackage{bm}%
\usepackage{microtype,enumitem}%
\hypersetup{hypertexnames=false}%

\theoremstyle{thmstyleone}%
\newtheorem{theorem}{Theorem}[section]
\newtheorem{lemma}[theorem]{Lemma}
\newtheorem{proposition}[theorem]{Proposition}
\newtheorem{corollary}[theorem]{Corollary}

\theoremstyle{thmstylethree}%
\newtheorem{assumption}[theorem]{Assumption}

\theoremstyle{thmstyletwo}%

\newcommand{\E}{\mathbb{E}}
\newcommand{\Var}{\operatorname{Var}}
\newcommand{\cum}{\operatorname{cum}}
\newcommand{\TV}{\mathrm{TV}}
\newcommand{\esssup}{\operatorname*{ess\,sup}}
\newcommand{\cO}{\mathcal{O}}

\newcommand{\dd}{\,\mathrm{d}}
\newcommand{\abs}[1]{\left\lvert #1\right\rvert}
\newcommand{\norm}[1]{\left\lVert #1\right\rVert}
\newcommand{\X}{X} 
\newcommand{\V}{X} 
\newcommand{\n}{n} 

\begin{document}

\title[Bessel-debiased pseudo-marginal MCMC]{Bessel-Debiased Pseudo-Marginal MCMC for Generalised Bayesian Inference}

\author[1]{\fnm{Yingkai} \sur{Lu}}
\author[2]{\fnm{Jeong Eun} \sur{Lee}}
\author[3]{\fnm{Geoff K.} \sur{Nicholls}}

\affil[1]{\orgname{The Chinese University of Hong Kong, Shenzhen}, \orgaddress{\city{Shenzhen}, \state{Guangdong}, \country{China}}}
\affil[2]{\orgname{University of Auckland}, \orgaddress{\city{Auckland}, \state{Auckland}, \country{New Zealand}}}
\affil[3]{\orgname{University of Oxford}, \orgaddress{\city{Oxford}, \country{United Kingdom}}}

\abstract{
Generalised Bayesian inference uses weights of the form $\exp\{-\beta_n\ell_n(\theta)\}$ even when the loss is only estimated. Exponentiating an unbiased loss estimate changes the target, and when $\beta_n\asymp n$ an ordinary Monte Carlo loss estimate with variance of order $M^{-1}$ requires a per-proposal budget $M$ of order $n^2$ to keep the leading log-weight variance bounded. We introduce Sign-Corrected Bessel Debiasing (SCBD), a signed pseudo-marginal method based on independent block estimates of the loss, and study its ordinary-MC and independently randomised quasi-Monte Carlo (RQMC) implementations. Under an i.i.d. Gaussian block model, a Bessel factor constructed from the block sample variance exactly removes the Gaussian exponential inflation despite the variance being unknown. For general non-Gaussian finite blocks, the method targets a posterior differing from the intended posterior by a parameter-dependent multiplicative factor. Under regularity conditions, the uncorrected and corrected ordinary-MC targets have total-variation errors of orders $\beta_n^2/M_n$ and $\beta_n^3/M_n^2$. If an RQMC block estimator has variance $\mathcal O\{B^{-\alpha}(\log B)^{d-1}\}$, the corresponding errors are of orders $\delta^{\mathrm{RQ}}_{n,M_n}$ and $(\delta^{\mathrm{RQ}}_{n,M_n})^{3/2}$, where $\delta^{\mathrm{RQ}}_{n,M_n}=\beta_n^2M_n^{-\alpha}{\log(2+M_n)}^{d-1}$. The same variance rate gives a sufficient budget of order $n^{2/\alpha}$, up to logarithmic factors, for bounded leading log-weight variance when $\beta_n\asymp n$. The numerical examples show that favourable RQMC representations can inherit this budget scaling and that variance reduction and Bessel correction are complementary. Compared to existing exact corrections, Bessel debiasing is essentially ``for free''. It is generic, easy to code and supported by theory.
}

\keywords{Generalized Bayesian Inference, Estimated Losses, Pseudo-Marginal MCMC, Randomized Quasi-Monte Carlo, Sign Problem, Derivative-Free Inference}

\maketitle

\section{Introduction}
\label{sec:intro}

Pseudo-marginal MCMC \citep{LinLiuSloan00,beaumont03,andrieu2009pseudo} is an exact MCMC method which uses unbiased but noisy estimates of the target distribution in its acceptance probability. This approach works well for latent variable models, where importance sampling and particle filters \citep{andrieu2010particle} give unbiased likelihood estimators. However, in some settings we only have unbiased estimates of the log-likelihood and there is a need to de-bias the likelihood estimate. Russian-roulette methods \citep{lyne2015russian} can remove this bias at the cost of sign-reweighting (discussed below). However, they use random truncation which inflates weight-variance and adds computational cost. Our debiasing correction for unbiased log-likelihood estimators is exact in the Gaussian case and avoids this truncation. 

\citet{frazier2025exact} point out that noisy target evaluation is sometimes used in Generalised Bayesian Inference (GBI, \cite{bissiri2016general}, \cite{knoblauch22}), which updates belief from prior to posterior by exponentiating a loss
rather than a log-likelihood.  The target distribution is
\begin{equation}
    \pi_n(\theta)\propto \exp\{-\beta_n\ell_{n}(\theta)\}\pi_0(\theta),
    \label{eq:intro_gibbs_posterior}
\end{equation}
where \(\ell\) is a data and parameter-dependent loss and \(\beta_n\) is an inverse-temperature or
learning-rate.   
In the
finite-data regimes considered below, the loss is normalized by \(1/n\), so we
take \(\beta_n\asymp n\).   
Loss-based posteriors are used in robust or
misspecification-aware work \citep{grunwald2017inconsistency,jewson2018principles,matsubara2022robust}.

The computational problem studied here begins after the target in
Eq.~\eqref{eq:intro_gibbs_posterior} has been specified.  In simulator-based
GBI, \(\ell_{n}(\theta)\) may be an expected discrepancy, such as a maximum mean discrepancy (MMD) \citep{cherief2020mmd} or other
scoring-rule loss \citep{gao2023generalized,pacchiardi24}, where each evaluation may involve simulation and estimation.  In robust
generalized Bayes with tractable working densities, for example beta-type \citep{basu1998robust}, the loss may contain a model integral that is not
available in closed form and in tall-data settings in Bayesian inference, an average log-likelihood
contribution or average loss may be estimated by subsampling \citep{bardenet2017markov}.  

These examples
have different statistical motivations but the same algebraic obstructions. First of all, an
unbiased estimate of a loss is not an unbiased estimate of its exponential. Indeed, for an unbiased stochastic loss \(\bar\ell_{M}(\theta,\xi)\) based on estimation-variables $\xi$ and having variance of order \(M^{-1}\), we have in general,
\begin{equation}
    \E_\xi\{\exp[-\beta_n\bar\ell_{M}(\theta,\xi)]\}
    >
    \exp[-\beta_n\ell_{n}(\theta)],
    \label{eq:intro_jensen_gap}
\end{equation}
unless the noise is degenerate or specially corrected.  A pseudo-marginal MCMC
method needs an unbiased estimator of the posterior weight not the log-weight.  Bias in the weight
changes the target; the construction below
addresses the exponential bias by applying the Gibbs weight
to an estimated loss.

The second obstruction is variance scaling.  Pseudo-marginal performance is sensitive to the
relative variability of the weight estimator \citep{andrieu2009pseudo,doucet2015efficient}.  If \(\beta_n\) is proportional to \(n\), the
leading log-weight variance is
\[
    \beta_n^2\operatorname{Var}\{\bar\ell_M(\theta,\xi)\}\asymp n^2/M.
\]
If this is not bounded over $n$ then the MCMC acceptance rate may go to zero \citep{doucet2015efficient}. Although neither necessary nor sufficient for geometric ergodicity \citep{andrieu2015convergence}, cost-efficiency analyses \citep{pitt2012some,sherlock2015efficiency} show that targeting a log-weight variance of order one supports MCMC mixing. Keeping this variance bounded requires a per-proposal budget $M$ of order $n^2$. 

In a recent paper, \citet{frazier2025exact} discuss pseudo-marginal MCMC (P-MCMC) for GBI with estimated losses and point to a required per-update budget of order $n$. However, this comes from a different criterion. If we run P-MCMC, plugging a loss estimate into \eqref{eq:intro_gibbs_posterior} at each step, then the equilibrium of the chain determines the implicit target. \citet{frazier2025exact} point out that $M$ needs to be big enough to ensure that estimating the loss does not worsen the (statistical) posterior concentration rate $1/n$ of the implicit target. They go on to make the observation above, that mixing requires order $n^2$, and give a Piecewise-Deterministic-Markov-Process (PDMP), \citep{fearnhead2018pdmp, bierkens2019zigzag} which is exact and avoids this issue entirely. However, PDMP cannot be implemented for general targets, and the implementation is problem-specific when it can be used. A user with a single data-set, and fixed $n$, may be interested in using a generic but approximate P-MCMC algorithm, keeping the per-update budget as small as possible but still controlling bias, and that is what we provide.

We now outline our approach.  At a proposed state, \(K\)
independent block estimates, each based on \(B\) loss-integrand
evaluations, give a block mean \(\bar\ell_M\) and sample variance
\(S_{K,B}^2\), with budget \(M=KB\). In Section~\ref{sec:methodology} we replace $\exp\{-\beta_n\bar\ell_M\}$ in P-MCMC with
\[
    \widehat Z_M(\theta,\xi)
    =
    \exp\{-\beta_n\bar\ell_M(\theta,\xi)\}
    F_{\beta_n}(S_{K,B}^2(\theta,\xi),K),
\]
where \(F_{\beta_n}\) is the Bessel factor defined in
Eq.~\eqref{eq:bessel_correction} and $\xi$ are random variables used to form the estimate $\bar\ell_M(\theta,\xi)$. The correction exactly debiases our estimate of $\exp\{-\beta_n\ell_{n}(\theta)\}$ for Gaussian block estimators. However, it may be negative.  Since a non-negative unbiased correction is unavailable in general
\citep{jacob2015nonnegative}, the P-MCMC algorithm we give in Section~\ref{sec:SCBD_sampler} uses \(|\widehat Z_M|\), and posterior expectations are recovered by a
sign ratio \citep{lyne2015russian}. For general (non-Gaussian) blocks, this leaves a posterior bias factor 
\[
    w_M(\theta)
    =
    {
      \exp\{\beta_n\ell_{n}(\theta)\}\,\E_\xi\{\widehat Z_M(\theta,\xi)\}
    }
\]
which is one for Gaussian estimators. Replacing $\exp\{-\beta_n\bar\ell_M\}$ with $\widehat Z_M(\theta,\xi)$ in approximate P-MCMC reduces posterior-bias in Section~\ref{sec:tv-rate-gain-debias} from $O(1/M)$ to $O(1/M^2)$. Not all of this is new. We discuss related work on convergence and P-MCMC in Section~\ref{sec:related-work-bounds}.

Randomized quasi-Monte Carlo can reduce the variance of smooth,
favorably represented integrals faster than independent Monte Carlo
\citep{l2000variance,owen1997b,loh2003asymptotic}.  When the represented
loss has variance decay \(O(M^{-\alpha})\), a sufficient budget for
bounded leading log-weight variance changes from \(O(n^2)\) to
\(O(n^{2/\alpha})\), up to logarithmic factors. This is discussed in Sections~\ref{sec:tv-rate-gain-debias} and \ref{sec:rqmc_sublinear} where we say how we choose $K$ and $B$. We call our
signed pseudo-marginal scheme the Sign-Corrected Bessel Debiasing
(SCBD) method.  Its ordinary-Monte Carlo and independently randomized
RQMC implementations are denoted MC-SCBD and RQ-SCBD, respectively.

\section{Randomized Block Estimates of the Loss}
\label{sec:background}

Dropping $n$ subscripts (until we consider the $n$-dependence of the estimation budget), the loss \(\ell(\theta)\) in Eq.~\eqref{eq:intro_gibbs_posterior} is
estimated from \(K\) independently randomized blocks.  Each block estimate is
unbiased.  Assume that
\begin{equation}
    \ell(\theta)
    =
    \int_{\mathcal X} L(\theta,x)\,P(dx),
    \label{eq:loss_integrand_definition}
\end{equation}
where \(P\) is a probability law and \(x\) is the random input used for
simulation, numerical integration, or subsampling.

For a block size \(B\), let
\(\xi_{k}=(\xi_{k,1},\ldots,\xi_{k,B})\)
denote the random input block generated by the \(k\)th randomization.
The associated block estimate is
\begin{equation}
    \widehat\ell_B^{\;(k)}(\theta,\xi_{k})
    =
    \frac{1}{B}\sum_{b=1}^B L(\theta,\xi_{k,b}),
    \qquad k=1,\ldots,K.
    \label{eq:setup_block_estimators}
\end{equation}
The block rule is chosen so that, for all finite \(B\) and all
\(\theta\) for which \(L(\theta,\cdot)\) is integrable,
\begin{equation}
    \E\{\widehat\ell_B^{\;(k)}(\theta,\xi_{k})\}
    =
    \ell(\theta).
    \label{eq:setup_block_unbiasedness}
\end{equation}
For ordinary Monte Carlo, Eq.~\eqref{eq:setup_block_unbiasedness}
follows because the \(\xi_{k,b}\) are independent draws from \(P\).  Equal-mass randomized
stratification is also unbiased because the equally weighted stratum means sum
to the full integral.

For RQMC, after absorbing any inverse transform or simulator map into \(L\),
we take \(\xi_{k,b}=u_{k,b}\in[0,1)^d\).  Here \(d\) is the dimension of
the randomized integration or simulator input.  Our Sobol blocks use a left linear matrix
scramble followed by a digital random shift.  Each randomized point is
marginally uniform on \([0,1)^d\)
\citep[Proposition~3.1]{owen2003variance}; linearity of expectation therefore
gives Eq.~\eqref{eq:setup_block_unbiasedness}, although the points within a
block are dependent.  

In both MC- and RQ-SCBD, the \(K\) blocks are independent across blocks.  Taking \(K\ge2\),
the per-proposal loss-integrand budget is $M=KB$,
and the block mean and block sample variance are
\begin{align}
    \bar\ell_M(\theta,\xi)
    &=
    \frac{1}{K}\sum_{k=1}^K
    \widehat\ell_B^{\;(k)}(\theta,\xi_{k}),
    \label{eq:setup_block_average}\\
    S_{K,B}^2(\theta,\xi)
    &=
    \frac{1}{K-1}\sum_{k=1}^K
    \left\{
        \widehat\ell_B^{\;(k)}(\theta,\xi_{k})
        -
        \bar\ell_M(\theta,\xi)
    \right\}^2.
    \label{eq:setup_block_sample_variance}
\end{align}
We suppress the argument \(\xi\) when the particular realization of the
auxiliary variables need not be displayed.
If one evaluation of
\(L(\theta,x)\) requires several simulator calls, \(M\) counts loss-integrand
evaluations rather than simulator calls.

A subscript \(M\) on a finite-block quantity denotes the full choice of
\(K\), \(B\), and block rule at fixed \(n\) and \(\beta\), rather than the
product \(KB\) alone.  Two configurations with the same numerical value of
\(M\) can therefore define different finite-block quantities.

\section{The Sign-Corrected Bessel Debiasing Construction}
\label{sec:methodology}

The construction uses \(K\) independent, finite-\(B\) unbiased block
estimates of \(\ell(\theta)\).  Their average remains unbiased, but
its exponential is not.  SCBD adds a Bessel factor computed
from the across-block sample variance which removes this bias exactly under a Gaussian block model.  Because it can be negative,
the Markov chain uses the absolute corrected weight and posterior averages are
sign-reweighted.

\subsection{Gaussian \texorpdfstring{\(K\)}{K}-block correction}
\label{sec:gaussian_bessel_correction}

\subsubsection{Exact Gaussian identity}

Fix \(\theta\) and \(B\), and suppress them and $M$ from the notation.  The
Gaussian block model below is an idealization.  For ordinary Monte
Carlo blocks it is motivated by the central limit theorem; for
scrambled-net blocks, an analogous limit requires additional
regularity conditions \citep{loh2003asymptotic}.  In this subsection we assume that
\begin{equation}
    \widehat\ell^{(1)},\ldots,\widehat\ell^{(K)}
    \stackrel{\rm iid}{\sim}
    N(\ell,\sigma^2),
    \qquad K\ge2,
    \label{eq:gaussian_block_model}
\end{equation}
where \(\ell=\ell(\theta)\) with expectation taken under Eq.~\eqref{eq:gaussian_block_model}. Let
\begin{equation}
    \bar\ell
    =
    \frac1K\sum_{k=1}^K\widehat\ell^{\;(k)},
    \qquad
    S_K^2
    =
    \frac1{K-1}\sum_{k=1}^K
    \{\widehat\ell^{\;(k)}-\bar\ell\}^2.
    \label{eq:gaussian_block_mean_variance}
\end{equation}
so that \(\bar\ell\sim N(\ell,\sigma^2/K)\),
\((K-1)S_K^2/\sigma^2\sim\chi^2_{K-1}\), \(\bar\ell\) is independent of
\(S_K^2\) and
\begin{equation}
    \E\{\exp(-\beta\,\bar\ell)\}
    =
    \exp(-\beta\,\ell)
    \exp\left(\frac{\beta^2\sigma^2}{2K}\right).
    \label{eq:gaussian_exponential_inflation}
\end{equation}
The first factor is the intended Gibbs weight; the second is the inflation due
to Gaussian block noise.  Since \(\bar\ell\) and \(S_K^2\) are independent, it
is enough to construct from \(S_K^2\) an unbiased estimator of the reciprocal
of the second factor.

For \(s^2\ge0\), let
\begin{equation}\label{eq:bessel-arguments-nu-z}
    \nu=\frac{K-3}{2},
    \qquad
    z_\beta(s^2,K)
    =
    \beta\,\sqrt{\frac{(K-1)s^2}{K}},
\end{equation}
and define
\begin{equation}
    F_\beta(s^2,K)
    \equiv
    \begin{cases}
    \displaystyle
    \Gamma(\nu+1)
    \left\{\frac{z_\beta(s^2,K)}{2}\right\}^{-\nu}
    J_\nu\!\left\{z_\beta(s^2,K)\right\},
        & z_\beta(s^2,K)>0,\\[1.2ex]
    1,  & z_\beta(s^2,K)=0.
    \end{cases}
    \label{eq:bessel_correction}
\end{equation}
where \(J_\nu\) is the Bessel function of the first kind, and the second
line is the continuous extension at \(z_\beta(s^2,K)=0\).  
The whole expression for $F_\beta$ is actually the generalized hypergeometric function ${}_0F_1\!\left((K-1)/{2},-{z_\beta^2}/{4}\right)$.

The function $F_{\beta}$ above appears in \citet{ceperley1999penalty} but is used in a different way. They sought to correct a penalty-method algorithm rather than a pseudo-marginal algorithm. The penalty method is a randomised MCMC algorithm \citep{nicholls2012coupled} which refreshes both numerator and denominator factors at an MCMC update like the MCWM algorithm in \cite{andrieu2009pseudo}. The correction is not exact (even in the Gaussian case) when used in that way.\footnote{The Penalty Method and the Pseudo-Marginal Method were published around the same time. Indeed, \citet{ceperley1999penalty} writes of \cite{LinLiuSloan00}, ``A recent preprint proposed to solve the problem of violating the constraints on the acceptance probabilities by introducing negative signs into the estimators. We have not explored this possibility''.}

Proposition~\ref{prop:exact_debiasing}
gives the result we need: under the Gaussian
\(K\)-block model, the resulting signed weight has expectation
\(\exp(-\beta\,\ell)\).

\begin{proposition}[Gaussian \(K\)-block Bessel identity]
\label{prop:exact_debiasing}
Under Eq.~\eqref{eq:gaussian_block_model}, for every \(\beta\,\ge0\),
\begin{equation}
    \E\{F_\beta(S_K^2,K)\}
    =
    \exp\left(-\frac{\beta^2\sigma^2}{2K}\right),
    \label{eq:gaussian_block_unbiasedness}
\end{equation}
and hence
\begin{equation}
    \E\{\exp(-\beta\,\bar\ell)F_\beta(S_K^2,K)\}
    =
    \exp(-\beta\,\ell).
    \label{eq:signed_estimator}
\end{equation}
\end{proposition}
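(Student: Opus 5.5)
The plan is to use the hypergeometric series form of $F_\beta$ and integrate term by term against the $\chi^2_{K-1}$ law of $(K-1)S_K^2/\sigma^2$. Write $m=(K-1)/2=\nu+1$ and $W=(K-1)S_K^2/\sigma^2\sim\chi^2_{K-1}$, so that $z_\beta^2=\beta^2\sigma^2W/K$. The standard series $J_\nu(z)=\sum_{j\ge0}(-1)^j(z/2)^{\nu+2j}/\{j!\,\Gamma(\nu+j+1)\}$ gives
\[
F_\beta(S_K^2,K)=\sum_{j\ge0}\frac{\Gamma(m)}{\Gamma(m+j)\,j!}\left(-\frac{z_\beta^2}{4}\right)^{j}
=\sum_{j\ge0}\frac{\Gamma(m)}{\Gamma(m+j)\,j!}\left(-\frac{\beta^2\sigma^2 W}{4K}\right)^{j},
\]
which is the ${}_0F_1(m;-z^2/4)$ representation noted after Eq.~\eqref{eq:bessel_correction}. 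This expression is an entire function of $W$ and agrees with the continuous extension $F_\beta=1$ at $z_\beta=0$. The case $K=2$, where $\nu=-1/2$, is covered by the same series.

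Next I will use the chi-square moments $\E(W^j)=2^j\Gamma(m+j)/\Gamma(m)$. These cancel the Gamma ratio exactly, so each term has expectation $(-\beta^2\sigma^2/(2K))^j/j!$. Summing over $j$ gives $\exp\{-\beta^2\sigma^2/(2K)\}$, which is Eq.~\eqref{eq:gaussian_block_unbiasedness}.

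The main obstacle is justifying the exchange of sum and expectation. I will use Fubini--Tonelli with absolute values. Taking absolute values termwise, the same moment computation gives
\[
\sum_j \frac{\Gamma(m)}{\Gamma(m+j)j!}\,\E(W^j)\left(\frac{\beta^2\sigma^2}{4K}\right)^j=\exp\{\beta^2\sigma^2/(2K)\}<\infty,
\]
so dominated convergence applies. As an alternative I could integrate directly against the chi-square density, using the Laplace-type integral $\int_0^\infty t^{\mu+1}e^{-p^2t^2}J_\mu(at)\,dt$ after the substitution $t=\sqrt W$. The series route is cleaner, however.

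Finally, Eq.~\eqref{eq:signed_estimator} follows from the independence of $\bar\ell$ and $S_K^2$ under Eq.~\eqref{eq:gaussian_block_model} (Cochran/Basu). The expectation factorises as $\E\{e^{-\beta\bar\ell}\}\,\E\{F_\beta\}$. Here $F_\beta$ is bounded, since $|F_\beta|\le1$ for $\nu\ge-1/2$, so integrability is not an issue. Combining Eq.~\eqref{eq:gaussian_exponential_inflation} with the first part, the inflation factor $\exp\{\beta^2\sigma^2/(2K)\}$ cancels and leaves $\exp(-\beta\ell)$. The case $\beta=0$ is trivial.
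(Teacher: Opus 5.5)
Your proposal is correct and follows essentially the same argument as the paper. You expand $F_\beta$ as the ${}_0F_1$ power series in $W=(K-1)S_K^2/\sigma^2$, integrate termwise using $\E(W^j)=2^j\Gamma(m+j)/\Gamma(m)$ (justified because the absolute series has expectation $\exp\{\beta^2\sigma^2/(2K)\}$), and then combine the independence of $\bar\ell$ and $S_K^2$ with Eq.~\eqref{eq:gaussian_exponential_inflation}; your extra remark that $|F_\beta|\le1$ is also correct, since $F_\beta$ is the characteristic function of the uniform law on $\mathbb S^{K-2}$ as in the paper's appendix.
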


\begin{proof}
Set \(r=(K-1)/2=\nu+1\) and
\[
    Y=\frac{(K-1)S_K^2}{\sigma^2}\sim\chi^2_{2r}.
\]
Using the power series for \(J_{r-1}\) and
\(z^2=\beta^2\sigma^2Y/K\),
\begin{equation}
    F_\beta(S_K^2,K)
    =
    \Gamma(r)
    \sum_{j=0}^{\infty}
    \frac{(-1)^j}{j!\,\Gamma(r+j)}
    \left(\frac{\beta^2\sigma^2Y}{4K}\right)^j.
    \label{eq:bessel_series_expansion}
\end{equation}
Because
\(
\E(Y^j)=2^j\Gamma(r+j)/\Gamma(r)
\), the expectation of the corresponding absolute series is
\(\exp\{\beta^2\sigma^2/(2K)\}\).  Termwise expectation is therefore valid,
and
\[
    \E\{F_\beta(S_K^2,K)\}
    =
    \sum_{j=0}^{\infty}
    \frac1{j!}
    \left(-\frac{\beta^2\sigma^2}{2K}\right)^j
    =
    \exp\left(-\frac{\beta^2\sigma^2}{2K}\right).
\]
Equation~\eqref{eq:signed_estimator} follows from the independence of
\(\bar\ell\) and \(S_K^2\), together with
Eq.~\eqref{eq:gaussian_exponential_inflation}.
\end{proof}

The identity separates the roles of the two block summaries: \(\bar\ell\)
enters the Gibbs weight, while \(S_K^2\) corrects the Jensen bias.

\subsubsection{Sign of the correction}

SCBD uses \(F_\beta(S_K^2,K)\) as a factor in a signed random weight.
For \(z_\beta>0\), every factor in Eq.~\eqref{eq:bessel_correction} other than
\(J_\nu(z_\beta)\) is positive.  If \(j_{\nu,m}\) denotes the \(m\)th positive
zero of \(J_\nu\), then
\[
    F_\beta(S_K^2,K)>0
    \quad\text{for}\quad
    0\le z_\beta<j_{\nu,1},
\]
and its sign changes as \(z_\beta\) crosses successive zeros.  At the origin,
\begin{equation}\label{eq:F-expansion-small-z}
    F_\beta(S_K^2,K)
    =
    1-\frac{z_\beta^2}{2(K-1)}+O(z_\beta^4),
\end{equation}
so the correction is close to one for small \(z_\beta\).  Near a positive
Bessel zero, \(|F_\beta|\) is close to zero; beyond that zero the
correction may be negative. In the next section we give the standard importance-sampling correction, which is exact. However, MCMC efficiency goes down with increasing negative-sign probability. We discuss this in Appendix~\ref{app:negative_sign_probability}. 
If negative or
near-zero factors occur too frequently in the MCMC, \(B\) and/or \(K\) can be increased and the run restarted. We give a screening rule for $K$ and $B$ in Section~\ref{sec:block_configuration_screen}.

\subsection{SCBD pseudo-marginal sampler}
\label{sec:SCBD_sampler}

The Sign-Corrected Bessel Debiasing (SCBD) algorithm is a variant of P-MCMC given in Algorithm~\ref{alg:SCBD}. We now consider its target distribution. Let \(p_\theta(d\xi)\) denote the joint law of
\(\xi=(\xi_1,\ldots,\xi_K)\).  For \(M=KB\), define the signed random weight
\begin{equation}
    \widehat Z_M(\theta,\xi)
    =
    \exp\{-\beta\,\bar\ell_M(\theta,\xi)\}
    F_{\beta}\!\left(S_{K,B}^2(\theta,\xi),K\right).
    \label{eq:SCBD_weight_estimator}
\end{equation}
Under the Gaussian block model, Proposition~\ref{prop:exact_debiasing} gives
\begin{equation}
    \E_\xi\{\widehat Z_M(\theta,\xi)\}
    =
    \exp\{-\beta\,\ell(\theta)\}.
    \label{eq:gaussian_SCBD_weight_unbiasedness}
\end{equation}
Equation~\eqref{eq:SCBD_weight_estimator} leads to MC-SCBD when the blocks are
ordinary Monte Carlo and RQ-SCBD when they are independently randomized Sobol
blocks. 

The Gaussian model is used to establish
Eq.~\eqref{eq:gaussian_SCBD_weight_unbiasedness}; it is not assumed by the
sampler. When the blocks are Gaussian, the target distribution is still not $\pi_n(\theta)$. However, sign correction exactly corrects this. When the blocks are not Gaussian, the sign-corrected samples target an approximation to $\pi_n$. Because \(\widehat Z_M\) may be negative, the Metropolis--Hastings transition
uses its absolute value. If the average absolute weight is strictly positive and finite,
\begin{equation}
    0<
    C_M^{\rm abs}
    \equiv
    \int
    \E_\xi\{|\widehat Z_M(\theta,\xi)|\}
    \pi_0(\theta)\,d\theta
    <\infty,
    \label{eq:absolute_normalizing_constant}
\end{equation}
then the transition leaves invariant the extended distribution
\begin{equation}
    \widetilde\Pi_M(d\theta,d\xi)
    =
    \frac{
    |\widehat Z_M(\theta,\xi)|\pi_0(\theta)p_\theta(d\xi)
    }{C_M^{\rm abs}},
    \label{eq:absolute_extended_target}
\end{equation}
with \(\theta\)-marginal
\begin{equation}
    \widetilde\pi_M(\theta)
    =
    \frac{
    \E_\xi\{|\widehat Z_M(\theta,\xi)|\}\pi_0(\theta)
    }{C_M^{\rm abs}}.
    \label{eq:absolute_theta_marginal}
\end{equation}
Thus the unweighted \(\theta\)-draws follow \(\widetilde\pi_M\), which is not
in general the intended generalized posterior.  The recorded signs correct
posterior averages \citep{lyne2015russian}.

Let
\(
    s(\theta,\xi)=\operatorname{sign}\{\widehat Z_M(\theta,\xi)\}
\).
For any \(h\) satisfying
\[
    \int |h(\theta)|
    \E_\xi\{|\widehat Z_M(\theta,\xi)|\}
    \pi_0(\theta)\,d\theta<\infty,
\]
the identity \(s\cdot |\widehat Z_M|=\widehat Z_M\) gives
\begin{equation}
    \frac{
    \E_{\widetilde\Pi_M}\{s(\theta,\xi)h(\theta)\}
    }{
    \E_{\widetilde\Pi_M}\{s(\theta,\xi)\}
    }
    =
    \frac{
    \int h(\theta)\E_\xi\{\widehat Z_M(\theta,\xi)\}
          \pi_0(\theta)\,d\theta
    }{
    \int \E_\xi\{\widehat Z_M(\theta,\xi)\}
          \pi_0(\theta)\,d\theta
    },
    \label{eq:generic_signed_ratio_target}
\end{equation}
if the denominator
is non-zero.
Under Gaussian blocks as in Eq.~\eqref{eq:gaussian_block_model}, the right-hand side is
\(\E_{\pi_n}\{h(\theta)\}\).  For a general finite-block law it has a bias which depends on $M$.

\begin{algorithm}[htbp]
\caption{SCBD pseudo-marginal sampler}
\label{alg:SCBD}
\begin{algorithmic}[1]
\State \textbf{Input:} \(\beta\), \(K\), \(B\), prior \(\pi_0\), proposal
\(q\), initial state \(\theta_0\), and a block rule generating
\(\xi\sim p_\theta\)
\State Draw \(\xi_0\sim p_{\theta_0}\) and compute
\(\widehat Z_0=\widehat Z_M(\theta_0,\xi_0)\)
\State If \(\widehat Z_0=0\), redraw \(\xi_0\); if no non-zero value is found
after a prespecified number of attempts, stop and revise \((K,B)\)
\For{\(t=1\) to \(T\)}
    \State Propose \(\theta'\sim q(\cdot\mid\theta_{t-1})\) and draw
    \(\xi'\sim p_{\theta'}\), independently of the current auxiliary variable
    \State Compute \(\widehat Z'=\widehat Z_M(\theta',\xi')\)
    \If{\(\widehat Z'=0\)}
        \State Set \(A=0\)
    \Else
        \State Set
        \[
            A=\min\left\{1,
            \frac{|\widehat Z'|\pi_0(\theta')q(\theta_{t-1}\mid\theta')}
            {|\widehat Z_{t-1}|\pi_0(\theta_{t-1})q(\theta'\mid\theta_{t-1})}
            \right\}.
        \]
    \EndIf
    \State With probability \(A\), set
    \((\theta_t,\xi_t,\widehat Z_t)=(\theta',\xi',\widehat Z')\); otherwise
    retain \((\theta_{t-1},\xi_{t-1},\widehat Z_{t-1})\)
    \State Record \(s_t=\operatorname{sign}(\widehat Z_t)\)
\EndFor
\State \textbf{Output:} \(\{(\theta_t,s_t)\}_{t=1}^T\)
\end{algorithmic}
\end{algorithm}

The empirical version of
Eq.~\eqref{eq:generic_signed_ratio_target} is
\begin{equation}
    \widehat I_T(h)
    =
    \frac{\sum_{t=1}^T s_t h(\theta_t)}
    {\sum_{t=1}^T s_t},
    \label{eq:signed_ratio_estimator}
\end{equation}
provided the denominator is non-zero.  If the extended chain is Harris ergodic
with invariant distribution \(\widetilde\Pi_M\),
\(\E_{\widetilde\Pi_M}|h(\theta)|<\infty\), and
\(\E_{\widetilde\Pi_M}s(\theta,\xi)\ne0\), then \(\widehat I_T(h)\) converges almost surely to the
right-hand side of Eq.~\eqref{eq:generic_signed_ratio_target}.  

\section{Accuracy, Variance Budgets, and Sign Diagnostics}
\label{sec:theory}

Proposition~\ref{prop:exact_debiasing} establishes exactness under the Gaussian block model.  At a
finite MC or RQMC block size, three further questions arise.  The expectation
of the corrected weight may alter the sign-reweighted posterior, the variance
of the estimated loss determines the required per-proposal budget, and negative
weights may reduce precision through sign cancellation.  We address these
questions in this order and then describe how a fixed \((K,B)\) configuration is
selected and checked before the production run.

\subsection{Target induced by the corrected weight}
\label{sec:finite_block_signed_target}

Throughout this section, \(\pi_n\) denotes the normalized posterior in
Eq.~\eqref{eq:intro_gibbs_posterior}. Define the multiplicative bias for the estimator $\widehat Z_M(\theta,\xi)$ against its target $\exp\{-\beta\,\ell(\theta)\}$,
\begin{align}
    w_M(\theta)
    &\equiv
    \exp\{\beta\,\ell(\theta)\}\,{\E_\xi\{\widehat Z_M(\theta,\xi)\}}
    \nonumber\\
    &= 
    \E_\xi\left\{\exp\left(-\beta[\bar\ell_M(\theta,\xi)-\ell(\theta)]\right)
    F_{\beta}\!\left(S_{K,B}^2(\theta,\xi),K\right)\right\}
    \label{eq:wM_definition}
\end{align}
The subscript \(M\) is a shorthand standing for the fixed choice of \(K\), \(B\) and
block rule at the given \(n\) and \(\beta\).  Under the Gaussian block model,
Proposition~\ref{prop:exact_debiasing} gives \(w_M\equiv1\).  For a general block law, \(w_M\) is the remaining multiplicative bias in the mean
corrected weight, so it can be negative, and that
motivates the following assumption.

\begin{assumption}[Positive mean corrected weight]
\label{ass:finite_block_target}
For the fixed block configuration under consideration,
\(w_M(\theta)>0\) for \(\pi_n\)-almost every \(\theta\).
\end{assumption}

This condition always holds in the Gaussian block model.  For general
finite blocks it is an assumption on the mean corrected weight;
individual corrected weights may still be negative.  Together with the
absolute integrability condition in
Eq.~\eqref{eq:absolute_normalizing_constant}, it implies
\(0<\E_{\pi_n}w_M<\infty\).  Substituting
Eq.~\eqref{eq:wM_definition} into
Eq.~\eqref{eq:absolute_theta_marginal} shows sign reweighting targets
\begin{equation}
    \pi_{n,M}(\theta)
    \equiv
    \frac{\pi_n(\theta)w_M(\theta)}
    {\E_{\pi_n}w_M}.
    \label{eq:signed_finite_block_target}
\end{equation}
Define the normalized variation
\begin{equation}
    \bar w_M(\theta)
    \equiv
    \frac{w_M(\theta)}{\E_{\pi_n}w_M}
    =
    \frac{d\pi_{n,M}}{d\pi_n}(\theta).
    \label{eq:posterior_normalized_factor}
\end{equation}
Values above or below one identify regions to which \(\pi_{n,M}\) assigns more
or less mass than \(\pi_n\).  In particular, \(\bar w_M=1\) \(\pi_n\)-almost everywhere if and only if
the two posterior targets $\pi_{n,M}$ and $\pi_n$ coincide.

For comparison, omitting the Bessel factor gives the positive uncorrected
weight
\[
    \widehat Z_M^{(0)}(\theta,\xi)
    =
    \exp\{-\beta\,\bar\ell_M(\theta,\xi)\}.
\]
Its multiplicative bias is
\begin{equation}
\begin{aligned}
    u_M(\theta)
    &\equiv
    e^{\beta\,\ell(\theta)}
    \E_\xi\{\widehat Z_M^{(0)}(\theta,\xi)\}\\
    &=
    \E_\xi
    \exp\!\left[
      -\beta
      \{\bar\ell_M(\theta,\xi)-\ell(\theta)\}
    \right].
\end{aligned}
    \label{eq:uncorrected_target_factor}
\end{equation}
Whenever this expectation is finite,
\(u_M(\theta)\ge1\) by Jensen's inequality.  If \(u_M\) is
integrable under \(\pi_n\), the uncorrected finite-block
posterior and its normalized variation are
\begin{equation}
    \pi_{n,M}^{(0)}(\theta)
    =
    \frac{
      \pi_n(\theta)u_M(\theta)
    }{
      \E_{\pi_n}u_M
    },
    \qquad
    \bar u_M(\theta)
    =
    \frac{
      u_M(\theta)
    }{
      \E_{\pi_n}u_M
    }.
    \label{eq:uncorrected_finite_block_target}
\end{equation}

\subsection{Budget controlling convergence}
\label{sec:tv-rate-gain-debias}

Theory and experiments below show gains from debiasing. The efficiency of RQMC gives additional gains which are visible in our experiments. The variance of $\widehat\ell_B^{\;(k)}$ in MC blocks based on $B$ samples $\xi_{k}=(\xi_{k,1},\dots,\xi_{k,B})$ is $O(B^{-1})$. \citep{owen1997b} shows that RQMC with \(d=\dim(\xi_{k,b})\)-dimensional Sobol blocks can achieve
\begin{equation}\label{eq:RQ-var-bound}
    \operatorname{Var}\{\widehat\ell_B^{\;(k)}(\theta,\xi_{k})\}
    \le
    C B^{-\alpha}\{\log(2+B)\}^{d-1}.
\end{equation}
This form includes the scrambled-net variance bound with
\(\alpha=3\) for sufficiently smooth integrands
\citep{owen1997b}. This behaviour is a property of the target loss and its randomization; it is not
implied by the use of Sobol points alone
\citep{l2000variance,owen1997b,owen2003variance,loh2003asymptotic}. 

In this section we give convergence rates for uncorrected (P-MCMC) and corrected (SCBD) posteriors with both ordinary Monte-Carlo and RQ block errors, so four cases in all. 
Fix an integer $K\geq2$.  We write $\beta_n,\ell_n$, $u_{n,M}$, and $w_{n,M}$
to display their dependence on the target index $n$; the block quantities
below are evaluated at $B=B_n$; any additional $n$-dependence of the block-error law, as in Section~\ref{sec:crps_calibration},
is suppressed. For a block size $B$, the block-errors
\begin{equation*}
  \X_{k}(\theta)\equiv\widehat\ell^{\;(k)}_{n,B}(\theta,\xi_k)
  -\ell_n(\theta),
  \qquad k=1,\ldots,K,
\end{equation*}
are iid and centered given $\theta$ and $X=(X_{1},\dots,X_{K})$.  
Write
$A_{B}=K^{-1}\sum_{k=1}^K \X_{k}$ and $S^2_{K,B}=(K-1)^{-1}\sum_{k=1}^K\{\X_{k}-A_{B}\}^2$.
For MC-blocks, the within-block errors 
\begin{equation*}
\V_{k,b}(\theta)\equiv L(\theta,\xi_{k,b})-\ell_n(\theta),\ b=1,\dots,B
\end{equation*} 
in \eqref{eq:setup_block_estimators} are also iid and centred given $\theta$ and
$
  \X_{k}(\theta)
  =
  B^{-1}\sum_{b=1}^B \V_{k,b}(\theta).
$

An analyticity assumption will allow us to form and truncate expansions of the cumulant generating function (cgf) of the iid block errors,
\begin{align*}
  \psi_{B,\theta}(z)
  \equiv\log\E\exp\{zX_{1}(\theta)\}.
\end{align*}
This cgf enters because 
the $\X_{k}$ are iid with $\X_{k}=\widehat\ell^{\;(k)}_{n,B}-\ell$ so from \eqref{eq:uncorrected_target_factor} we have
\begin{equation}\label{eq:u-as-psi}
  \log u_{n,M}(\theta)=\log\E \exp\left(-\sum_{k=1}^K \frac{\beta_n \X_{k}}{K}\right)
  =K\psi_{B,\theta}\left(-\beta_n/K\right).
\end{equation}
We show in Appendix~\ref{app:cumulant-expansions} that $w_{n,M}$ is also given in terms of $\psi_{B,\theta}$.
A cumulant expansion of $\psi_{B,\theta}$ will therefore give the $B$-dependence of $u_{n,M}$ and $w_{n,M}$. Let
\begin{equation*}
  \sigma^2(\theta)
  =
  \operatorname{Var}\{\V_{k,1}(\theta)\},\quad v_{B}(\theta)
  =\Var\{X_{1}(\theta)\},
  \quad
  \chi_{j,B}(\theta)
  =\cum_j\{X_{1}(\theta)\},
\end{equation*}
for $j\geq2$, so $\chi_{2,B}=v_{B}$. 
We don't treat states where $v_B(\theta)=0$ separately as $\X_{k}(\theta)=0$ almost surely, cumulants vanish and all results hold trivially in that case. Denote by 
\begin{equation}
  \phi^{\mathrm{MC}}_{\theta}(t)
  \equiv
  \log\E
  \exp\left\{
    t{\V_{k,1}(\theta)}/{\sigma(\theta)}
  \right\}
  \label{eq:standardized-MC-cgf}
\end{equation}
the cgf for scaled within-block errors. For MC-blocks, independence within blocks gives
\begin{equation}
  \psi_{B,\theta}(z)=\log\E \exp\left(\sum_{b=1}^B \frac{z\V_{k,b}}{B}\right)
  =
  B\phi^{\mathrm{MC}}_{\theta}
  \left\{
    {z\sigma(\theta)}/{B}
  \right\},
  \label{eq:MC-block-cgf}
\end{equation}
and $v_{B}(\theta)={\sigma^2(\theta)}/{B}$.
  For RQMC blocks, if
\begin{equation}
  \phi^{\mathrm{RQ}}_{B,\theta}(t)
  \equiv
  \log\E_X
  \exp\left\{t{X_{1}(\theta)}/
       {\sqrt{v_{B}(\theta)}}\right\}
  \label{eq:standardized-RQ-cgf}
\end{equation}
is the cgf for standardized block errors then $t=z\,\sqrt{v_{B}(\theta)}$ in \eqref{eq:standardized-RQ-cgf} gives
\begin{equation}
  \psi_{B,\theta}(z)
  =
  \phi^{\mathrm{RQ}}_{B,\theta}
  \left\{
    z\,\sqrt{v_{B}(\theta)}
  \right\}.
  \label{eq:RQ-block-cgf}
\end{equation}
For the two block rules, define
\begin{equation}
  \bigl(q_{B}(\theta),\varphi_{B,\theta}\bigr)
  \equiv
  \begin{cases}
  \left(
    \sigma(\theta)/B,
    \phi^{\mathrm{MC}}_{\theta}
  \right),
  &\text{for MC-blocks},\\[2mm]
  \left(
    \sqrt{v_{B}(\theta)},
    \phi^{\mathrm{RQ}}_{B,\theta}
  \right),
  &\text{for RQ-blocks}.
  \end{cases}
  \label{eq:method-specific-q-varphi}
\end{equation}
Equations~\eqref{eq:MC-block-cgf} and
\eqref{eq:RQ-block-cgf} then give, in both cases,
\begin{equation}
  \psi_{B,\theta}(z)
  =
  \frac{v_{B}(\theta)}
       {q_{B}(\theta)^2}
  \varphi_{B,\theta}
  \{q_{B}(\theta)z\}.
  \label{eq:common-cgf-factorization}
\end{equation}
We seek series expansions for $u_{n,M}$ and $w_{n,M}$ valid at large $B$. For $u_{n,M}$ in \eqref{eq:u-as-psi}, we expand $\psi_{B,\theta}$ at $z=-\beta_n/K$ by expanding $\phi_\theta^{\rm MC}(t)$ at $t=-\beta_n\sigma/KB$ using \eqref{eq:MC-block-cgf}. Analyticity of $\phi_\theta^{\rm MC}$ will bound truncation-errors in terms of \(v_{B}=\sigma^2/B\) and \(q_B=\sigma/B\).
We use the $(q_B,\varphi_{B,\theta})$-notation of \eqref{eq:method-specific-q-varphi} to unify some of the work for $u_{n,M}$ and $w_{n,M}$. 

\begin{assumption}[Uniform analytic cgfs]
\label{ass:unified-analytic}
There are constants $\eta>0$ and $C_\varphi<\infty$ such
that the standardized-error cgf $\varphi_{B,\theta}$ in \eqref{eq:method-specific-q-varphi} extends
analytically to a neighborhood of the complex disc $\{t\in\mathbb C:|t|\leq\eta\}$ and is uniformly bounded 
there, so $\sup_{n,B}\sup_{\theta\in\mathcal O_n}
  \sup_{|t|\leq\eta}
  \left|
    \varphi_{B,\theta}(t)
  \right|
  \leq C_\varphi$ on sets
$\mathcal O_n$ with $\pi_n(\mathcal O_n)=1$.  
For MC-blocks, also suppose that
$\sup_{n,\theta\in\mathcal O_n}\sigma(\theta)<\infty$.
\end{assumption}

In both cases, $\varphi_{B,\theta}$ is the cgf of a centred,
unit-variance variable so  
  $\varphi_{B,\theta}(0)=0$,
  $\varphi'_{B,\theta}(0)=0$, (centred)
  and
  $\varphi''_{B,\theta}(0)=1$ (unit variance).
Differentiating $\psi_{B,\theta}(z)$ in \eqref{eq:common-cgf-factorization} at zero to get the cumulants gives
\begin{equation}\label{eq:cumulant-rescaling}
  \chi_{j,B}(\theta)
  =
  v_B(\theta)q_B(\theta)^{j-2}
  \varphi_{B,\theta}^{(j)}(0),
\end{equation}
so $v_{B}$ controls the Gaussian term,
$v_{B}q_{B}$ controls the first possible non-Gaussian term and \(\varphi_{B,\theta}^{(j)}(0),\ j\ge 3\) is bounded by Assumption~\ref{ass:unified-analytic}. 
For the third cumulant, \eqref{eq:cumulant-rescaling} gives
\begin{align*}
    \chi_{3,B}(\theta)&=\sigma^3\operatorname{cum}_3\{\V_{k,1}/\sigma\}/B^2={\kappa_3(\theta)}/{B^2}\qquad\mbox{for MC-blocks},\\
\intertext{where
$\kappa_3(\theta)=\operatorname{cum}_3\{\V_{k,1}(\theta)\}
$ does not depend on $n$ or $B$, and}
  \chi_{3,B}(\theta)&=v_B(\theta)^{3/2}
    \left\{
      \phi^{\mathrm{RQ}}_{B,\theta}
    \right\}^{(3)}(0)\qquad\mbox{for RQ-blocks}.
\end{align*}

Analyticity in Assumption~\ref{ass:unified-analytic} supports a Statulevičius-type cumulant bound in Lemma~\ref{lem:cumulant-envelope}. We are assuming that the higher cumulants ${\rm cum}_j(X_k/\sqrt{v_B}),\ j\ge 3$ of the scaled RQ-block error are $O(1)$, not $o(1)$. This is important because it means we are not asking for a CLT for $X_k$ in RQ-blocks. Our assumption does require the errors to satisfy a exponential tail bound. Assumption~\ref{ass:unified-analytic} holds uniformly in $n$ and over $\cO_n$ so we can treat convergence with  $M$ at fixed $n$ and also how $M$ should depend on $n$ for convergence with $n$ increasing.

\begin{lemma}[Cumulant expansions]
\label{lem:unified-weight-expansion}
Fix $K\geq2$ and suppose Assumption~\ref{ass:unified-analytic} holds.  Write
\begin{equation*}
  \beta=\beta_n,
  \qquad v=v_{B}(\theta),
  \qquad q=q_{B}(\theta),
  \qquad \chi_3=\chi_{3,B}(\theta).
\end{equation*}
The following expansions hold uniformly in $\theta\in\mathcal O_n$.
\begin{enumerate}[label=(\alph*)]
\item If $\beta q=o(1)$ and $\beta^2v=o(1)$ uniformly, then
\begin{equation}
  u_{n,M}(\theta)
  =
  1+\frac{\beta^2v}{2K}
  +R^{(u)}_{n,B}(\theta),
  \qquad
  \abs{R^{(u)}_{n,B}(\theta)}
  \leq
  C\left\{
    \beta^3vq+(\beta^2v)^2
  \right\}.
  \label{eq:u-variance-expansion}
\end{equation}

\item If $\beta q=o(1)$ and $\beta^3vq=o(1)$ uniformly, then
\begin{equation*}
  w_{n,M}(\theta)
  =1+\frac{\beta^3\chi_3}{3K^2}
  +R^{(w)}_{n,B}(\theta),
  \qquad
  \abs{R^{(w)}_{n,B}(\theta)}
  \leq C\left\{\beta^4vq^2+(\beta^3vq)^2\right\}.
\end{equation*}
\end{enumerate}
The constant $C$ depends only on $K$, $\eta$, and $C_\varphi$.
\end{lemma}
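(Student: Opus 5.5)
The plan is to work throughout with the factorisation \eqref{eq:common-cgf-factorization}, $\psi_{B,\theta}(z)=(v/q^2)\varphi_{B,\theta}(qz)$, together with a Cauchy estimate on the disc of Assumption~\ref{ass:unified-analytic}. Since $\varphi(0)=\varphi'(0)=0$ and $\varphi''(0)=1$, and $|\varphi|\le C_\varphi$ on $|t|\le\eta$, for $|t|\le\eta/2$ we have
\[
\varphi(t)=\tfrac{t^2}{2}+\tfrac{\varphi'''(0)}{6}t^3+R_4(t),\qquad |R_4(t)|\le C|t|^4,\qquad |\varphi'''(0)|\le C .
\]
The same estimate gives $|\varphi(t)-t^2/2|\le C|t|^3$. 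All constants depend only on $\eta$ and $C_\varphi$, which gives uniformity over $\theta\in\mathcal O_n$. Because $\beta q=o(1)$, every argument $qz$ used below lies in $|t|\le\eta/2$ for large $n$.

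For part (a), \eqref{eq:u-as-psi} gives $\log u_{n,M}=K(v/q^2)\varphi(-q\beta/K)$. The cubic bound then gives
\[
\log u_{n,M}=\frac{\beta^2v}{2K}+O\bigl(\beta^3vq\bigr).
\]
Both terms are $o(1)$ by hypothesis, so expanding $e^x=1+x+O(x^2)$ yields \eqref{eq:u-variance-expansion}. The squared cubic term $(\beta^3vq)^2\le \beta^3 vq$ is absorbed into the remainder, and the cross term $\beta^2v\cdot\beta^3vq$ is bounded by the square terms.

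For part (b), the key step is to write $w_{n,M}$ itself as an average of exponentials of $\psi$. I would use the representation $\Gamma(\nu+1)(z/2)^{-\nu}J_\nu(z)=\E\{\cos(zU_1)\}$, where $U$ is uniform on the unit sphere in $\mathbb R^{K-1}$; this holds because $\nu=(K-1)/2-1$. Let $H$ be a $(K-1)\times K$ Helmert matrix, with orthonormal rows orthogonal to $\mathbf 1$. Then $(K-1)S_{K,B}^2=\|HX\|^2$, so $z_\beta=\beta K^{-1/2}\|HX\|$. By rotation invariance of $U$,
\[
F_\beta(S^2_{K,B},K)=\E_U\exp\{i\beta K^{-1/2}U^\top HX\}.
\]
Since $|e^{-\beta A_B}e^{i(\cdot)}|=e^{-\beta A_B}$ and $u_{n,M}<\infty$, Fubini applies. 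Independence of the $X_k$ then gives
\[
w_{n,M}=\E_U\exp\Bigl\{\sum_{k=1}^K\psi_{B,\theta}(z_k)\Bigr\},\qquad z_k=-\frac{\beta}{K}+i\beta c_k,\quad c=K^{-1/2}H^\top U .
\]
Here $\sum_k c_k=0$ and $\sum_k c_k^2=1/K$. This is the step I expect to be the main obstacle. It uses the fact that the complex moment generating function $\E e^{zX_1}$ equals $\exp\{\psi(z)\}$ for complex $z$ with $|qz|\le\eta$, which follows from the assumed analytic extension of $\varphi$ by uniqueness of analytic continuation. It also requires the sign conventions of the Bessel representation to be checked carefully. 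This is presumably the content of Appendix~\ref{app:cumulant-expansions}.

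Given the representation, we have $|z_k|\le C\beta$, and summing the Taylor expansion of $\varphi$ gives
\[
\sum_k\psi(z_k)=\frac v2\sum_k z_k^2+\frac{vq\varphi'''(0)}{6}\sum_k z_k^3+O(\beta^4vq^2).
\]
Using $\sum_k c_k=0$ and $\sum_k c_k^2=1/K$, we get $\sum_k z_k^2=\beta^2/K-\beta^2/K=0$. So the Gaussian term cancels exactly, mirroring Proposition~\ref{prop:exact_debiasing}. Similarly $\sum_k z_k^3=2\beta^3/K^2-i\beta^3\sum_k c_k^3$. With $\chi_3=vq\varphi'''(0)$ from \eqref{eq:cumulant-rescaling}, the exponent is therefore
\[
E=\frac{\beta^3\chi_3}{3K^2}-\frac{i\beta^3\chi_3}{6}\sum_k c_k^3+O(\beta^4vq^2),
\]
which is $o(1)$ uniformly in $U$. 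Expanding $e^E=1+E+O(|E|^2)$ and taking $\E_U$, the imaginary cubic term vanishes because $c$ is odd in $U$ and $U\stackrel{d}{=}-U$. What remains is $w_{n,M}=1+\beta^3\chi_3/(3K^2)+R^{(w)}_{n,B}$ with $|R^{(w)}_{n,B}|\le C\{\beta^4vq^2+(\beta^3vq)^2\}$, where the $(\beta^4vq^2)^2$ and cross terms are absorbed since both quantities are $o(1)$.
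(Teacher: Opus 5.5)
Your proposal is correct and follows the paper's proof essentially step for step. Part (a) uses the cubic cgf bound at $z=-\beta/K$ followed by $e^x=1+x+O(x^2)$. Part (b) uses the spherical (Schoenberg) representation of $F_\beta$ on the unit sphere of the sum-zero hyperplane; your Helmert matrix $H$ plays the role of the paper's orthonormal basis $Q^\top$, and your $z_k$ is the paper's $\beta c_k(u)$ with $u=H^\top U\in\mathbb S_H$. It then proceeds by Fubini and independence, the exact cancellation $\sum_k z_k^2=0$, and symmetry to remove the imaginary cubic term. Your remark that $\E e^{zX_1}=\exp\{\psi(z)\}$ for complex $z$ requires analytic continuation from the real axis addresses a step the paper uses without comment, so it is a welcome addition rather than a departure.
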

See Appendix~\ref{app:cumulant-expansions} for the proof. We outlined a proof of part 1 in the paragraph above Assumption~\ref{ass:unified-analytic}. In order to prove part 2 we use a Bessel function identity to express $w_{n,M}$ in terms of $\psi_{B,\theta}$ and then proceed as in part 1. 
Corollary~\ref{cor:unified-TV} in Appendix~\ref{app:convergence-TV-bounds-preamble} converts the expansions in Lemma~\ref{lem:unified-weight-expansion} into total variation bounds on the converging posteriors (and shows Assumption~\ref{ass:unified-analytic} plus the smallness conditions in Corollary~\ref{cor:unified-TV} imply Assumption~\ref{ass:finite_block_target} for all
sufficiently large $n$). 
Substituting the MC and RQMC values of $v_B$, $q_B$, and
$\chi_{3,B}$ above gives the following method-specific rates.

\begin{corollary}[MC and RQMC target rates]
\label{cor:method-rates}
Let $B=B_n\to\infty$, $M_n=KB_n$, with $K$ fixed.
\begin{enumerate}[label=(\alph*)]
\item \textbf{MC-SCBD.}
For MC-blocks satisfying Assumption~\ref{ass:unified-analytic}, let
$\kappa_{2}(\theta)=\sigma^2(\theta)$ and $
  \kappa_3(\theta)
  =
  \operatorname{cum}_3\{\V_{k,1}(\theta)\}$ be cumulants of $\V_{k,b}$. If
$
  {\beta_n^2}/{M_n}\longrightarrow0,
$
then
\begin{equation}
  \left\|
    \pi^{(0)}_{n,M_n}-\pi_n
  \right\|_{\mathrm{TV}}
  =
  \frac{\beta_n^2}{4M_n}
  \E_{\pi_n}
  \left|
    \kappa_{2}
    -
    \E_{\pi_n}\kappa_{2}
  \right|
  +
  o\left(
    \frac{\beta_n^2}{M_n}
  \right).\label{eq:u-post-expansion}
\end{equation}
If
$
  {\beta_n^3}/{M_n^2}\longrightarrow0,
$
then
\begin{equation*}
  \left\|
    \pi_{n,M_n}-\pi_n
  \right\|_{\mathrm{TV}}
  =
  \frac{\beta_n^3}{6M_n^2}
  \E_{\pi_n}
  \left|
    \kappa_{3}
    -
    \E_{\pi_n}\kappa_{3}
  \right|
  +
  o\left(
    \frac{\beta_n^3}{M_n^2}
  \right).
\end{equation*}
When $\beta_n\asymp n$, $M_n/n^2\!\to\! \infty$ is sufficient for convergence of the uncorrected posterior and $M_n/n^{3/2}\to\infty$ in the debiased case.\\

\item \textbf{RQ-SCBD.}
For RQMC blocks satisfying Assumption~\ref{ass:unified-analytic}, suppose
\begin{equation}\label{eq:RQ-variance-rate}
  \|v_{B_n}\|_{\infty,n}
  \leq
  C B_n^{-\alpha}
  \{\log(2+B_n)\}^{d-1}
\end{equation}
for constants $C<\infty$, $\alpha>0$, and fixed
randomized-input dimension $d$.  Since $K$ is fixed and
$M_n=KB_n$, define
\[
  \delta^{\mathrm{RQ}}_{n,M_n}
  \equiv
  \beta_n^2M_n^{-\alpha}
  \{\log(2+M_n)\}^{d-1}.
\]
Then, if
$\delta^{\mathrm{RQ}}_{n,M_n}\to0$,
\begin{align}
  \left\|
    \pi^{(0)}_{n,M_n}-\pi_n
  \right\|_{\mathrm{TV}}
  &=
  \frac{\beta_n^2}{4K}
  \E_{\pi_n}
  \left|
    v_{B_n}-\E_{\pi_n}v_{B_n}
  \right|
  +
  O\left\{
    \left(\delta^{\mathrm{RQ}}_{n,M_n}\right)^{3/2}
  \right\}
  \notag\\
  &=
  O\left(\delta^{\mathrm{RQ}}_{n,M_n}\right),
  \label{eq:RQ-u-TV}
  \\
  \left\|
    \pi_{n,M_n}-\pi_n
  \right\|_{\mathrm{TV}}
  &=
  \frac{\beta_n^3}{6K^2}
  \E_{\pi_n}
  \left|
    \chi_{3,B_n}-\E_{\pi_n}\chi_{3,B_n}
  \right|
  +
  O\left\{
    \left(\delta^{\mathrm{RQ}}_{n,M_n}\right)^2
  \right\}
  \notag\\
  &=
  O\left\{
    \left(\delta^{\mathrm{RQ}}_{n,M_n}\right)^{3/2}
  \right\}.
  \label{eq:RQ-w-TV}
\end{align}
When $\beta_n\asymp n$, the sufficient condition
$\delta^{\mathrm{RQ}}_{n,M_n}\to0$ is equivalently
\[
  \frac{M_n}
  {n^{2/\alpha}
   \{\log(2+M_n)\}^{(d-1)/\alpha}}
  \longrightarrow\infty.
\]
\end{enumerate}
\end{corollary}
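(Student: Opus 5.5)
The plan is to use Lemma~\ref{lem:unified-weight-expansion} together with a generic conversion from multiplicative expansions to total variation. The conversion is what Corollary~\ref{cor:unified-TV} supplies; below I sketch its content so the constants are visible.

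\textbf{The conversion.} Write a weight as $u=1+a+R$, where $a(\theta)$ is the leading term and $\sup_{\mathcal O_n}|a|+\sup_{\mathcal O_n}|R|\to0$. Since $\pi_{n,M}$ has density $\bar u=u/\E_{\pi_n}u$ with respect to $\pi_n$,
\[
  \|\pi^{(0)}_{n,M}-\pi_n\|_{\TV}=\tfrac12\E_{\pi_n}|\bar u-1|.
\]
We have
\[
  \bar u-1=\frac{a-\E_{\pi_n}a+R-\E_{\pi_n}R}{1+\E_{\pi_n}a+\E_{\pi_n}R}.
\]
Hence
\[
  \tfrac12\E_{\pi_n}|\bar u-1|=\tfrac12\E_{\pi_n}|a-\E_{\pi_n}a|\,\{1+O(\|a\|_\infty+\|R\|_\infty)\}+O(\|R\|_\infty).
\]
Both error terms are of order $\|a\|_\infty^2+\|R\|_\infty$. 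Positivity of the mean weight, i.e.\ Assumption~\ref{ass:finite_block_target}, holds for large $n$ because $w\to1$ uniformly. The same computation applies verbatim to $w_{n,M}$.

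\textbf{MC case.} Substitute $v=\sigma^2/B$, $q=\sigma/B$ and $\chi_3=\kappa_3/B^2$, and use $\sup\sigma<\infty$ from Assumption~\ref{ass:unified-analytic}.

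For the uncorrected weight, $a=\beta^2\sigma^2/(2KB)=\beta^2\kappa_2/(2M)$. This gives the leading term $\frac{\beta^2}{4M}\E|\kappa_2-\E\kappa_2|$. The remainder is $O(\beta^3/B^3+\beta^4/B^2)$. Relative to $\beta^2/M$, both pieces are $O(\beta/B^2+\beta^2/B)=o(1)$ when $\beta^2/M\to0$, since then $\beta/B\to0$. The smallness conditions of Lemma~\ref{lem:unified-weight-expansion}(a) also follow, because $\beta q\asymp\beta/B$ and $\beta^2v\asymp\beta^2/B$.

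For the corrected weight, $a=\beta^3\kappa_3/(3K^2B^2)=\beta^3\kappa_3/(3M^2)$, giving the stated constant $1/6$. The remainder is $O(\beta^4/B^4+\beta^6/B^6)$. Its ratio to $\beta^3/B^2$ is $O(\beta/B^2+\beta^3/B^4)$. Under $\beta^3/B^2\to0$ we have $\beta\le B^{2/3}$, so the ratio is $o(1)$. The squared leading term $a^2$ is likewise $o(\beta^3/B^2)$. The conditions $\beta q\to0$ and $\beta^3vq\to0$ are immediate. Finally, for $\beta_n\asymp n$ the two conditions read $M\gg n^2$ and $M\gg n^{3/2}$.

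\textbf{RQ case.} Put $q=\sqrt v$ and $\delta=\beta^2\|v\|_\infty$. By \eqref{eq:RQ-variance-rate}, $\delta\lesssim\delta^{\mathrm{RQ}}_{n,M_n}$, absorbing $K^\alpha$ and the change from $\log(2+B)$ to $\log(2+M)$ into constants. Then $\beta q\le\delta^{1/2}$, $\beta^3vq\le\delta^{3/2}$ and $\beta^4vq^2\le\delta^2$.

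For the uncorrected weight, $a=\beta^2v_B/(2K)$ and $R=O(\delta^{3/2}+\delta^2)$. This gives \eqref{eq:RQ-u-TV} with remainder $O(\delta^{3/2})$, since $a^2=O(\delta^2)$. The leading term is itself $O(\delta)$.

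For the corrected weight, $a=\beta^3\chi_{3,B}/(3K^2)$. By \eqref{eq:cumulant-rescaling} and Cauchy bounds on the disc, $|\chi_3|\le Cv^{3/2}$, so $a=O(\delta^{3/2})$. Also $R=O(\delta^2+\delta^3)$ and $a^2=O(\delta^3)$. This gives \eqref{eq:RQ-w-TV}.

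The equivalence of $\delta\to0$ with the stated budget condition is the algebraic step of taking $\alpha$-th roots.

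\textbf{Main obstacle.} The delicate step is the TV conversion. It needs \emph{uniform}, sup-norm control of $R$ over $\mathcal O_n$ rather than pointwise control, so that the normalizing constants $\E_{\pi_n}u$ and $\E_{\pi_n}w$ are $1+o(1)$. It also needs an explicit check that each remainder is genuinely smaller than the leading variation term. This is handled by the uniformity in Lemma~\ref{lem:unified-weight-expansion} and by stating the $o(\cdot)$ and $O(\cdot)$ terms in absolute rather than relative form.
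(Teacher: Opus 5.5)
Your proposal is correct and follows the paper's route: the paper's proof is exactly this substitution of the MC values $(v,q,\chi_3)=(\kappa_2/B,\sigma/B,\kappa_3/B^2)$ and the RQ relation $q=v^{1/2}$ into Lemma~\ref{lem:unified-weight-expansion}, converted to total variation by the normalization step of Corollary~\ref{cor:unified-TV}, which is the conversion you sketch. There is one arithmetic slip in your MC case. Since $vq=\sigma^3/B^2$ and $vq^2=\sigma^4/B^3$, the remainders are $O(\beta^3/B^2+\beta^4/B^2)$ and $O(\beta^4/B^3+\beta^6/B^4)$, which are the paper's $\beta_n^3/M_n^2+(\beta_n^2/M_n)^2$ and $\beta_n^4/M_n^3+(\beta_n^3/M_n^2)^2$, not the bounds you state. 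Their ratios to the leading orders are $\beta/B+\beta^2/B$ and $\beta/B+\beta^3/B^2$, which still vanish under your hypotheses, so your conclusions are unaffected.
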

See Appendix~\ref{app:convergence-TV-bounds-main} for the proof. The comparison is summarized below.

\begin{center}
\small
\setlength{\tabcolsep}{4pt}
\renewcommand{\arraystretch}{1.35}
\begin{tabular}{@{}lcccccc@{}}
\toprule
&
$v_{B}$
&
$q_{B}$
&
\shortstack{P-MCMC\\TV rate}
&
\shortstack{SCBD\\TV rate}
&
\shortstack{P-MCMC:\\sufficiency}
&
\shortstack{SCBD:\\sufficiency}
\\
\midrule
MC-blocks
&
$B^{-1}$
&
$B^{-1}$
&
$\beta_n^2/M_n$
&
$\beta_n^3/M_n^2$
&
$M_n/\beta_n^2\to\infty$
&
$M_n/\beta_n^{3/2}\to\infty$
\\[1ex]
RQ-blocks
&
$B^{-\alpha}(\log B)^{d-1}$
&
$v_{B}^{1/2}$
&
$\delta^{\mathrm{RQ}}_{n,M_n}$
&
$(\delta^{\mathrm{RQ}}_{n,M_n})^{3/2}$
&
$\delta^{\mathrm{RQ}}_{n,M_n}\to 0$
&
$\delta^{\mathrm{RQ}}_{n,M_n}\to0$
\\
\bottomrule
\end{tabular}
\end{center}
 The Bessel correction improves the target-convergence rate
in both settings.  For MC-blocks it also weakens the sufficient
budget condition for convergence from
$M_n/\beta_n^2\to\infty$ to
$M_n/\beta_n^{3/2}\to\infty$.
Under the RQMC block assumption in \eqref{eq:RQ-variance-rate}, the TV errors are of
orders $\delta^{\mathrm{RQ}}_{n,M_n}$ and
$(\delta^{\mathrm{RQ}}_{n,M_n})^{3/2}$ for uncorrected and
corrected targets. However, 
$\delta^{\mathrm{RQ}}_{n,M_n}\to 0$ is sufficient in both cases as $\delta^{3/2}\to 0$ if and only if $\delta\to 0$.

These are posterior convergence results, not mixing guarantees.  In MC-SCBD the corrected target can converge in the window $n^{3/2}\ll M_n\ll n^2$, although the leading loss-induced log-weight variance, of order $n^2/M_n$, diverges.  
From the large-$n$ perspective of \cite{frazier2025exact}, the results above are generally negative. The MC-SCBD sufficient condition
$M_n/n^{3/2}\to\infty$ is stronger than the order-$n$
criterion derived by those authors, because we have a different approximation-criterion.  However, for RQMC,
when $\alpha>2$, the sequence $M_n$ can be sublinear,
up to logarithmic factors. Also, from the perspective of someone wanting to use a simple MCMC algorithm to get a good posterior approximation for a given data set with fixed or bounded budget, the MC bounds reduce to
orders $M^{-1}$ without correction and $M^{-2}$ with correction (at fixed $n$) and if RQMC is available, greater gains are possible.
The finite-budget examples below are consistent with these
qualitative gains.

\subsubsection{Related work on convergence and P-MCMC}\label{sec:related-work-bounds}
Previous authors give results for convergence of the uncorrected target in \eqref{eq:u-post-expansion}.
Theorem~1 in 
\cite[first arXiv version]{frazier2024impact} gives fixed-$n$ convergence in weighted $L^1$ distances and gives the fixed-$n$ order $1/M$ in
\eqref{eq:u-post-expansion} as a special case. Marginal posteriors
formed from expected random likelihoods have also been studied in
Hellinger distance by \cite{lie2018random}. Bounds given there allow for much more general random approximations, but don't make the order-$M_n^{-1}$ rate visible. 
\citet{sprungk2020local} proves local Lipschitz
dependence of posterior measures on perturbations of the Gibbs-potential
in total variation, Hellinger, Wasserstein, and Kullback--Leibler
distance. This approach could also be applied in our setting to give
a fixed-$n$ bound of order $1/M$.
Our result for $\pi^{(0)}_{n,M_n}$ in \eqref{eq:u-post-expansion} under stronger assumptions of unbiased loss estimation and analyticity is sharper than the work cited above as it identifies the leading posterior-centred variance
term and its coefficient and tracks the joint dependence on
$\beta_n$ and $M_n$. The other results in Corollary~\ref{cor:method-rates} treat new debiased and RQ-block settings.

\citet{quiroz2019speeding} take inspiration from \cite{ceperley1999penalty} and are a closer precedent. They combine a control-variate difference estimator of an
additive log likelihood with a positive Gaussian variance correction and
show that the resulting posterior has fixed-$n$ error of order $M^{-2}$.
When Bernstein von Mises concentration applies alongside further regularity conditions and there is an accurate full-data MLE or mode available as an expansion point then their error is $\mathcal O(1/(nM^2))$. Our Bessel correction addresses a different issue: it removes the unknown-variance Gaussian inflation
exactly and extends to general MC or
RQMC loss estimators, at the cost of signed weights. The results derived in \cite{quiroz2019speeding} are specialised to the data-subsampling regime; extending it to our loss-estimation setting seems possible, but faces practical obstacles around constructing control variates. However, although the approximate Gaussian debiasing in \cite{quiroz2019speeding} and the exact Gaussian debiasing here are alternatives, the variance-reduction approaches are
completely complementary so the gains would compound if in future work these ideas were combined.  Control variates can be used within each SCBD block, while
RQMC can further reduce the variance of the control-variate residual; correlated
pseudo-marginal updates may improve mixing without changing the
finite-block target.

Exact signed estimators based on Poisson randomisation provide a
second close comparison. \citet{quiroz21blockpoisson} construct an
unbiased block-Poisson likelihood estimator from unbiased
log-likelihood estimates.  The estimator is a product of Poisson
factors and may be negative, so their algorithm, like SCBD, samples
from an absolute-weight extended target and uses a sign ratio for
posterior expectations.  In their case the signed estimator is
unbiased for the intended likelihood for general finite-variance
base estimators, so sign reweighting is exact in the
simulation-consistency sense.  By contrast, the SCBD estimator is
exact for i.i.d.\ Gaussian blocks and otherwise targets the
finite-block posterior analysed above.

The blocks also have different roles.  Our $K$ independent loss
blocks provide the mean and sample variance entering the Bessel
correction.  The block-Poisson product factors are principally used
to correlate successive likelihood estimates by updating only part
of the auxiliary state; the product blocking does not itself reduce
the estimator variance.  Block-Poisson has random computational cost
and requires tuning of its Poisson factors, centring or soft lower
bound, and subsample size.  SCBD instead has the fixed budget
$M=KB$ and requires no lower-bound parameter, but pays for this
simplicity through finite-block target error outside the Gaussian
case.

\citet{yang25} apply the block-Poisson estimator to doubly
intractable models and give a finite-chain result for the sign
denominator.  Their theorem on sign-stability, summarised in Section~\ref{sec:sign_stability}, is
relevant to the stability of the SCBD sign ratio once $R_M$ in \eqref{eq:global_sign_def} is
bounded away from zero and the absolute-weight chain mixes.  It
does not establish that condition from a finite pilot, nor does it
control the discrepancy between $\pi_{n,M}$ and $\pi_n$; these are
the separate roles of our sign screen and finite-block target audit. Independently randomised RQMC
estimates could in principle be used as the unbiased base estimates
inside a block-Poisson construction. However, the Bessel and Poisson factors
are alternative corrections of the exponential.

\subsection{Budget controlling log-weight variance}
\label{sec:rqmc_sublinear}

In this and the next two subsections we consider how $K$ and $B$ are chosen to promote mixing and approximation accuracy. Although our screening rules are supported by the theory in this section, they are ultimately heuristics. We begin with the budget required to control the variance of the loss-estimate. For a non-zero corrected weight,
\[
    \log\left|\widehat Z_M(\theta,\xi)\right|
    =
    -\beta_n\bar\ell_M(\theta,\xi)
    +
    \log\left|
        F_{\beta_n}\!\left(
            S_{K,B}^2(\theta,\xi),K
        \right)
    \right|.
\]
For budget selection we control the variance contributed by the first term,
\begin{equation}
    \tau_M^2(\theta)
    \equiv
    \beta_n^2
    \operatorname{Var}_{\xi}
    \left\{\bar\ell_M(\theta,\xi)\right\}.
    \label{eq:stability_budget_condition}
\end{equation}
The budget calculation focuses on this loss-induced term. However, it gives some control over the second term as well.
For
\(z=z_{\beta_n}(S_{K,B}^2,K)\) near zero,
\eqref{eq:F-expansion-small-z} gives
\[
    \log|F_{\beta_n}(S_{K,B}^2,K)|
    =
    -\frac{z^2}{2(K-1)}+O(z^4).
\]
Under the Gaussian block model, \(z^2=\tau_M^2(\theta)Y\) with
\(Y\sim\chi^2_{K-1}\).  For fixed \(K\), the chi-square tail outside
the local expansion region is exponentially small as
\(\tau_M^2(\theta)\to0\), and the resulting small-noise expansion is
\[
    \operatorname{Var}_\xi
    \{\log|F_{\beta_n}(S_{K,B}^2,K)|\}
    =
    \frac{\{\tau_M^2(\theta)\}^2}{2(K-1)}
    +o\!\left(\{\tau_M^2(\theta)\}^2\right).
\]
The block mean and sample variance are independent in this model, so there is
no covariance contribution and
\[
    \operatorname{Var}_\xi\{\log|\widehat Z_M(\theta,\xi)|\}
    =
    \tau_M^2(\theta)
    +\frac{\{\tau_M^2(\theta)\}^2}{2(K-1)}
    +o\!\left(\{\tau_M^2(\theta)\}^2\right).
\]
The Bessel-term contribution is therefore second order in the small-noise
Gaussian regime.  It need not be small at a finite operating point, especially
near a Bessel zero.  The pilot screen below records
\(\tau_M^2\) and evaluates zero frequencies, negative
weights, and weighted signs using the full corrected weight.

Let \(\mathcal O_n\) denote a region of parameter space containing the
posterior concentration region, over which \(\tau_M^2(\theta)\) is required
to remain uniformly bounded.  The following result converts an MC or RQMC
variance rate into a sufficient per-proposal budget.  

\begin{theorem}[Budget orders for the loss-induced log-weight variance]
\label{thm:stability_budgets}
Fix \(0<C_\tau<\infty\). Assume that \(c_\beta n\le\beta_n\le C_\beta n\), that \(K\) is fixed or
belongs to a fixed finite set, and that all constants below are uniform in
\(n\) and \(\theta\in\mathcal O_n\).
\begin{enumerate}
    \item If MC-blocks satisfy
    \[
        c_\ell M^{-1}
        \le
        \operatorname{Var}\{\bar\ell_M(\theta,\xi)\}
        \le
        C_\ell M^{-1},
    \]
    then a uniform bound
    \(\sup_{\theta\in\mathcal O_n}\tau_M^2(\theta)\le C_\tau\)
    requires \(M\ge c_0n^2\) and is achieved by \(M=C_0n^2\) for sufficiently
    large \(C_0\).

    \item Suppose that, for the chosen \(d\)-dimensional RQMC
representation and some constants \(C<\infty\) and \(\alpha>0\), the estimator variance satisfies 
\[
    \operatorname{Var}\{\widehat\ell_B^{\;(k)}(\theta,\xi_{k})\}
    \le
    C B^{-\alpha}\{\log(2+B)\}^{d-1}.
\]
This is the variance bound in Eq.~\eqref{eq:RQ-var-bound}; see \citet{owen1997b}. Then, for sufficiently large \(C_0\),
    \begin{equation}
        M_n
        =
        C_0 n^{2/\alpha}
        \{\log(2+n)\}^{(d-1)/\alpha}
        \label{eq:rqmc_budget_choice}
    \end{equation}
    satisfies the same uniform bound.  Equivalently, a sufficient RQMC budget
    has order \(\widetilde O(n^{2/\alpha})\), where \(\widetilde O\) suppresses
    logarithmic factors.
\end{enumerate}
\end{theorem}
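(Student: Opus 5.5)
The proof reduces both parts to elementary inequalities on $\tau_M^2(\theta)=\beta_n^2\Var\{\bar\ell_M(\theta,\xi)\}$.

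\emph{Part 1.} With $c_\beta n\le\beta_n\le C_\beta n$, the two-sided MC variance bound gives
\[
c_\beta^2c_\ell\, n^2/M\le\tau_M^2(\theta)\le C_\beta^2C_\ell\, n^2/M
\]
for every $\theta\in\mathcal O_n$, with constants uniform in $n$ and $\theta$. Necessity then follows directly. If $\sup_{\theta\in\mathcal O_n}\tau_M^2\le C_\tau$, the lower bound at any single $\theta\in\mathcal O_n$ (assumed nonempty) gives $M\ge c_0n^2$ with $c_0=c_\beta^2c_\ell/C_\tau$. Sufficiency follows from the upper bound: taking $M=C_0n^2$ with $C_0\ge C_\beta^2C_\ell/C_\tau$ gives $\tau_M^2\le C_\tau$ uniformly. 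If $K$ ranges over a finite set, we take the worst constants over that set.

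\emph{Part 2.} Since the $K$ blocks are independent, $\Var\{\bar\ell_M\}=K^{-1}\Var\{\widehat\ell^{\;(k)}_B\}\le CK^{-1}B^{-\alpha}\{\log(2+B)\}^{d-1}$. Within-block dependence is irrelevant here because the bound is stated per block, and across-block independence is the only property used. Write $B=M_n/K$ and substitute the budget in Eq.~\eqref{eq:rqmc_budget_choice}. This gives
\[
\tau_M^2\le C_\beta^2 C K^{\alpha-1}\,
\frac{n^2\{\log(2+B)\}^{d-1}}{C_0^\alpha n^2\{\log(2+n)\}^{d-1}}.
\]

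The one step needing care is comparing $\log(2+B)$ with $\log(2+n)$, and this is where I expect the only real obstacle. Since $B\le M_n=C_0n^{2/\alpha}\{\log(2+n)\}^{(d-1)/\alpha}\le C' (2+n)^{p}$ for some fixed power $p$ (e.g.\ $p=2/\alpha+1$), we have $\log(2+B)\le c_1\log(2+n)$ with $c_1$ depending only on $C_0,\alpha,d$. For $d=1$ the log factor is absent and nothing needs checking.

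The subtle point is that $c_1$ depends on $C_0$, but only logarithmically. One has $\log(2+B)\le \log(2+C_0)+p\log(2+n)+O(1)\le c_2(1+\log C_0)\log(2+n)$. Hence
\[
\tau_M^2\le C''(1+\log C_0)^{d-1}C_0^{-\alpha},
\]
which tends to $0$ as $C_0\to\infty$ because $\alpha>0$. Choosing $C_0$ large therefore gives $\tau_M^2\le C_\tau$ uniformly in $n$ and $\theta\in\mathcal O_n$, and maximizing over the finite set of admissible $K$ completes the argument. The stated $\widetilde O(n^{2/\alpha})$ form is then just the budget choice itself with the logarithmic factor suppressed.
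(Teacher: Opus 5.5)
Your proposal is correct and follows the paper's proof essentially step for step. For MC, both arguments sandwich $\tau_M^2$ between $c_\beta^2c_\ell n^2/M$ and $C_\beta^2C_\ell n^2/M$. For RQMC, both use the across-block identity $\Var\{\bar\ell_M\}=K^{-1}\Var\{\widehat\ell_B^{\;(1)}\}$, substitute the budget, and compare the logarithmic factor with $\log(2+n)$; you compare $\log(2+B)$ where the paper uses $\log(2+M_n)$. Your explicit bound $\log(2+B)\le c_2(1+\log C_0)\log(2+n)$, which gives $\tau_M^2\lesssim(1+\log C_0)^{d-1}C_0^{-\alpha}$, makes the paper's terse ``after increasing $C_0$ if necessary'' rigorous, since the implied constant in $\log(2+M_n)=O\{\log(2+n)\}$ itself depends on $C_0$.
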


\begin{proof}
For MC, the lower bounds on \(\beta_n\) and the variance give
\(\tau_M^2(\theta)\ge c_\beta^2c_\ell n^2/M\); the corresponding upper bounds
give sufficiency.

For RQMC, independence across blocks gives
\[
    \operatorname{Var}\{\bar\ell_M(\theta,\xi)\}
    =
    K^{-1}\operatorname{Var}\{\widehat\ell_B^{\;(1)}(\theta,\xi_1)\}.
\]
Since \(K\) ranges over a fixed finite set and \(M=KB\), the assumed bound is
at most
\(C'M^{-\alpha}\{\log(2+M)\}^{d-1}\).  Substituting
Eq.~\eqref{eq:rqmc_budget_choice} and using
\(\log(2+M_n)=O\{\log(2+n)\}\) gives the result after increasing \(C_0\) if
necessary.
\end{proof}

The RQ-gain over MC-blocks occurs when \(\alpha>1\): the
budget keeping \(\tau_M^2\) bounded is then subquadratic in \(n\),
and is sublinear, up to logarithmic factors, when \(\alpha>2\). The bound \(C_\tau\) is prescribed and fixed: the theorem determines the budget order required to attain it, not a universal numerical threshold. Under the Gaussian block model, Corollary~\ref{cor:operational_threshold} supplies the \(K\)-specific sufficient cap \(\tau_{\rm sign}^2(K,q_0)\) for sign tolerance \(q_0\). Outside that approximation, the direct sign screen below is used.

\subsection{Sign cancellation}
\label{sec:sign_stability}

In this section we define the quantities that determine the budget requirements imposed by sign-stability. Negative weights matter through both their frequency and their magnitude.  At a
fixed \(\theta\), define
\begin{equation}
    q_M^-(\theta)
    \equiv
    \mathbb P_\xi\{\widehat Z_M(\theta,\xi)<0\}.
    \label{eq:negative_probability_def_intro}
\end{equation}
When zero weights have probability zero, \(1-2q_M^-(\theta)\) is the
expected sign under a fresh block randomization.  If the block law has atoms,
as in finite-population lookup or subsampling, zero weights can have positive
probability; the expected sign then also subtracts that zero-weight
probability, which is recorded separately.
Magnitude is captured by the local weighted expected sign
\begin{equation}
    r_M(\theta)
    \equiv
    \frac{\E_\xi\{\widehat Z_M(\theta,\xi)\}}
    {\E_\xi\{|\widehat Z_M(\theta,\xi)|\}}.
    \label{eq:weighted_sign_def_intro}
\end{equation}
Under Assumption~\ref{ass:finite_block_target},
\(0<r_M(\theta)\le1\) whenever the denominator is finite.  It equals one when
the corrected weight is nonnegative almost surely, and approaches zero when
positive and negative absolute weight nearly cancel.  Averaging over the
absolute-value chain gives
\begin{equation}
    R_M
    \equiv
    \E_{\widetilde\Pi_M}
    [\operatorname{sign}\{\widehat Z_M(\theta,\xi)\}]
    =
    \int r_M(\theta)\widetilde\pi_M(\theta)\,d\theta.
    \label{eq:global_sign_def}
\end{equation}
The empirical mean of the signs in the recorded chain estimates \(R_M\).  This is the
population denominator in Eq.~\eqref{eq:generic_signed_ratio_target}, so values
near zero make sign reweighting unstable.

A related finite-chain result is given by \cite{yang25} for
signed block pseudo-marginal MCMC.  Under stationarity,
reversibility, a positive spectral gap, and $R_M\neq0$, their
concentration argument implies that, for every
$0<c<|R_M|$,
$
  \Pr\left(
    \left|
      \sum_{t=1}^T s_t
    \right|>cT
  \right)
  \rightarrow1
$
exponentially fast.  This result controls the denominator once the
global mean sign is separated from zero.

A small \(q_M^-(\theta)\) does not by itself guarantee a large
\(r_M(\theta)\), since rare negative weights may have large magnitude.  By
Cauchy--Schwarz,
\begin{equation}
    r_M(\theta)
    \ge
    \left[
    \frac{\E_\xi\{\widehat Z_M(\theta,\xi)^2\}}
    {\E_\xi\{\widehat Z_M(\theta,\xi)\}^2}
    \right]^{-1/2},
    \label{eq:weighted_sign_second_moment_bound}
\end{equation}
whenever the second moment is finite.  Consequently, if the displayed
relative second moment is at most \(C<\infty\), then
\(r_M(\theta)\ge C^{-1/2}\).  Since \(r_M\) is the difference between
the positive and negative absolute-weight contributions divided by their sum,
this bound rules out near cancellation at that state.

The same quantity relates the marginal distribution of the raw chain to the
sign-reweighted target.  From
Eqs.~\eqref{eq:absolute_theta_marginal}, \eqref{eq:wM_definition},
\eqref{eq:posterior_normalized_factor}, and
\eqref{eq:weighted_sign_def_intro},
\begin{equation}
    \widetilde\pi_M(\theta)
    \propto
    \pi_n(\theta)\frac{\bar w_M(\theta)}{r_M(\theta)}
    =
    \frac{\pi_{n,M}(\theta)}{r_M(\theta)}.
    \label{eq:absolute_value_marginal_relation}
\end{equation}
Thus \(\bar w_M\) is the tilt from \(\pi_n\) to the sign-corrected
finite-block target \(\pi_{n,M}\), while \(1/r_M\) is the additional
tilt from \(\pi_{n,M}\) to the absolute-weight marginal.  Repeated block
randomizations at a fixed state estimate
\(r_M\) without knowing \(\ell(\theta)\), whereas estimating
\(\bar w_M\) requires a reference loss.  This distinction motivates the
pilot screen below.

\subsection{Pilot screening of block configurations}
\label{sec:block_configuration_screen}

Theorem~\ref{thm:stability_budgets} determines the scale on which block
sizes should be considered, while the quantities in
Section~\ref{sec:sign_stability} determine whether a proposed $(K,B)$-configuration is numerically usable.  Before the production run, we fix
an ordered candidate set
\[
    \mathcal C=\{(K_1,B_1),\ldots,(K_L,B_L)\}
\]
and a finite collection of posterior-representative pilot states
\(\mathcal O_{\rm pilot}\).  These states may be obtained from preliminary runs and are fixed before candidate budgets are screened.

Before examining the candidates, choose
a negative-weight tolerance
\(q_0\in(0,1/2)\).  For example,
\(q_0=0.05\) requires, when no zero weights occur,
at least
\(95\%\) positive weights at each pilot state.

For each candidate \((K,B)\in\mathcal C\), independent corrected weights
are generated at each state in \(\mathcal O_{\rm pilot}\).  These draws
estimate \(q_M^-(\theta)\), the zero-weight
frequency, and \(r_M(\theta)\); the corresponding estimate of \(\tau_M^2(\theta)\) is also recorded. For an optional cap \(C_\tau(K)\in(0,\infty]\), a candidate is retained when
\(\widehat\tau^2_{K,B}(\theta)\le C_\tau(K)\), \(q_M^-(\theta)\le q_0\), no zero
weights are observed, and the estimated weighted sign is positive at every
pilot state. Setting \(C_\tau(K)=\infty\) gives the sign-only screen used in our experiments. This procedure, which we call the pilot screen, is given in Appendix~\ref{app:pilot_algorithms}. The Gaussian first-zero calculation in
Appendix~\ref{app:negative_sign_probability} may guide the candidate range,
but the pilot screen itself estimates \(q_M^-(\theta)\) and \(\tau_M^2(\theta)\) using the production-run estimators and no Gaussian assumption.

The pilot screen is reference-free.  Without a reliable reference loss, one
may repeat the analysis at larger block configurations to assess the
sensitivity of posterior summaries; agreement is a stability check for the
selected finite-block targets, not a total-variation guarantee.  When reliable
loss evaluations are available on a finite design, the optional audit in
Appendix~\ref{sec:finite_block_audit_main} estimates the normalized variation
of \(w_M\) and refines the final selection. We call this the audit screen below.

\section{Numerical Experiments}
\label{sec:experiments}

The examples are distinguished by properties of the loss and the distribution of its estimator.
Section~\ref{sec:crps_calibration} is an exact calibration: a one-dimensional
continuous ranked probability score (CRPS) is available in closed form, so the
variance rate, the selected per-proposal budget, and the finite-block target
check can all be assessed without reference error.  Section~\ref{sec:random_phase_simulator}
keeps the integrated loss fixed while increasing the variability of a pathwise
gradient, which gives a controlled comparison with the exact stochastic-gradient
Zig--Zag construction of \citet{frazier2025exact}.  Section~\ref{sec:g_and_k_benchmark} considers an empirical MMD
posterior for the \(g\)-and-\(k\) model.  It compares RQ-SCBD,
MC-SCBD, and an iid-MC Russian-roulette estimator constructed for the
same intended target.  MMD-ABC is reported separately in the appendix
because its tolerance parameter defines a different posterior.  We do not make a comparison with the block-Poisson P-MCMC \citep{quiroz21blockpoisson} as that would require a control variate or local centring, soft lower-bound choice, Poisson-factor tuning, correlated auxiliary update and further random-cost accounting.

Finally,
Appendix~\ref{sec:real_ftir} applies the density-power score or $\beta$-loss of \cite{basu1998densitypower} to Fourier Transform Infrared (FTIR) spectra.  Its stochastic loss is a discrete lookup, so the main
question is not an RQMC rate improvement but whether the Bessel correction
by itself is enough. 

All comparisons are conditional on a fixed intended generalized posterior.
The loss, prior, learning scale, bandwidth, and any other target-defining
constants are fixed before a block configuration, \(M=KB\), is selected for SCBD. Russian
roulette has no block split; its cost is denoted by \(M'\), the expected
number of MMD-pair evaluations required for one weight evaluation.  The screening Algorithm~\ref{alg:adaptive_tuning} for $K,B$-selection uses
\(q_0=0.05\) as the target bound for the negative-weight probability in \eqref{eq:negative_probability_def_intro}.  When an exact or sufficiently accurate
reference loss is available, Algorithm~\ref{alg:finite_block_audit} also uses
\(\epsilon_w=0.01\) as a posterior-weighted finite-state TV tolerance.  This check is only available when there is a reference loss and isn't implied by sign stability alone. All SCBD configurations used in the production runs below undergo this
second-stage audit.  The reference loss is analytic in the CRPS and
random-phase examples, numerical and frozen in the \(g\)-and-\(k\)
example, and obtained by exhaustive finite-data evaluation in the FTIR
example.  These calculations are performed offline and are excluded from
the reported production budgets.  The reference-free sensitivity analysis at
the end of Section~\ref{sec:block_configuration_screen} suggests an approach when this check isn't available.

Table~\ref{tab:experiment_overview} states the role of each example.  Numerical
settings, repeat counts, and additional diagnostics are given in
Appendix~\ref{app:experimental_details}.  Pilot calculations are excluded from
the reported production budgets.

\begin{table}[htbp]
    \centering
    \small
    \setlength{\tabcolsep}{4pt}
    \caption{Structure of the numerical study.  The reference column
    describes the offline loss calculation used in the finite-block audit.
    An ``exact'' reference is analytic; a numerical reference is frozen before block
    selection and the production comparison.}
    \label{tab:experiment_overview}
    \begin{tabular}{@{}p{0.16\textwidth}p{0.27\textwidth}p{0.16\textwidth}p{0.32\textwidth}@{}}
        \toprule
        Example & Stochastic representation & Reference & Main question \\
        \midrule
        CRPS-loss &
        One-dimensional randomized stratification &
        Exact &
        Do variance and TV-rates match theory, does the variance rate carry through to the
        per-proposal budget, and what do \(q_0\) and \(\epsilon_w\) control? \\
        Random phase &
        Smooth phase integral with an oscillatory pathwise derivative &
        Exact Gaussian target &
        How does event-rate inflation in exact Zig--Zag change the comparison with RQ-SCBD? \\
        \(g\)-and-\(k\) MMD &
        Two-dimensional pair integral over simulator draws &
        Fixed numerical grid &
        How do RQMC blocks, MC blocks, and Russian roulette
        compare in accuracy and cost for the same target? \\
        FTIR spectra &
        Three-dimensional lookup over specimen and channel-pair indices &
        Exhaustive finite-data grid &
        Can sign correction control exponentiation bias when the represented
        integral is discontinuous and RQMC alone is insufficient? \\
        \bottomrule
    \end{tabular}
\end{table}

\subsection{CRPS Calibration}
\label{sec:crps_calibration}

The continuous ranked probability score is a proper scoring rule for univariate
predictive distributions and admits an expectation representation in terms of
absolute differences \citep{gneiting2007strictly}.  We use a one-parameter
family so this score is available analytically and as a
non-Gaussian stochastic integral. We estimate all quantities of interest directly from their definitions without running an MCMC chain. We run pre-screening Algorithms~\ref{alg:adaptive_tuning} and \ref{alg:finite_block_audit} and the
configuration passing both checks is
evaluated directly on a one-dimensional grid.

Let \(\theta\in(0,1)\) be the predictive mean and define an observation model
\begin{equation}
    Y_\theta=U^{\rho(\theta)},\qquad
    U\sim{\rm Unif}(0,1),\qquad
    \rho(\theta)=\frac{1-\theta}{\theta}.
    \label{eq:crps_power_model}
\end{equation}
Then \(\E(Y_\theta)=\theta\).  The true mean is $\theta^*=1/5$ so $\rho^*=4$ and we suppose the data,
\begin{equation}
    y_i=\left(\frac{i-1/2}{n}\right)^4,\qquad i=1,\ldots,n,
    \label{eq:crps_calibration_panel}
\end{equation}
are the midpoint quantiles of \(Y=U^4\), \(U\sim{\rm Unif}(0,1)\). This choice removes a layer of
data-generation noise and leaves a controlled target for studying the $(K,B)$-block selection rule and the dependence of $\tau_B$ and $\pi_{n,M}$ on $n$ and $M=KB$.

The generalized posterior $\pi_n$ in \eqref{eq:intro_gibbs_posterior} is not based on the likelihood from \eqref{eq:crps_power_model}. It uses a uniform prior on \(\theta\in (0,1)\) and the average CRPS loss
\begin{equation}
    \ell_n(\theta)
    =
    \frac1n\sum_{i=1}^n
    \left[
      \E_\theta|Y_\theta-y_i|
      -\frac12\E_\theta|Y_\theta-Y_\theta'|
    \right],
    \qquad \beta_n=n,
    \label{eq:crps_loss}
\end{equation}
where \(Y_\theta\) and \(Y_\theta'\) are independent realisations of \eqref{eq:crps_power_model}.  Writing
\(\rho=\rho(\theta)\),
\begin{align}
    \E_\theta|Y_\theta-y|
    &=
    \frac1{\rho+1}-y
    +\frac{2\rho}{\rho+1}y^{1+1/\rho},
    \label{eq:crps_abs_exact}\\
    \frac12\E_\theta|Y_\theta-Y_\theta'|
    &=
    \frac{\rho}{(\rho+1)(\rho+2)}.
    \label{eq:crps_pair_exact}
\end{align}
Thus \(\ell_n\) and the intended posterior are available in closed form.
The stochastic representation, with $u$ playing the role of $\xi_{k,b}$ in \eqref{eq:setup_block_estimators}, is
\begin{equation}
    L(\theta,u)
    =
    \frac1n\sum_{i=1}^n |u^{\rho(\theta)}-y_i|
    -\frac{\rho(\theta)}
    {\{\rho(\theta)+1\}\{\rho(\theta)+2\}},
    \qquad
    \E\{L(\theta,U)\}=\ell_n(\theta).
    \label{eq:crps_integrand}
\end{equation}
An MC block averages iid uniforms.  For $k=1,\dots,K$, the $k$'th RQ-block uses independently
randomized one-dimensional strata,
\[
    u_{k,b}=\frac{b-1+V_b}{B},\qquad
    V_b\stackrel{\rm iid}{\sim}{\rm Unif}(0,1),\quad b=1,\ldots,B,
\]
with independent randomizations across blocks.  The integrand is continuous and
piecewise smooth in one dimension, a setting in which randomized stratification
is particularly effective.

In order to estimate $\alpha$ in \eqref{eq:RQ-var-bound} we fit the function $1/B^\alpha$ to the variance function $\tau_B$ over
\(B=16,\ldots,1024\), evaluating $\tau_B$ at the posterior mode. The dependence is shown in Figure~\ref{fig:crps_rate_cost} in panel (a) which plots $\Var(\bar\ell_M)$ against $B$. The fit gives
\[
    \widehat\alpha_{\rm MC}=1.014
    \quad (95\%~{\rm CI}: 0.999,1.028),
    \qquad
    \widehat\alpha_{\rm RQMC}=2.997
    \quad (95\%~{\rm CI}: 2.976,3.018).
\]
We repeat the $(K,B)$-selection procedure at
\(n=100,200,\ldots,6400\) using Algorithms~\ref{alg:adaptive_tuning} and \ref{alg:finite_block_audit}. The selected budgets are shown in Figure~\ref{fig:crps_rate_cost} in panel (b). Under the sign and TV-audit thresholds \(q_0=0.05\) and $\epsilon_w=0.01$, the fitted slope of
\(\log M\) against \(\log n\), over \(n=200,\ldots,6400\), is
\(0.612\) with 95\% interval \([0.543,0.681]\) for RQMC.  This interval contains the predicted sublinear $n$-dependence \(2/\widehat\alpha_{\rm RQMC}=0.667\).  The cheap Gaussian MC sign-screen calculation in Corollary~\ref{cor:operational_threshold} has slope \(1.903\), close to the value \(2/\widehat\alpha_{\rm MC}=1.972\) predicted by the MC variance rate, confirming the expected near-quadratic cost growth; this planning benchmark does not receive the baseline audit in Algorithm~\ref{alg:finite_block_audit}.

Panel (c) gives a direct finite-\(B\) check of the claimed $M=KB$ dependence set out in
Corollary~\ref{cor:method-rates}, fixing \(n=800\) and \(K=5\).  Over the
filled-symbol ranges, the fitted log--log slopes are \(-0.99\) and \(-2.01\)
for P-MCMC-MC and MC-SCBD, and \(-2.99\) and \(-6.27\) for P-MCMC-RQ and
RQ-SCBD.  The MC slopes match the uncorrected \(B^{-1}\) and corrected \(B^{-2}\) convergence rates in part 1 of the corollary; the
uncorrected RQ slope matches the predicted \(B^{-\widehat\alpha_{\rm RQMC}}\simeq B^{-3}\) rate. The predicted rate for RQ-SCBD is \(B^{-3\widehat\alpha_{\rm RQMC}/2}\simeq B^{-4.5}\) if the third cumulant of the block error doesn't vanish and \(B^{-2\widehat\alpha_{\rm RQMC}}\simeq B^{-6}\) if it does. The latter appears to hold over the fitted range, suggesting that the third-cumulant contribution is negligible there. Open symbols in panel (c) mark pre-asymptotic budgets and are excluded
from the fits.

\begin{figure}[htbp]
    \centering
    \includegraphics[width=\textwidth]{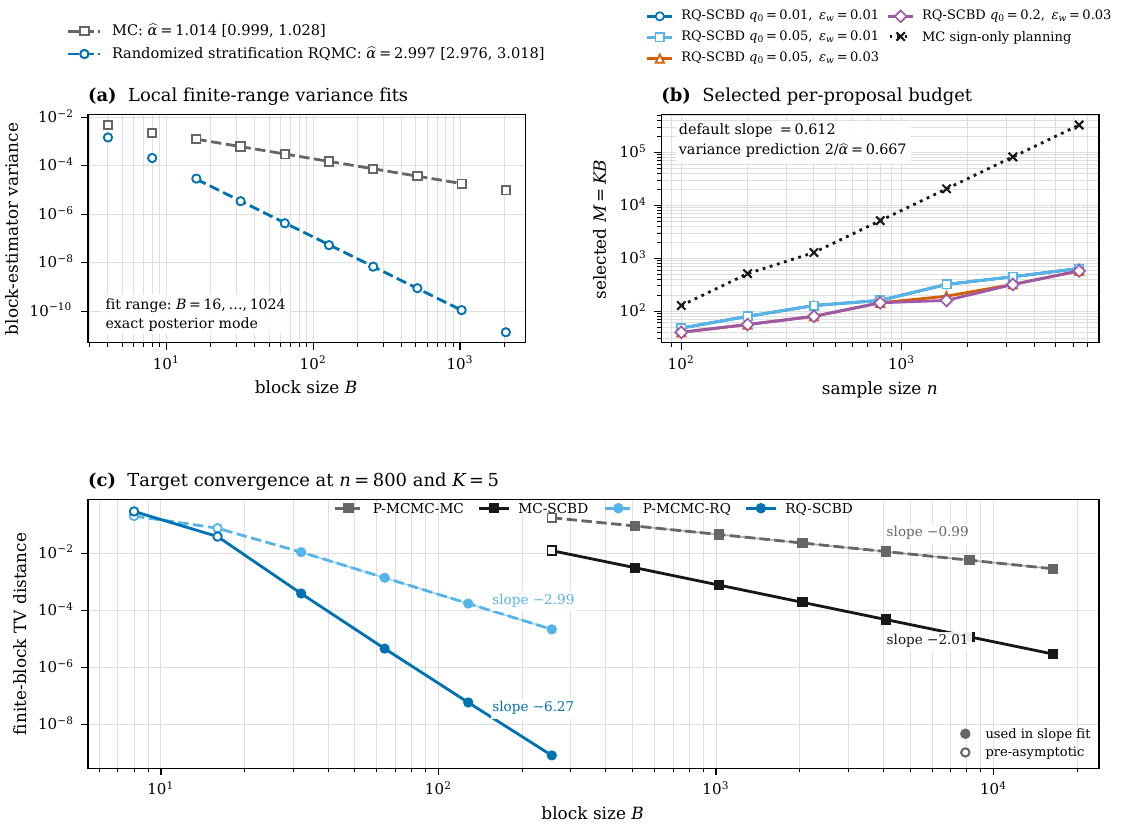}
    \caption{CRPS loss-estimate variance $\tau_B$, selected per-proposal budget, and target convergence.  Panel (a) displays and reports local variance-fits at the posterior mode.  Panel (b)
    plots and reports the value \(M=KB\) selected at each \(n\); this will become the cost of one SCBD MCMC proposal.  The
    solid RQ-SCBD curves use the full sign and target checks in Algorithms~\ref{alg:adaptive_tuning} and \ref{alg:finite_block_audit}.  The MC curve is a
    Gaussian sign-screen planning calculation.  Panel (c) reports finite-block TV
    distance for the corrected and uncorrected MC and RQ constructions; the
    P-MCMC curves omit the Bessel correction.  Filled symbols define the fitted
    ranges and open symbols mark pre-asymptotic budgets.}
    \label{fig:crps_rate_cost}
\end{figure}

At \(n=800\), the sign screen admits \((K,B,M)=(9,16,144)\), but its
finite-block TV estimate is \(0.0255\), so Algorithm~\ref{alg:finite_block_audit}
continues down the ordered feasible list and selects \((5,32,160)\).  This
example makes the separation between the two checks explicit: a stable sign is
not sufficient evidence that the exponentiated estimator preserves the target.

\begin{table}[htbp]
    \centering
    \small
    \caption{Threshold sensitivity for the CRPS calibration at \(n=800\).
    Parentheses give Monte Carlo standard errors.  The holdout audit uses an
    independent posterior-quantile grid and independent randomizations; it is a
    non-selecting check and is not expected to be systematically larger or
    smaller than the selection audit.}
    \label{tab:crps_tolerance}
    \begin{tabular}{ccrrrrcc}
        \toprule
        \(q_0\) & \(\epsilon_w\) & \(K\) & \(B\) & \(M\) &
        selection TV & holdout TV \\
        \midrule
        0.01 & 0.01 & 5 & 32 & 160 & \(0.0067\;(0.0004)\) & \(0.0089\;(0.0005)\) \\
        0.05 & 0.01 & 5 & 32 & 160 & \(0.0067\;(0.0004)\) & \(0.0089\;(0.0005)\) \\
        0.05 & 0.03 & 9 & 16 & 144 & \(0.0255\;(0.0016)\) & \(0.0306\;(0.0016)\) \\
        0.20 & 0.03 & 9 & 16 & 144 & \(0.0255\;(0.0016)\) & \(0.0306\;(0.0016)\) \\
        \bottomrule
    \end{tabular}
\end{table}

Table~\ref{tab:crps_tolerance} shows that the target check, rather than the sign
screen, is binding at both displayed values of \(q_0\).  Relaxing
\(\epsilon_w\) from \(0.01\) to \(0.03\) reduces the selected budget from
\(160\) to \(144\), but the independent TV estimate rises to approximately
\(0.031\).  Figure~\ref{fig:crps_threshold_effect}(a) shows the corresponding
posterior curves, while panel (b) shows the posterior-normalized factor that
causes their displacement.

\begin{figure}[htbp]
    \centering
    \includegraphics[width=\textwidth]{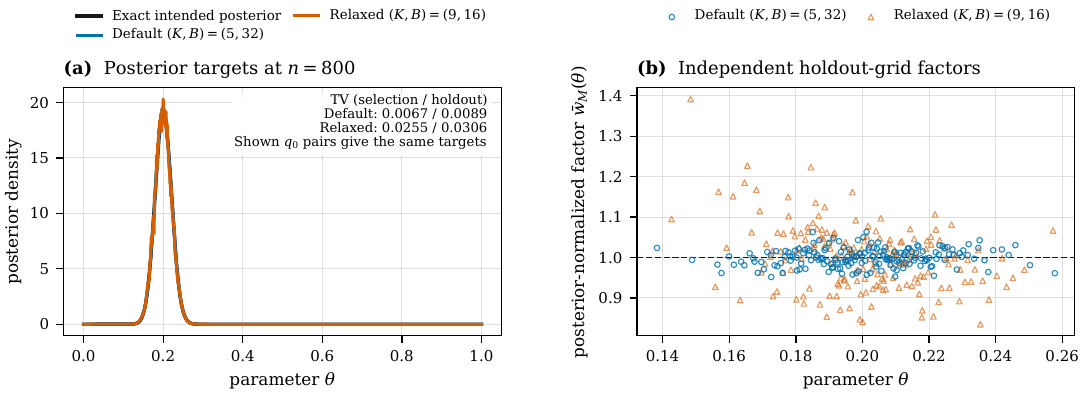}
    \caption{Effect of the selection thresholds in the CRPS calibration at
    \(n=800\).  Panel (a) compares the intended posterior with the finite-block
    targets selected under the four pre-specified threshold pairs.  Panel (b)
    plots the associated posterior-normalized factors.  The two values of
    \(q_0\) give the same result because the target check is binding; the looser
    value \(\epsilon_w=0.03\) admits a visibly less flat factor.}
    \label{fig:crps_threshold_effect}
\end{figure}

The observed budget scaling follows directly from the block-variance rate.  If
\(\operatorname{Var}\{\widehat\ell_B(\theta)\}\simeq c(\theta)B^{-\alpha}\)
over the checked posterior region and \(\beta_n\asymp n\), a fixed
log-weight or sign threshold requires
\[
    \beta_n^2\operatorname{Var}\{\widehat\ell_B(\theta)\}\lesssim C,
    \qquad
    B\asymp C^{-1/\alpha}n^{2/\alpha}.
\]
Changing a fixed threshold therefore changes the multiplicative constant but
not the exponent.  This calculation does not imply the
\(\epsilon_w\)-criterion; the exact CRPS target is what permits that criterion
to be checked separately.

\subsection{Random-Phase Simulation and Pathwise Zig--Zag Cost}
\label{sec:random_phase_simulator}

This example isolates a distinction that is important for stochastic-gradient
samplers.  A simulator average may be smooth and easy to integrate even when
the derivative of a single simulated path is highly variable.  Random-phase
spectral representations are a standard device for simulating stationary stochastic processes with prescribed second-order structure
\citep{shinozuka1991simulation}.  In the present construction, they provide a simple way to separate the two. We compare RQ-SCBD, which uses only forward
loss evaluations, with the exact stochastic-gradient Zig--Zag construction of
\citet{frazier2025exact}, which uses a fresh pathwise derivative at each
candidate event.

\subsubsection{Model, target, and pathwise gradient}
\label{sec:random_phase_model}

Let \(y_1,\ldots,y_n\in\mathbb R^{d_y}\) be observed vectors and write \(y=\bar y_n\), with
\(d_y=10\).  For \(\theta\in\mathbb R^2\) and $i=1,\dots,d_y$, fixed vectors
\(a_i,c_i\in\mathbb R^2\), independent phases
\(\phi_i\sim{\rm Unif}(0,1)\), and frequency \(\kappa\), define the simulator
output for one component of $Y=(Y_1,\dots,Y_{d_y})$ as
\begin{equation}
    Y_i(\theta,\phi_i)
    =
    a_i^\top\theta
    +
    \epsilon\,w_\eta(\kappa c_i^\top\theta+2\pi\phi_i),
    \qquad i=1,\ldots,d_y,
    \label{eq:random_phase_output}
\end{equation}
where
\begin{equation}
    w_\eta(t)
    =
    \frac{\exp\{\eta\cos t\}-I_0(\eta)}
    {\{I_0(2\eta)-I_0(\eta)^2\}^{1/2}},
    \label{eq:random_phase_waveform}
\end{equation}
$I_0$ is the modified Bessel function of the first kind and $\eta\ge 0$ is a waveform parameter. Because $T=\kappa c_i^\top\theta+2\pi\phi_i$ is uniform in $\kappa c_i^\top\theta+[0,2\pi)$, Bessel identities give  $E\{w_\eta(T)\}=0$ and $\Var\{w_\eta(T)\}=1$ independent of $\eta, \kappa$ and $\theta$.
The waveform $w_\eta$ is a smooth function of $\phi$ which is \(2\pi\)-periodic, mean zero
and variance one under the chosen uniform phase.  Conditional on a fixed phase vector
\(\phi\), the map \(\theta\mapsto Y(\theta,\phi)\) is one simulated path.

For a fixed phase, the empirical quadratic loss \(n^{-1}\sum_{j=1}^n\|W\{Y(\theta,\phi)-y_j\}\|^2/(2d_y)\) differs from the loss below by \(C_n=\sum_{j=1}^n\|W(y_j-y)\|^2/(2nd_y)\).  This term is independent of \(\theta\) and \(\phi\) and is therefore omitted.  Here $\phi$ plays the role of $\xi_{k,b}$ in \eqref{eq:setup_block_estimators}. Let
\begin{equation}
    L(\theta,\phi)
    =
    \frac{1}{2d_y}
    \left\|W\{Y(\theta,\phi)-y\}\right\|^2,
    \qquad W\in\mathbb R^{d_y\times d_y},
    \label{eq:random_phase_loss}
\end{equation}
and $\ell(\theta)=\E\{L(\theta,\phi)\}$. Here \(W\) is \(\{I_{d_y}+\rho(P+P^\top)/2\}/\sqrt{1+\rho^2/2}\), where \(P\) is the \(d_y\times d_y\) cyclic shift matrix and \(\rho=0.003\). This normalization gives each row of \(W\) unit Euclidean norm. Let $\mathsf A$ and $\mathsf C$ have rows $a_i^\top$ and $c_i^\top$,
respectively.   Writing
$
  r_i(\theta,\phi)
  =
  w_\eta\{\kappa c_i^\top\theta+2\pi\phi_i\},
$
we have $Y=\mathsf A\theta+\epsilon r$ with $\E_\phi r=0$ and $\E_\phi rr^\top=I_{d_y}$ so averaging over $\phi$ exactly gives
\begin{equation}\label{eq:exact-rand-phase-loss}
  \ell(\theta)
  =
  \frac1{2d_y}\|W(\mathsf A\theta-y)\|^2
  +
  \frac{\epsilon^2}{2d_y}{\rm tr}(W^\top W).
\end{equation}
We take an improper prior $\pi(\theta)\propto 1$ so the target Gibbs posterior density for $\theta$ given in \eqref{eq:intro_gibbs_posterior} is Gaussian, with covariance
\[
    \Sigma_{\rm ref}
    =
    \{\beta_n\mathsf A^\top W^\top W\mathsf A/d_y\}^{-1}.
\]
Neither the closed form nor its gradient is supplied to either computational
method. We use the settings
\[
    n=\beta_n=20000,\qquad \epsilon=0.04,\qquad \eta=0.8,
    \qquad \kappa\in\{12,20,60\}.
\]

The exact Zig--Zag sampler in \citet{frazier2025exact} requires an unbiased gradient estimator and a tractable bound
valid for every auxiliary draw. Holding \(\phi\) fixed and differentiating the simulated path gives
\begin{equation}
    \nabla_\theta L(\theta,\phi)
    =
    \frac1{d_y}J_Y(\theta,\phi)^\top W^\top W
    \{Y(\theta,\phi)-y\},
    \qquad
    J_Y(\theta,\phi)
    =
    \mathsf A+\epsilon\kappa D_\eta(\theta,\phi)\mathsf C,
    \label{eq:random_phase_pathwise_grad}
\end{equation}
where \(D_\eta\) is diagonal with entries
\(w_\eta'\{\kappa c_i^\top\theta+2\pi\phi_i\}\). 
The exact target with loss $\ell(\theta)$ in \eqref{eq:exact-rand-phase-loss} is independent of $\kappa$, whereas the
pathwise Jacobian contains the term
$\epsilon\kappa D_\eta(\theta,\phi)\mathsf C$; increasing $\kappa$
increases gradient oscillation without changing the
phase-averaged loss. We use this to stress-test the Zig--Zag sampler.
Differentiation and expectation may be interchanged in this bounded smooth construction:
\[
    \E_\phi\{\nabla_\theta L(\theta,\phi)\}
    =
    \nabla_\theta\,\ell(\theta),
\]
so gradient estimates are unbiased.
The Zig--Zag implementation draws a fresh \(\phi\) at every candidate event and
uses the exact affine upper bound given in Eq.~\eqref{eq:random_phase_affine_bound} in Appendix~\ref{app:random_phase_validation} on the switching
rate. The Zig--Zag process therefore has
\(\pi_n\) as its invariant marginal distribution.

To quantify the cost induced in the switching rate, let
\(v_j\in\{-1,1\}^2,\ j=1,2\) be independent Rademacher coordinates.
For an exact-gradient Zig--Zag process, the total switching
intensity at $(\theta,v)$ is
$
  \sum_{j=1}^2
  [\beta_n v_j\partial_j\ell(\theta)]_+.
$
With an unbiased stochastic gradient, the positive-part operation
is applied before averaging over the auxiliary variable.  Convexity
gives
\[
  \E_\phi
  [\beta_n v_j\partial_j L(\theta,\phi)]_+
  \geq
  [\beta_n v_j\partial_j\ell(\theta)]_+.
\]
We define $\mathcal R_{\rm ZZ}$ as the ratio of these total
switching intensities averaged over the stationary target and
velocities,
\begin{equation}
    \mathcal R_{\rm ZZ}(\kappa)
    =
    \frac{
    \E_{\theta,v,\phi}\sum_{j=1}^2
    [\beta_n v_j\,\partial_jL(\theta,\phi;\kappa)]_+
    }{
    \E_{\theta,v}\sum_{j=1}^2
    [\beta_n v_j\,\partial_j\ell(\theta)]_+
    },
    \qquad \theta\sim\pi_n.
    \label{eq:random_phase_event_rate_ratio}
\end{equation}
Thus $\mathcal R_{\rm ZZ}\geq1$ measures the
switching-rate inflation caused by stochastic-gradient variability.
A larger value means more candidate-gradient evaluations, per unit of continuous
trajectory.

\begin{figure}[htbp]
    \centering
    \includegraphics[width=\textwidth]{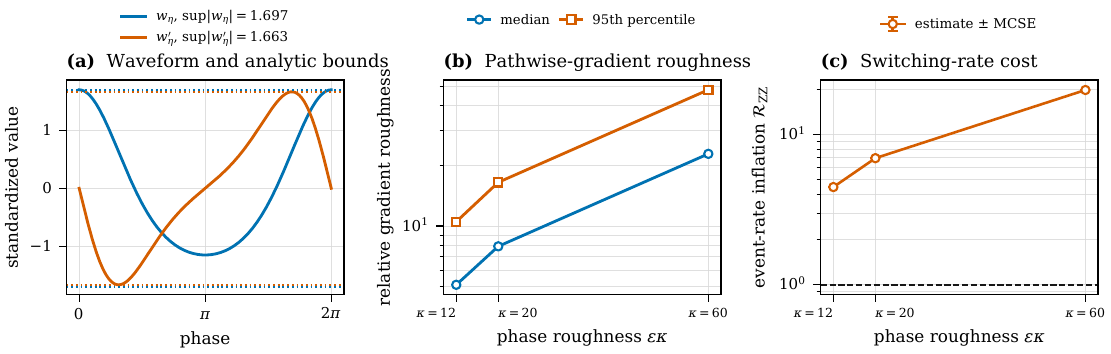}
    \caption{Random-phase mechanism.  Panel (a) verifies the analytic bounds on
    the standardized waveform $w_\eta$ in \eqref{eq:random_phase_waveform} and its derivative.  Panel (b) shows the
    increasing roughness of the pathwise gradient while the phase-averaged loss
    remains unchanged.  Panel (c) reports the corresponding switching-rate
    inflation \(\mathcal R_{\rm ZZ}\), from \(4.47\) at \(\kappa=12\) to
    \(19.75\) at \(\kappa=60\).}
    \label{fig:random_phase_mechanism}
\end{figure}

\subsubsection{Block selection and sampler comparison}
\label{sec:random_phase_budget}

RQ-SCBD counts one evaluation of \(L(\theta,\phi)\) as one
stochastic evaluation.  Zig--Zag counts one fresh pathwise-gradient evaluation
at a candidate event in the same way.  This convention is favorable to
Zig--Zag whenever a vector derivative is more expensive than a scalar loss.

The common block configuration is selected at \(\kappa=60\).  The first
four Algorithm~\ref{alg:adaptive_tuning}-feasible candidates have total budgets
\(80,96,112,\) and \(128\), but none gives a sufficiently precise
Algorithm~\ref{alg:finite_block_audit} decision under the pre-specified
uncertainty rule.  The first accepted candidate is
\[
    K=9,\qquad B=16,\qquad M=144,
\]
with posterior-weighted TV estimate \(0.00449\), Monte Carlo standard error
\(0.00085\), and decision upper value \(0.00668\).  The same split is used in
all three frequency regimes.

At \(\kappa=60\), the local finite-range variance exponents are
\[
    \widehat\alpha_{\rm MC}=0.98,
    \qquad
    \widehat\alpha_{\rm RQMC}=3.01.
\]
Figure~\ref{fig:random_phase_sign_scaling} reports the block-variance fit, the ordered
sign and target checks at \(\beta_n=20000\), and the target-audited RQ-SCBD
budgets over a grid of learning scales.  The MC curve in the final panel is a
sign-only planning calculation.

\begin{figure}[htbp]
    \centering
    \includegraphics[width=\textwidth]{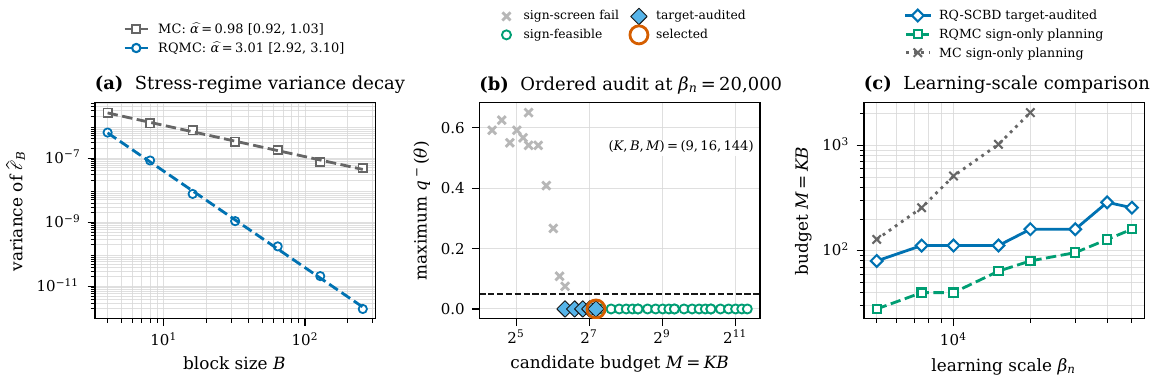}
    \caption{Variance and budget selection at \(\kappa=60\).  Panel (a)
    gives the MC and scrambled-Sobol block-variance fits.  Panel (b) shows the
    fixed-state sign screen and the subsequent finite-block target audit; the
    selected split is \(K=9,B=16\).  Panel (c) distinguishes the target-audited
    RQ-SCBD budgets from sign-only planning curves as the learning scale
    increases.}
    \label{fig:random_phase_sign_scaling}
\end{figure}

Each method receives \(400000\) stochastic evaluations in each regime.
RQ-SCBD consequently completes \(2777\) whole proposals
(\(399888\) loss evaluations), while Zig--Zag processes \(400000\) candidate
events.  Table~\ref{tab:random_phase_efficiency} reports mean and one sample
standard deviation over five independent chains.  Zig--Zag has a modest
evaluation-normalized advantage at \(\kappa=12\).  The ordering reverses at
\(\kappa=20\) and becomes pronounced at \(\kappa=60\), where
\(\mathcal R_{\rm ZZ}\) is largest.

\begin{table}[htbp]
    \centering
    \scriptsize
    \setlength{\tabcolsep}{3pt}
    \caption{Five-seed random-phase comparison under \(400000\) stochastic
    evaluations.  The first column gives the frequency parameter \(\kappa\).  ESS is the minimum coordinate ESS per \(10^5\) realized
    evaluations.  Mean error is in reference standard-deviation units.  The
    final column is method-specific: MH acceptance for RQ-SCBD and thinning
    acceptance for Zig--Zag.}
    \label{tab:random_phase_efficiency}
    \begin{tabular}{@{}llccc@{}}
        \toprule
        \(\kappa\) & Method & ESS/\(10^5\) & mean error & acceptance \\
        \midrule
        12 & RQ-SCBD &
        \(52.4\pm10.0\) & \(0.047\pm0.024\) &
        \(0.298\pm0.009\) (MH) \\
        12 & exact-bound Zig--Zag &
        \(59.2\pm5.9\) & \(0.045\pm0.030\) &
        \(0.0248\pm0.0002\) (thin.) \\
        20 & RQ-SCBD &
        \(56.0\pm9.1\) & \(0.093\pm0.038\) &
        \(0.301\pm0.018\) (MH) \\
        20 & exact-bound Zig--Zag &
        \(34.1\pm7.3\) & \(0.066\pm0.022\) &
        \(0.0306\pm0.0002\) (thin.) \\
        60 & RQ-SCBD &
        \(60.9\pm8.6\) & \(0.051\pm0.013\) &
        \(0.294\pm0.007\) (MH) \\
        60 & exact-bound Zig--Zag &
        \(6.3\pm2.5\) & \(0.131\pm0.042\) &
        \(0.0451\pm0.0002\) (thin.) \\
        \bottomrule
    \end{tabular}
\end{table}

\begin{figure}[htbp]
    \centering
    \includegraphics[width=\textwidth]{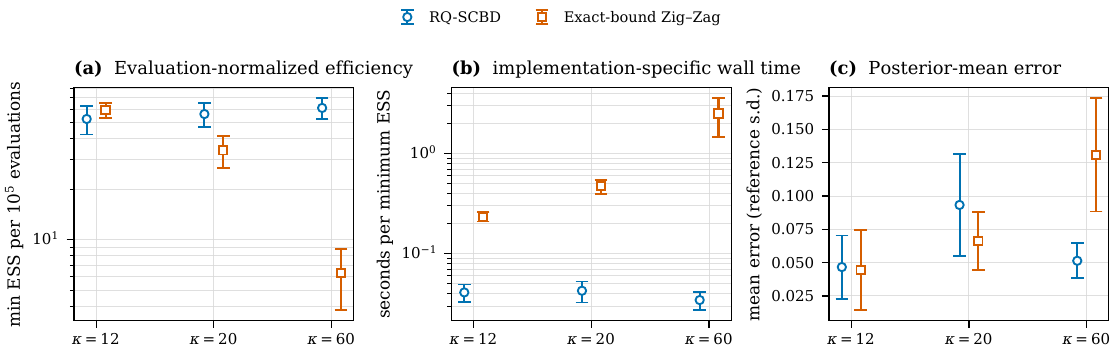}
    \caption{Random-phase efficiency over five seeds.  Panel (a) gives minimum
    ESS per \(10^5\) stochastic evaluations.  Panel (b) reports wall time per
    effective draw and is implementation-specific.  Panel (c) reports
    posterior-mean error in reference standard-deviation units.  The loss-only
    method becomes preferable as pathwise-gradient switching-rate inflation
    increases.}
    \label{fig:random_phase_efficiency}
\end{figure}

Both samplers recover the Gaussian reference at the resolution of
Figure~\ref{fig:random_phase_posterior}.  The interpretation of this agreement
differs by method.  The exact-bound Zig--Zag process targets \(\pi_n\) directly.
RQ-SCBD targets its signed finite-block distribution
\(\pi_{n,M}\); the preceding audit shows that this target is close to
\(\pi_n\) at the selected configuration.  Across the fifteen production
Zig--Zag chains there were no rate-bound violations and no clipped
probabilities; the largest observed rate-to-bound ratio was \(0.418\).
Further implementation and validation details are given in
Appendix~\ref{app:random_phase_validation}.

\begin{figure}[htbp]
    \centering
    \includegraphics[width=\textwidth]{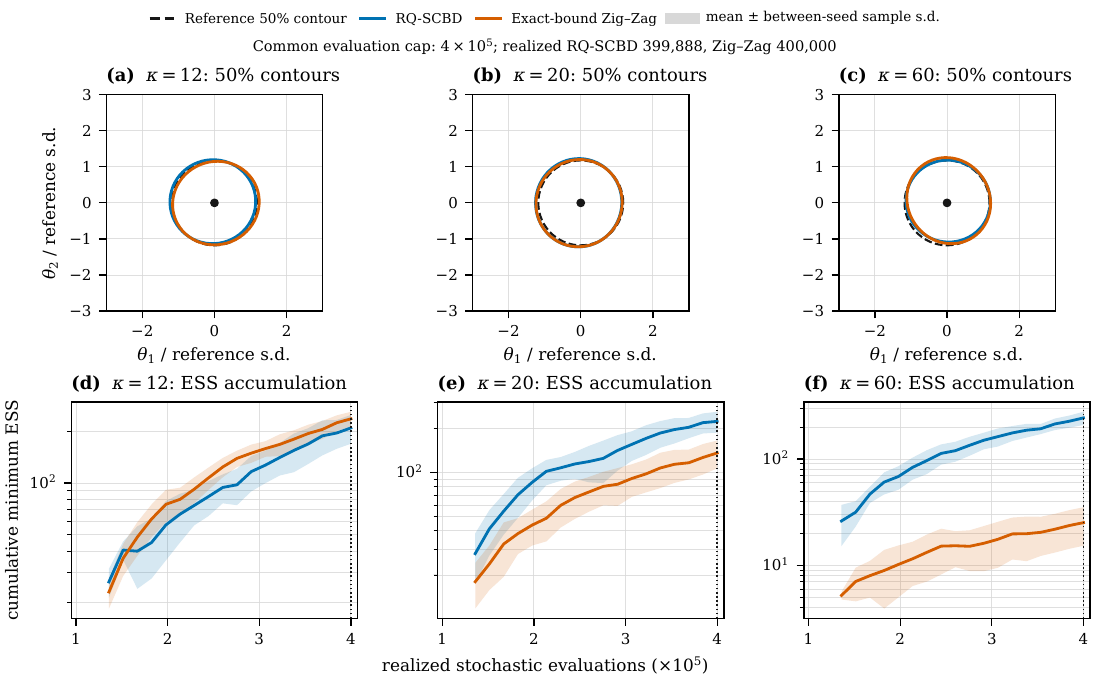}
    \caption{Posterior recovery in the random-phase comparison.  The upper row
    overlays empirical Gaussian contours on the exact phase-averaged target.
    The lower row plots cumulative minimum ESS against the common stochastic
    evaluation count.  Accuracy is stable across regimes, while the rate at
    which effective draws accumulate changes with the pathwise-gradient
    frequency.}
    \label{fig:random_phase_posterior}
\end{figure}

\subsection{Likelihood-Free Inference for the
\texorpdfstring{\(g\)-and-\(k\)}{g-and-k} Distribution}
\label{sec:g_and_k_benchmark}

The \(g\)-and-\(k\) example provides a common likelihood-free target
on which to compare three signed pseudo-marginal constructions.  The
quantile function is straightforward to simulate, whereas direct
likelihood evaluation is inconvenient
\citep{drovandi2011likelihoodfree}.  We use an empirical maximum mean
discrepancy (MMD) loss and compare RQ-SCBD, MC-SCBD, and an iid-MC
Russian-roulette estimator.  The two SCBD implementations differ only
in their block generators.  Russian roulette instead constructs an
unbiased signed estimate of the posterior weight from iid MMD-pair
evaluations \citep{lyne2015russian}.  MMD-ABC is considered separately
in Appendix~\ref{app:gandk_abc_sweep}, since changing the ABC
tolerance changes the posterior being sampled.

\subsubsection{Target and stochastic representation}

The observation model is defined through the quantile function
\begin{equation}
    Q_\theta(z)
    =
    A_{\rm GK}+B_{\rm GK}
    \left[
      1+c_{\rm GK}
      \frac{1-\exp(-gz)}{1+\exp(-gz)}
    \right]
    (1+z^2)^k z,
    \qquad z\sim N(0,1),
    \label{eq:gk_quantile}
\end{equation}
where \(A_{\rm GK}=3\), \(B_{\rm GK}=1\), and \(c_{\rm GK}=0.8\).
The associated likelihood is $p_\theta(\cdot)$ with
\[
    \theta=(g,k)\in[0,5]\times[0,2],
\]
and a uniform prior.  The observed data $y$ has size
\(\n=500\) and was generated at
\(\theta_0=(0.10,0.05)\).  All calculations below condition on this
single realized dataset.

In order to define the loss, let
\[
    k_h(x,x')
    =
    \exp\!\left\{-\frac{(x-x')^2}{2h^2}\right\},
    \qquad
    m_y(x)
    =
    \frac1{\n}
    \sum_{i=1}^{\n} k_h(x,y_i),
\]
where \(h=3.949\), four times the median non-zero pairwise distance among
the first \(300\) observations. After replacing the data distribution by the empirical measure and dropping the data-only term, the empirical MMD loss is
\begin{equation}
    \ell_{\rm MMD}(\theta)
    =
    \E_{p_\theta} k_h(X,X')
    -
    2\,\E_{p_\theta} m_y(X),
    \qquad
    X,X'\overset{\rm iid}{\sim}p_\theta.
    \label{eq:gk_mmd_loss}
\end{equation}
This is an empirical loss as it is based on finite data $y$. The exact generalized posterior is
\begin{equation}
    \pi_{\rm MMD}(\theta\mid y)
    \propto
    \exp\{-\beta_n\ell_{\rm MMD}(\theta)\}\pi_0(\theta),
    \qquad
    \beta_n=\lambda \n,
    \qquad
    \lambda=5.
    \label{eq:gk_mmd_posterior}
\end{equation}
The bandwidth and learning-rate multiplier are fixed parts of this
target.

Figure~\ref{fig:gandk_target_geometry} shows the finite-data loss and
the corresponding numerical posterior.  The loss-minimizer is
\[
    \widehat\theta_{\rm MMD}=\arg\min_\theta\ell_{\rm MMD}(\theta)=(0.043,0.011),
\]
whereas the numerical posterior mean is
\[
    \widehat\mu_{\rm ref}=\E_{\pi_n}(\theta)=(0.160,0.082).
\]
These need not coincide with each other or with the generating value:
the minimizer depends on the realized finite-data loss, and the posterior
mean also reflects the prior and the finite value of \(\lambda\).

\begin{figure}[htbp]
    \centering
    \includegraphics[width=\textwidth]
    {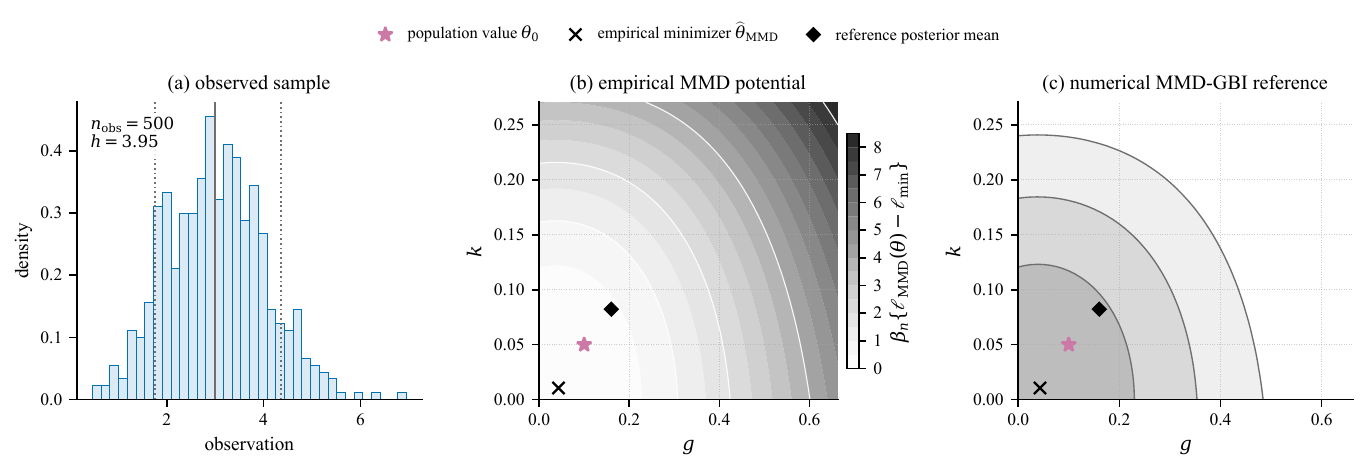}
    \caption{Finite-data target for the \(g\)-and-\(k\) example.
    Panel (a) shows the observed sample, panel (b) the empirical MMD
    potential \(\beta_n(\ell_{\rm MMD}-\ell_{\min})\), and panel (c)
    the numerical MMD-GBI posterior.  The generating value, empirical
    minimizer, and posterior mean are marked separately.}
    \label{fig:gandk_target_geometry}
\end{figure}

Both SCBD implementations use the unbiased pair contribution
\begin{equation}
\begin{aligned}
    L(\theta ; u,u')
    &=
    k_h\{x_\theta(u),x_\theta(u')\}
    -m_y\{x_\theta(u)\}
    -m_y\{x_\theta(u')\},\\
    x_\theta(u)
    &=
    Q_\theta\{\Phi^{-1}(u)\},
\end{aligned}
    \label{eq:gk_pair_integrand}
\end{equation}
for which
\[
    \E\{L(\theta; U,U')\}
    =
    \ell_{\rm MMD}(\theta),
    \qquad
    U,U'\overset{\rm iid}{\sim}{\rm Unif}(0,1).
\]
One pair contribution \(L(\theta; U,U')\) requires two independent simulator calls.  These are used to form $\bar\ell_M$ using \eqref{eq:setup_block_estimators} and \eqref{eq:setup_block_average}. MC-SCBD
forms blocks from independent pairs, whereas RQ-SCBD uses independently
scrambled two-dimensional Sobol blocks.  

\subsubsection{Variance scaling and sampler efficiency}

The inverse-normal transform
and the interaction in the self-kernel term make this representation
less regular than the one-dimensional CRPS integral.
At the numerical-reference mode, the fitted finite-range variance
exponents are
\[
    \widehat\alpha_{\rm MC}=1.03,
    \qquad
    \widehat\alpha_{\rm RQMC}=2.26.
\]
The block-selection procedure gives
\[
\begin{aligned}
    \text{RQ-SCBD:}\quad&
    K=7,\quad B=128,\quad M=896,\\
    \text{MC-SCBD:}\quad&
    K=8,\quad B=2048,\quad M=16384.
\end{aligned}
\]
The corresponding simulator counts are \(1792\) and \(32768\) per
weight evaluation.

The Russian-roulette estimator is constructed from iid evaluations of
the same pair contribution \(L(\theta; u,u')\).  It has no block count or
block size.  Its operating point was fixed before the production runs,
using the pilot calculation described in
Appendix~\ref{app:gandk_rr}.  The expected number of MMD pairs per
weight evaluation is
\begin{equation}
    M'
    =
    20{,}000
    +
    12{,}500\,\E(N)
    =
    111{,}400.9,
    \qquad
    \E(N)=7.312072.
    \label{eq:gk_rr_cost}
\end{equation}
This corresponds to \(222{,}801.8\) expected simulator calls.

Figure~\ref{fig:gandk_rqmc_diagnostics} places the three operating
points on a common computational scale.  The first panel reports the
block-variance calculation underlying the two SCBD budgets.  The
remaining panels compare simulator cost and the effective-sample return
per simulator call.

\begin{figure}[htbp]
    \centering
    \includegraphics[width=\textwidth]
    {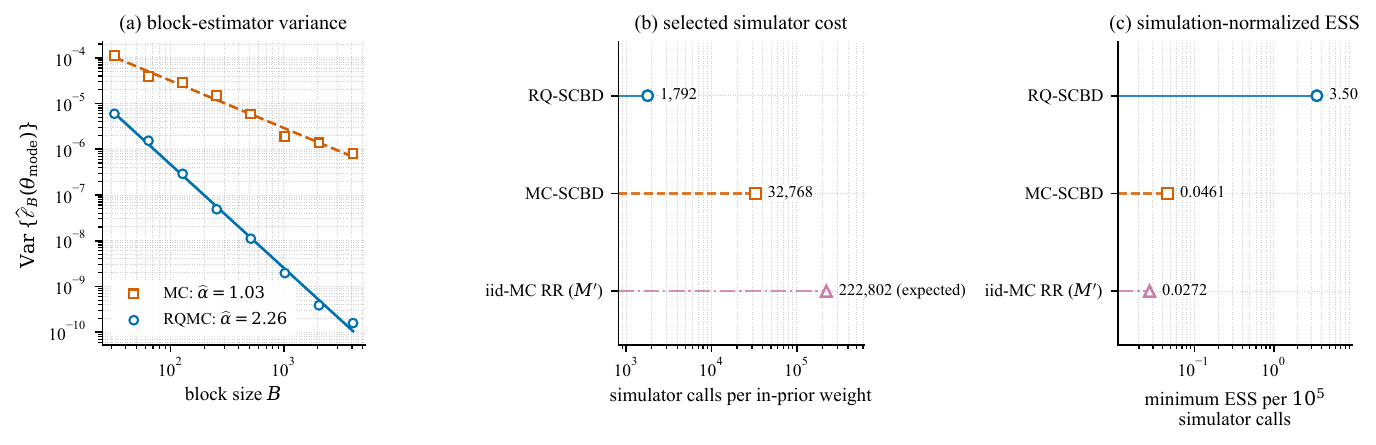}
    \caption{Variance, computational cost, and sampling efficiency for
    the \(g\)-and-\(k\) example.  Panel (a) reports the finite-range
    variance fits for MC and scrambled RQMC blocks. Panel (b)
    gives the number of simulator calls per weight evaluation at
    the selected operating point; the Russian-roulette value is \(2M'\) in
    expectation.  Panel (c) reports the minimum effective sample
    size per \(10^5\) simulator calls.}
    \label{fig:gandk_rqmc_diagnostics}
\end{figure}

\subsubsection{Posterior comparison}

Figure~\ref{fig:gandk_posterior_comparison} compares the three
production runs with the numerical MMD-GBI posterior.  The SCBD curves
are sign-corrected summaries of their audited finite-block targets.
The Russian-roulette summaries use the corresponding signed ratio and,
under the usual integrability conditions, target the intended posterior.
All discrepancies reported below are finite-run differences relative to
the fixed numerical reference.

\begin{figure}[htbp]
    \centering
    \includegraphics[width=\textwidth]
    {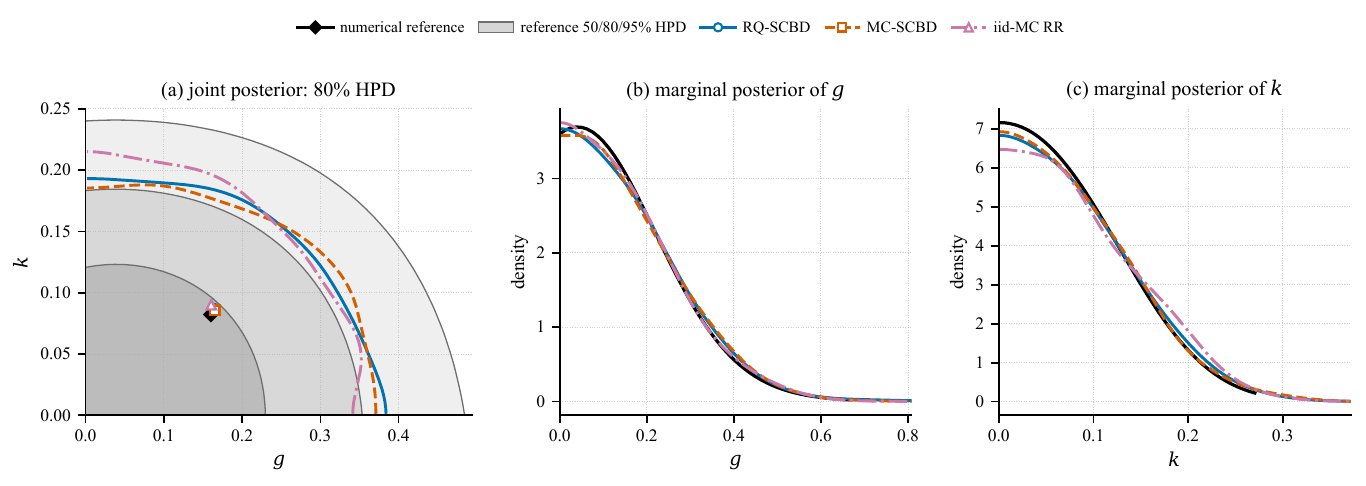}
    \caption{Posterior comparison for the \(g\)-and-\(k\) example.
    Panel (a) overlays sign-corrected 80\% HPD boundaries for
    RQ-SCBD, MC-SCBD, and iid-MC Russian roulette on the 50\%, 80\%,
    and 95\% HPD regions of the numerical reference.  Panels (b) and
    (c) show the corresponding marginal posterior densities of \(g\)
    and \(k\).  The black diamond marks the numerical-reference
    posterior mean.}
    \label{fig:gandk_posterior_comparison}
\end{figure}

\begin{table}[htbp]
    \centering
    \small
    \setlength{\tabcolsep}{2.2pt}
    \caption{Comparison at the selected operating points.  Pair cost is
    \(M=KB\) for SCBD and the expected value \(M'\) for Russian
    roulette.  One MMD pair requires two independent simulator calls.  Mean error is
    the root mean square of the two posterior-mean differences after
    standardization by the marginal numerical-reference posterior
    standard deviations.  Pilot costs are excluded for all methods.}
    \label{tab:gandk_diagnostics}
    \begin{tabular}{@{}lrrrrr@{}}
        \toprule
        Method &
        MMD pairs/weight &
        mean sign &
        acceptance &
        mean error &
        ESS/\(10^5\) sims \\
        \midrule
        RQ-SCBD \((7\times128)\) &
        \(896\) &
        \(1.0000\) &
        \(0.246\) &
        \(0.0611\) &
        \(3.4990\) \\
        MC-SCBD \((8\times2048)\) &
        \(16{,}384\) &
        \(0.9985\) &
        \(0.272\) &
        \(0.0505\) &
        \(0.0461\) \\
        iid-MC RR &
        \(111{,}400.9\) &
        \(0.99543\) &
        \(0.3198\) &
        \(0.0914\) &
        \(0.0280\) \\
        \bottomrule
    \end{tabular}
\end{table}

Efficiency measures are summarised in Table~\ref{tab:gandk_diagnostics}. Russian roulette uses \(6.80\) times the MC-SCBD pair budget, has
\(1.81\) times its finite-run posterior-mean error, and yields \(0.59\)
times its effective sample size per simulator call.  Relative to
RQ-SCBD, RR uses \(124.3\) times as many MMD pairs, has \(1.50\) times
the posterior-mean error, and returns about \(1/129\) of the effective
sample size per simulator call.

These summaries are specific to this dataset, loss representation, and the
selected operating points.  They show that stabilizing the iid-MC
Russian-roulette estimator is costly in this example.  At their selected
budgets, both SCBD implementations have smaller finite-run posterior-mean
discrepancies than RR, while RQ-SCBD attains substantially higher
simulation-normalized efficiency than either iid construction.

The reference-relative finite-block TV-distance values are \(0.00765\) for
RQ-SCBD and \(0.00212\) for MC-SCBD.  These values quantify the target
perturbation associated with the two finite-block SCBD constructions.
There is no SCBD finite-block TV-factor for RR, whose signed
weight is constructed to be unbiased for the target posterior.

Appendix~\ref{app:gandk_abc_sweep} reports an MMD-ABC tolerance sweep
at \(1792\) simulator draws per proposal, matching the RQ-SCBD
simulator count.  Those results are kept separate because the ABC
posterior changes with tolerance and is not the fixed target in
Eq.~\eqref{eq:gk_mmd_posterior}.

\section{Discussion}
\label{sec:discussion}

Exponentiating an estimated loss creates two distinct computational problems.
The first is the variability of the estimated log weight, which determines the
budget needed at each proposal.  The second is the finite-\(M\) target
induced after exponentiation.  RQMC addresses the first problem when the chosen
integral representation has favorable effective dimension and regularity.  The
Bessel factor addresses the leading exponential bias under the Gaussian block
model.  For general finite blocks, neither step eliminates the need to examine
the remaining target factor \(w_M\).

The numerical results illustrate this separation and confirm the total-variation and variance scaling set out in Corollary~\ref{cor:method-rates} and Theorem~\ref{thm:stability_budgets}.  In the CRPS calibration,
one-dimensional randomized stratification gives a variance exponent essentially
equal to three, and the selected budget follows the corresponding
\(n^{2/3}\) scale. The RQ-SCBD convergence rate $M^{-6.27}$ matches the rate predicted when the third cumulant vanishes rather than the rate $M^{-9/2}$ given by the leading term in our expansion. This wasn't something we designed into the problem. The rate conditions we give are sufficient rather than necessary, are proved with fixed $K$, and do not determine a universally optimal division of a budget between $K$ and $B$. However, they do not exploit posterior concentration (see \cite{quiroz2019speeding} and the remarks in Section~\ref{sec:related-work-bounds}), suggesting potential for future work. In the random-phase example, the integrated loss and its
RQMC representation remain stable while the pathwise gradient becomes more
oscillatory.  Exact stochastic-gradient Zig--Zag is competitive at \(\kappa=12\),
but its event-rate inflation makes RQ-SCBD more efficient at
\(\kappa=20\) and \(\kappa=60\).  The \(g\)-and-\(k\) example compares two SCBD block generators with an
iid-MC Russian-roulette estimator for the same intended MMD-GBI
posterior.  Russian roulette produced a finite-run posterior-mean error
larger than that of MC-SCBD, despite a substantially larger expected pair
cost and lower effective sample size per simulator call.  RQ-SCBD
gave the largest
simulation-normalized efficiency.  

The real FTIR example in Appendix~\ref{sec:real_ftir} shows why target accounting remains useful even when the
RQMC rate itself is not the main advantage.  The lookup representation is
discontinuous, but the selected sign-corrected target is close to the exhaustive
finite-data reference.  At the same \(M=256\) budget, direct exponentiation of a
scrambled-net estimate gives a visibly different posterior, and the exact iid
calculation requires \(32\) times as many lookup evaluations to match the
selected RQ-SCBD TV on the expanded grid.  Variance reduction and
exponentiation correction are therefore complementary rather than
interchangeable.

The numerical analysis also suggests a natural reporting standard.
The pilot screen and the target audit should be described separately.
For the former, the relevant information is the pilot design, candidate
ordering, selected \((K,B)\), log-weight variability, and fixed-state
and weighted sign diagnostics.  For the latter, one should state how
the reference loss was obtained, how the audit states and their
posterior-mass weights were constructed, and what normalized target
discrepancy was observed.
When a reliable finite-design reference is unavailable, comparisons
across larger budgets remain useful as sensitivity analyses.  They do
not identify the discrepancy from the intended posterior, and the
result should accordingly be reported as an inference under the selected
finite-block target.  

SCBD is most useful when forward loss evaluations are available but reliable
gradients or analytic exponential corrections are not.  It complements exact
gradient-based PDMP samplers \cite{frazier2025exact} and exact correlated block-Poisson pseudo-marginal methods based on control variates \citep{quiroz21blockpoisson}; these require the user to find problem-specific bounds, control variates and centring-points. The main advantage of SCBD is that it avoids this: MC-SCBD is generic; it requires budget selection, but this may be as simple as doubling the budget and checking stability. The RQMC-extension was also very straightforward to implement for all our examples. The main disadvantage of SCBD is that it is only approximate: the
weight is signed, and a poor block configuration can create both sign
cancellation and a distorted finite-block target.  The purpose of Algorithms~\ref{alg:adaptive_tuning} and
\ref{alg:finite_block_audit} is to make those conditions visible before the
production chain is run.

From a practical perspective, working at fixed $n$, the Bessel-debiasing in Algorithm~\ref{alg:SCBD} is straightforward to apply, doesn't put any special constraints on the design of MCMC-updates, gives a gain of 
$O(1/M)$ in total variation and is essentially ``for free''. If RQMC is practical in the application setting, substantial further gains are possible.

\bibliography{references}

@article{bissiri2016general,
  title={A general framework for updating belief distributions},
  author={Bissiri, Pier Giovanni and Holmes, Chris C and Walker, Stephen G},
  journal={Journal of the Royal Statistical Society Series B: Statistical Methodology},
  volume={78},
  number={5},
  pages={1103--1130},
  year={2016},
  publisher={Oxford University Press}
}

@article{basu1998robust,
  title={Robust and efficient estimation by minimising a density power divergence},
  author={Basu, Ayanendranath and Harris, Ian R and Hjort, Nils L and Jones, MC},
  journal={Biometrika},
  volume={85},
  number={3},
  pages={549--559},
  year={1998},
  publisher={Oxford University Press}
}

@article{jewson2018principles,
  title={Principles of {Bayesian} inference using general divergence criteria},
  author={Jewson, Jack and Smith, Jim Q and Holmes, Chris},
  journal={Entropy},
  volume={20},
  number={6},
  pages={442},
  year={2018},
  publisher={MDPI}
}

@inproceedings{gao2023generalized,
  title={Generalized Bayesian Inference for Scientific Simulators via Amortized Cost Estimation},
  author={Gao, Richard and Deistler, Michael and Macke, Jakob H},
  booktitle={Advances in Neural Information Processing Systems},
  volume={36},
  year={2023}
}

@article{grunwald2017inconsistency,
  title={Inconsistency of {Bayesian} inference for misspecified linear models, and a proposal for repairing it},
  author={Gr{\"u}nwald, Peter D and Van Ommen, Thijs},
  journal={Bayesian Analysis},
  volume={12},
  number={4},
  pages={1069--1103},
  year={2017},
  publisher={International Society for Bayesian Analysis}
}

@article{andrieu2009pseudo,
  title={The pseudo-marginal approach for efficient {Monte Carlo} computations},
  author={Andrieu, Christophe and Roberts, Gareth O},
  journal={The Annals of Statistics},
  volume={37},
  number={2},
  pages={697--725},
  year={2009},
  publisher={Institute of Mathematical Statistics}
}

@article{doucet2015efficient,
  title={Efficient implementation of {Markov chain Monte Carlo} when using an unbiased likelihood estimator},
  author={Doucet, Arnaud and Pitt, Michael K and Deligiannidis, George and Kohn, Robert},
  journal={Biometrika},
  volume={102},
  number={2},
  pages={295--313},
  year={2015},
  publisher={Oxford University Press}
}

@misc{frazier2025exact,
      title={Exact Sampling of Gibbs Measures with Estimated Losses}, 
      author={David T. Frazier and Jeremias Knoblauch and Jack Jewson and Christopher Drovandi},
      year={2025},
      eprint={2404.15649},
      archivePrefix={arXiv},
      primaryClass={math.ST},
      url={https://arxiv.org/abs/2404.15649}, 
}

@article{l2000variance,
  title={Variance reduction via lattice rules},
  author={L'Ecuyer, Pierre and Lemieux, Christiane},
  journal={Management Science},
  volume={46},
  number={9},
  pages={1214--1235},
  year={2000},
  publisher={INFORMS}
}

@article{ceperley1999penalty,
  title={The penalty method for random walks with uncertain energies},
  author={Ceperley, DM and Dewing, M},
  journal={The Journal of Chemical Physics},
  volume={110},
  number={20},
  pages={9812--9820},
  year={1999},
  publisher={American Institute of Physics}
}

@article{lyne2015russian,
  title={On {Russian roulette} estimates for {Bayesian} inference with doubly-intractable likelihoods},
  author={Lyne, Anne-Marie and Girolami, Mark and Atchad{\'e}, Yves and Strathmann, Heiko and Simpson, Daniel},
  journal={Statistical science},
  volume={30},
  number={4},
  pages={443--467},
  year={2015},
  publisher={Institute of Mathematical Statistics}
}

@article{andrieu2010particle,
  title={Particle Markov chain Monte Carlo methods},
  author={Andrieu, Christophe and Doucet, Arnaud and Holenstein, Roman},
  journal={Journal of the Royal Statistical Society: Series B (Statistical Methodology)},
  volume={72},
  number={3},
  pages={269--342},
  year={2010},
  publisher={Wiley Online Library}
}

@article{matsubara2022robust,
  title={Robust generalised Bayesian inference for intractable likelihoods},
  author={Matsubara, Takuo and Knoblauch, Jeremias and Briol, Fran{\c{c}}ois-Xavier and Oates, Chris J},
  journal={Journal of the Royal Statistical Society Series B: Statistical Methodology},
  volume={84},
  number={3},
  pages={997--1022},
  year={2022},
  publisher={Oxford University Press}
}

@misc{nicholls2012coupled,
      title={Coupled MCMC with a randomized acceptance probability}, 
      author={Geoff K. Nicholls and Colin Fox and Alexis Muir Watt},
      year={2012},
      eprint={1205.6857},
      archivePrefix={arXiv},
      primaryClass={stat.CO},
      url={https://arxiv.org/abs/1205.6857}, 
}

@article{jacob2015nonnegative,
  title={On nonnegative unbiased estimators},
  author={Jacob, Pierre E and Thiery, Alexandre H},
  journal={The Annals of Statistics},
  volume={43},
  number={2},
  pages={769--784},
  year={2015},
  publisher={Institute of Mathematical Statistics}
}

@article{owen1997b,
  title={Scrambled net variance for integrals of smooth functions},
  author={Owen, Art B.},
  journal={The Annals of Statistics},
  volume={25},
  number={4},
  pages={1541--1562},
  year={1997},
  publisher={Institute of Mathematical Statistics}
}

@article{loh2003asymptotic,
  title={On the asymptotic distribution of scrambled net quadrature},
  author={Loh, Wei-Liem},
  journal={The Annals of Statistics},
  volume={31},
  number={4},
  pages={1282--1324},
  year={2003},
  publisher={Institute of Mathematical Statistics}
}

@article{bardenet2017markov,
  title={On Markov chain Monte Carlo methods for tall data},
  author={Bardenet, R{\'e}mi and Doucet, Arnaud and Holmes, Chris},
  journal={The Journal of Machine Learning Research},
  volume={18},
  number={1},
  pages={1515--1557},
  year={2017},
  publisher={JMLR. org}
}

@article{pitt2012some,
  title={On some properties of Markov chain Monte Carlo simulation methods based on the particle filter},
  author={Pitt, Michael K and Silva, Ralph S and Giordani, Paolo and Kohn, Robert},
  journal={Journal of Econometrics},
  volume={171},
  number={2},
  pages={134--151},
  year={2012},
  publisher={Elsevier}
}

@article{sherlock2015efficiency,
  title={On the efficiency of pseudo-marginal random walk Metropolis algorithms},
  author={Sherlock, Chris and Thiery, Alexandre H and Roberts, Gareth O and Rosenthal, Jeffrey S},
  journal={The Annals of Statistics},
  volume={43},
  number={1},
  pages={238--275},
  year={2015},
  publisher={Institute of Mathematical Statistics}
}

@inproceedings{cherief2020mmd,
  title={{MMD}-Bayes: Robust Bayesian estimation via maximum mean discrepancy},
  author={Ch{\'e}rief-Abdellatif, Badr-Eddine and Alquier, Pierre},
  booktitle={Symposium on Advances in Approximate Bayesian Inference},
  pages={1--21},
  year={2020},
  organization={PMLR}
}

@article{gneiting2007strictly,
  author  = {Gneiting, Tilmann and Raftery, Adrian E.},
  title   = {Strictly Proper Scoring Rules, Prediction, and Estimation},
  journal = {Journal of the American Statistical Association},
  year    = {2007},
  volume  = {102},
  number  = {477},
  pages   = {359--378},
  doi     = {10.1198/016214506000001437}
}

@article{shinozuka1991simulation,
  author  = {Shinozuka, Masanobu and Deodatis, George},
  title   = {Simulation of Stochastic Processes by Spectral Representation},
  journal = {Applied Mechanics Reviews},
  year    = {1991},
  volume  = {44},
  number  = {4},
  pages   = {191--204},
  doi     = {10.1115/1.3119501}
}

@article{fearnhead2018pdmp,
  author  = {Fearnhead, Paul and Bierkens, Joris and Pollock, Murray and Roberts, Gareth O.},
  title   = {Piecewise Deterministic Markov Processes for Continuous-Time Monte Carlo},
  journal = {Statistical Science},
  year    = {2018},
  volume  = {33},
  number  = {3},
  pages   = {386--412},
  doi     = {10.1214/18-STS648}
}

@article{bierkens2019zigzag,
  author  = {Bierkens, Joris and Fearnhead, Paul and Roberts, Gareth O.},
  title   = {The {Zig-Zag} Process and Super-Efficient Sampling for {Bayesian} Analysis of Big Data},
  journal = {The Annals of Statistics},
  year    = {2019},
  volume  = {47},
  number  = {3},
  pages   = {1288--1320},
  doi     = {10.1214/18-AOS1715}
}

@article{drovandi2011likelihoodfree,
  author  = {Drovandi, Christopher C. and Pettitt, Anthony N.},
  title   = {Likelihood-Free {Bayesian} Estimation of Multivariate Quantile Distributions},
  journal = {Computational Statistics \& Data Analysis},
  year    = {2011},
  volume  = {55},
  number  = {9},
  pages   = {2541--2556},
  doi     = {10.1016/j.csda.2011.03.019}
}

@article{aljowder1997midinfrared,
  author  = {Al-Jowder, O. and Kemsley, E. K. and Wilson, R. H.},
  title   = {Mid-Infrared Spectroscopy and Authenticity Problems in Selected Meats: A Feasibility Study},
  journal = {Food Chemistry},
  year    = {1997},
  volume  = {59},
  number  = {2},
  pages   = {195--201},
  doi     = {10.1016/S0308-8146(96)00289-0}
}

@misc{quadramFTIRmeat,
  author       = {{Quadram Institute Bioscience}},
  title        = {Example Datasets for Download: Mid-Infrared Spectra of Fresh Minced Meats},
  howpublished = {Core Science Resources},
  url          = {https://csr.quadram.ac.uk/example-datasets-for-download/},
  note         = {Dataset page; accessed 26 June 2026}
}

@article{basu1998densitypower,
  author  = {Basu, Ayanendranath and Harris, Ian R. and Hjort, Nils L. and Jones, M. C.},
  title   = {Robust and Efficient Estimation by Minimising a Density Power Divergence},
  journal = {Biometrika},
  year    = {1998},
  volume  = {85},
  number  = {3},
  pages   = {549--559},
  doi     = {10.1093/biomet/85.3.549}
}

@article{owen2003variance,
  author  = {Owen, Art B.},
  title   = {Variance with Alternative Scramblings of Digital Nets},
  journal = {ACM Transactions on Modeling and Computer Simulation},
  year    = {2003},
  volume  = {13},
  number  = {4},
  pages   = {363--378},
  doi     = {10.1145/945511.945518}
}

@article{beaumont03,
    author = {Beaumont, Mark A},
    title = {Estimation of Population Growth or Decline in Genetically Monitored Populations},
    journal = {Genetics},
    volume = {164},
    number = {3},
    pages = {1139-1160},
    year = {2003},
    month = {07},
    issn = {1943-2631},
    doi = {10.1093/genetics/164.3.1139},
    url = {https://doi.org/10.1093/genetics/164.3.1139},
    eprint = {https://academic.oup.com/genetics/article-pdf/164/3/1139/42047387/genetics1139.pdf},
}

@article{LinLiuSloan00,
  title = {A noisy Monte Carlo algorithm},
  author = {Lin, L. and Liu, K. F. and Sloan, J.},
  journal = {Phys. Rev. D},
  volume = {61},
  issue = {7},
  pages = {074505},
  numpages = {5},
  year = {2000},
  month = {Mar},
  publisher = {American Physical Society},
  doi = {10.1103/PhysRevD.61.074505},
  url = {https://link.aps.org/doi/10.1103/PhysRevD.61.074505}
}

@article{knoblauch22,
  author  = {Jeremias Knoblauch and Jack Jewson and Theodoros Damoulas},
  title   = {An Optimization-centric View on Bayes' Rule: Reviewing and Generalizing Variational Inference},
  journal = {Journal of Machine Learning Research},
  year    = {2022},
  volume  = {23},
  number  = {132},
  pages   = {1--109},
  url     = {http://jmlr.org/papers/v23/19-1047.html}
}

@article{pacchiardi24,
author = {Lorenzo Pacchiardi and Sherman Khoo and Ritabrata Dutta},
title = {{Generalized Bayesian likelihood-free inference}},
volume = {18},
journal = {Electronic Journal of Statistics},
number = {2},
publisher = {Institute of Mathematical Statistics and Bernoulli Society},
pages = {3628 -- 3686},
year = {2024},
doi = {10.1214/24-EJS2283},
URL = {https://doi.org/10.1214/24-EJS2283}
}

@article{andrieu2015convergence,
  author  = {Andrieu, Christophe and Vihola, Matti},
  title   = {Convergence properties of pseudo-marginal {M}arkov chain
             {M}onte {C}arlo algorithms},
  journal = {The Annals of Applied Probability},
  year    = {2015},
  volume  = {25},
  number  = {2},
  pages   = {1030--1077},
  doi     = {10.1214/14-AAP1022}
}

@article{schoenberg38,
  author  = {Schoenberg, I. J.},
  title   = {Metric Spaces and Completely Monotone Functions},
  journal = {Annals of Mathematics},
  volume  = {39},
  number  = {4},
  pages   = {811--841},
  year    = {1938},
  URL = {http://www.jstor.org/stable/1968466}
}

@article{yin24besselsphere,
author = {Chuancun Yin and Hua Dong},
title = {The Bessel function expression of characteristic function},
journal = {Communications in Statistics - Theory and Methods},
volume = {53},
number = {22},
pages = {8009--8025},
year = {2024},
publisher = {Taylor \& Francis},
doi = {10.1080/03610926.2023.2278426},
URL = {https://doi.org/10.1080/03610926.2023.2278426}
}

@article{frazier2024impact,
  author  = {Frazier, David T. and Knoblauch, Jeremias and
             Drovandi, Christopher},
  title   = {The Impact of Loss Estimation on Gibbs Measures},
  journal = {arXiv preprint arXiv:2404.15649v1},
  year    = {2024}
}

@article{lie2018random,
  author  = {Lie, Han Cheng and Sullivan, T. J. and
             Teckentrup, A. L.},
  title   = {Random Forward Models and Log-Likelihoods in
             Bayesian Inverse Problems},
  journal = {SIAM/ASA Journal on Uncertainty Quantification},
  volume  = {6},
  number  = {4},
  pages   = {1600--1629},
  year    = {2018},
  doi     = {10.1137/18M1166523}
}

@article{sprungk2020local,
  author  = {Sprungk, Bj{\"o}rn},
  title   = {On the Local Lipschitz Stability of Bayesian
             Inverse Problems},
  journal = {Inverse Problems},
  volume  = {36},
  number  = {5},
  pages   = {055015},
  year    = {2020},
  doi     = {10.1088/1361-6420/ab6f43}
}

@article{quiroz2019speeding,
  author  = {Quiroz, Matias and Kohn, Robert and Villani, Mattias
             and Tran, Minh-Ngoc},
  title   = {Speeding Up {MCMC} by Efficient Data Subsampling},
  journal = {Journal of the American Statistical Association},
  volume  = {114},
  number  = {526},
  pages   = {831--843},
  year    = {2019},
  doi     = {10.1080/01621459.2018.1448827}
}

@article{yang25,
author = {Yu Yang and Matias Quiroz and Robert Kohn and Scott A. Sisson},
title = {{A Correlated Pseudo-Marginal Approach to Doubly Intractable Problems}},
journal = {Bayesian Analysis},
publisher = {International Society for Bayesian Analysis},
pages = {1 -- 30},
year = {2025},
doi = {10.1214/25-BA1573},
URL = {https://doi.org/10.1214/25-BA1573}
}

@article{quiroz21blockpoisson,
author = {Matias Quiroz and Minh-Ngoc Tran and Mattias Villani and Robert Kohn and Khue-Dung Dang},
title = {The Block-Poisson Estimator for Optimally Tuned Exact Subsampling MCMC},
journal = {Journal of Computational and Graphical Statistics},
volume = {30},
number = {4},
pages = {877--888},
year = {2021},
publisher = {Taylor \& Francis},
doi = {10.1080/10618600.2021.1917420}
}

\appendix

\section{Negative-Sign Probability Results}
\label{app:negative_sign_probability}

The Gaussian calculation below is used to plan the candidate grid, and the
subsequent convergence result relates it to finite blocks.  The implemented
selection rule nevertheless uses the direct empirical sign screen in
Algorithm~\ref{alg:adaptive_tuning}; the Gaussian calculation is not treated as
a finite-block guarantee.  These results concern the sign of the Bessel factor
under fresh auxiliary randomization and do not control either the target factor
\(w_M\) or the weighted sign \(R_M\).
The strategy here is similar to those of \citet{quiroz21blockpoisson} and \citet{yang25}, who derive tractable expressions for tuning by examining the Gaussian case. 

\begin{lemma}[Negative-sign probability under Gaussian blocks]
\label{lem:sign_stability}
Under the Gaussian block model of
Proposition~\ref{prop:exact_debiasing}, let
\[
    \tau^2=\beta_n^2\operatorname{Var}(\bar\ell)
    =\frac{\beta_n^2\sigma^2}{K},
    \qquad
    \nu=\frac{K-3}{2}.
\]
Then the negative-sign probability is
\begin{equation}
    q_{\rm G}^-(K,\tau^2)
    =
    \sum_{i=0}^{\infty}
    \left[
    \overline F_{\chi^2_{K-1}}
    \left(\frac{j_{\nu,2i+1}^2}{\tau^2}\right)
    -
    \overline F_{\chi^2_{K-1}}
    \left(\frac{j_{\nu,2i+2}^2}{\tau^2}\right)
    \right],
    \label{eq:negative_prob_series}
\end{equation}
with \(q_{\rm G}^-(K,0)=0\).
\end{lemma}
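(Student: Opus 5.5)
The plan is to reduce the sign of the corrected weight to the sign of a single Bessel function evaluated at a chi-square-distributed argument. Under the Gaussian block model the factor $\exp(-\beta_n\bar\ell)$ is strictly positive, so $\widehat Z<0$ if and only if $F_{\beta_n}(S_K^2,K)<0$.

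By the definition in Eq.~\eqref{eq:bessel_correction}, for $z_{\beta_n}>0$ the prefactors $\Gamma(\nu+1)$ and $(z/2)^{-\nu}$ are positive. Hence the sign of $F_{\beta_n}$ is the sign of $J_\nu(z_{\beta_n})$. At $z_{\beta_n}=0$ we have $F=1>0$. Next we write $z_{\beta_n}^2$ in terms of a standard variable. With $Y=(K-1)S_K^2/\sigma^2\sim\chi^2_{K-1}$, as in the proof of Proposition~\ref{prop:exact_debiasing},
\[
z_{\beta_n}^2=\beta_n^2\frac{(K-1)S_K^2}{K}=\frac{\beta_n^2\sigma^2}{K}\,Y=\tau^2 Y .
\]
So $q_{\rm G}^-(K,\tau^2)=\mathbb P\{J_\nu(\tau\sqrt Y)<0\}$.

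The main step is to describe the set $\{z>0:J_\nu(z)<0\}$ using the positive zeros $0<j_{\nu,1}<j_{\nu,2}<\cdots$. Here $\nu=(K-3)/2\ge-1/2$ because $K\ge2$. We use the following standard facts:
\begin{itemize}
\item For $\nu>-1$ the positive zeros of $J_\nu$ are simple, real and increase to infinity.
\item $J_\nu(z)>0$ on $(0,j_{\nu,1})$, by the small-$z$ behaviour $J_\nu(z)\sim (z/2)^\nu/\Gamma(\nu+1)$, or equivalently by Eq.~\eqref{eq:F-expansion-small-z}.
\item Simplicity of each zero means $J_\nu$ changes sign there.
\end{itemize}
It follows that $J_\nu<0$ exactly on the union of the open intervals $(j_{\nu,2i+1},j_{\nu,2i+2})$, $i\ge0$, and $J_\nu>0$ on $(j_{\nu,2i},j_{\nu,2i+1})$. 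The case $\nu=-1/2$ ($K=2$) should be checked separately. There $J_{-1/2}(z)=\sqrt{2/(\pi z)}\cos z$, whose zeros are $(m-\tfrac12)\pi$, consistent with the general statement. I expect this step to be the main obstacle: the real obstacle is citing simplicity and interlacing of Bessel zeros cleanly for non-integer and negative $\nu$. Rather than prove these properties, I would cite Watson's treatise.

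The result then follows by disjoint additivity:
\[
q_{\rm G}^-=\sum_{i\ge0}\mathbb P\{j_{\nu,2i+1}^2/\tau^2<Y<j_{\nu,2i+2}^2/\tau^2\}.
\]
Each term equals the difference of survival functions $\overline F_{\chi^2_{K-1}}$ in Eq.~\eqref{eq:negative_prob_series}, because $Y$ is continuous and the endpoints carry no mass. The series converges, since its terms are nonnegative and the partial sums are bounded by $1$; this also justifies countable additivity. The zero-weight event $\{J_\nu=0\}$ has probability zero for the same reason. Finally, if $\tau^2=0$ then $z\equiv0$ and $F\equiv1$, which gives $q_{\rm G}^-(K,0)=0$. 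This matches the limit of the series, since every survival term tends to $0$ as the arguments tend to infinity.
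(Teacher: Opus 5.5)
Your proposal is correct and follows the paper's proof. Both write $z^2=\tau^2 T$ with $T=(K-1)S_K^2/\sigma^2\sim\chi^2_{K-1}$, use that $J_\nu<0$ exactly on the intervals $(j_{\nu,2i+1},j_{\nu,2i+2})$, and sum the corresponding chi-square probabilities. You add details the paper leaves implicit: positivity of the prefactors in $F_{\beta_n}$, the sign change at each simple positive zero for $\nu\ge -1/2$, the $K=2$ check, and the null boundary events. Interlacing is not needed, and simplicity follows from uniqueness for Bessel's ODE at any regular point $z_0>0$, so you need not cite Watson for it.
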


\begin{corollary}[Gaussian sign screen]
\label{cor:operational_threshold}
Fix \(K\ge2\) and
\(\delta_{\rm sign}\in(0,1/2)\).  If
\begin{equation}
    \tau^2
    \le
    \tau_{\rm sign}^2(K,\delta_{\rm sign})
    \equiv
    \frac{j_{\nu,1}^2}
    {\chi^2_{K-1,1-\delta_{\rm sign}}},
    \qquad
    \nu=\frac{K-3}{2},
    \label{eq:app_gaussian_sign_screen}
\end{equation}
then
\(q_{\rm G}^-(K,\tau^2)\le\delta_{\rm sign}\).
\end{corollary}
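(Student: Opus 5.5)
The plan is to prove the corollary directly from the structure of the sign of the Bessel factor, rather than from the series in Lemma~\ref{lem:sign_stability}. Under the Gaussian block model, $z_{\beta_n}^2=\beta_n^2(K-1)S_K^2/K$. Since $(K-1)S_K^2/\sigma^2=Y\sim\chi^2_{K-1}$, this gives
\[
z_{\beta_n}^2=\frac{\beta_n^2\sigma^2}{K}\,Y=\tau^2Y .
\]
For $K\ge2$ we have $\nu=(K-3)/2\ge-1/2>-1$, so $J_\nu$ has a well-defined first positive zero $j_{\nu,1}$.

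\emph{Step 1.} Use the sign discussion following Eq.~\eqref{eq:bessel_correction}: $F_{\beta}$ is strictly positive for $0\le z_\beta<j_{\nu,1}$. This holds because every factor other than $J_\nu(z)$ is positive, and $J_\nu(z)>0$ on $(0,j_{\nu,1})$ since $z^{-\nu}J_\nu(z)\to 1/\{2^\nu\Gamma(\nu+1)\}>0$ as $z\to0^+$ and $j_{\nu,1}$ is the first zero. Hence the event $\{F_\beta<0\}$ is contained in $\{z_\beta\ge j_{\nu,1}\}$. When $\tau^2>0$ this gives
\[
q_{\rm G}^-(K,\tau^2)\le \mathbb P(\tau^2Y\ge j_{\nu,1}^2)=\overline F_{\chi^2_{K-1}}\!\left(j_{\nu,1}^2/\tau^2\right).
\]
The same bound also follows from Eq.~\eqref{eq:negative_prob_series}: the bracketed terms are nonnegative and telescope to at most the first tail, because the zeros increase.

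\emph{Step 2.} The hypothesis $\tau^2\le j_{\nu,1}^2/\chi^2_{K-1,1-\delta_{\rm sign}}$ is equivalent to
\[
j_{\nu,1}^2/\tau^2\ge\chi^2_{K-1,1-\delta_{\rm sign}},
\]
the upper-$\delta_{\rm sign}$ quantile. The survival function is nonincreasing and $\chi^2_{K-1}$ is continuous, so
\[
\overline F_{\chi^2_{K-1}}(j_{\nu,1}^2/\tau^2)\le\overline F_{\chi^2_{K-1}}(\chi^2_{K-1,1-\delta_{\rm sign}})=\delta_{\rm sign}.
\]
The case $\tau^2=0$ is immediate, because then $F_\beta\equiv1$ and $q_{\rm G}^-(K,0)=0$.

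The only point needing care is Step 1 at the boundary: the event $z=j_{\nu,1}$ gives $F=0$, not a negative value, and it has probability zero in any case, so the inclusion is harmless. The rest is routine monotonicity.
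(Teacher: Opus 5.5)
Your proof is correct and follows the same route as the paper. The paper also notes that a negative Bessel factor requires $z>j_{\nu,1}$, bounds $q_{\rm G}^-(K,\tau^2)$ by $\mathbb P(\chi^2_{K-1}>j_{\nu,1}^2/\tau^2)$, concludes from the quantile condition, and treats $\tau=0$ separately; your extra details (positivity of $J_\nu$ on $(0,j_{\nu,1})$ for $\nu>-1$, the null boundary event, and the alternative bound via the series in Lemma~\ref{lem:sign_stability}) are sound additions to that argument.
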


For the finite-block statements, write
\[
    Y_{B,k}(\theta)
    =
    \frac{\widehat\ell_B^{\;(k)}(\theta,\xi_{k})-\ell(\theta)}
    {\sigma_B(\theta)},
    \qquad
    \sigma_B^2(\theta)
    =
    \operatorname{Var}\{\widehat\ell_B^{\;(k)}(\theta,\xi_{k})\},
\]
and define
\[
    T_B(\theta)
    =
    \sum_{k=1}^K
    \{Y_{B,k}(\theta)-\bar Y_B(\theta)\}^2,
    \qquad
    \tau_B^2(\theta)
    =
    \frac{\beta_n^2\sigma_B^2(\theta)}{K}.
\]
Then the squared Bessel argument is
\(z_B^2=\tau_B^2T_B\).  Let
\[
    q_B^-(\theta)=\mathbb P\{J_\nu(z_B)<0\},
    \qquad \nu=(K-3)/2.
\]

\begin{theorem}[Sign convergence for MC-blocks]
\label{thm:sign_convergence}
Fix \(K\) and \(\theta\).  Suppose each block is an average of \(B\) iid
contributions from a fixed distribution with finite, non-zero variance, and the
\(K\) blocks are independent.  More generally, for a triangular array assume
the Lindeberg condition.  Assume also that, along the sequence considered,
\(\tau_B^2(\theta)\to\tau^2(\theta)<\infty\).  Then, at each fixed
\(\theta\),
\begin{equation}
    q_B^-(\theta)
    \longrightarrow
    q_{\rm G}^-\{K,\tau^2(\theta)\}.
    \label{eq:mc_negative_prob_convergence}
\end{equation}
\end{theorem}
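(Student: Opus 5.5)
The plan is a CLT plus continuous mapping argument. At fixed \(\theta\) the standardized block errors \(Y_{B,1},\ldots,Y_{B,K}\) are independent and mean zero with unit variance. Each \(Y_{B,k}\) is a normalized sum of \(B\) iid (or, in the triangular case, Lindeberg) terms. By the Lindeberg--Feller CLT and independence across the finitely many blocks,
\[
    (Y_{B,1},\ldots,Y_{B,K})\Rightarrow (G_1,\ldots,G_K),\qquad G_k\stackrel{\rm iid}{\sim}N(0,1).
\]
The map \(y\mapsto\sum_k(y_k-\bar y)^2\) is continuous, so \(T_B\Rightarrow T\equiv\sum_k(G_k-\bar G)^2\sim\chi^2_{K-1}\). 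Because \(\tau_B^2\to\tau^2\) is deterministic, Slutsky gives \(z_B^2=\tau_B^2T_B\Rightarrow \tau^2T\).

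Next I identify the limit. Under the Gaussian block model of Proposition~\ref{prop:exact_debiasing}, the squared Bessel argument is \(z^2=\beta_n^2(K-1)S_K^2/K=\tau^2\,(K-1)S_K^2/\sigma^2\), which has the law of \(\tau^2T\) with \(T\sim\chi^2_{K-1}\). So Lemma~\ref{lem:sign_stability} reads
\[
    q_{\rm G}^-(K,\tau^2)=\mathbb P\{J_\nu(\sqrt{\tau^2T})<0\}=\mathbb P\{\tau^2T\in N\},
\]
where \(N=\bigcup_{i\ge0}(j_{\nu,2i+1}^2,j_{\nu,2i+2}^2)\) is the open set on which \(J_\nu(\sqrt{\cdot})<0\). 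The event \(\{J_\nu(z_B)<0\}\) is exactly \(\{z_B^2\in N\}\).

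The conclusion then follows from the portmanteau theorem, which gives \(\mathbb P\{z_B^2\in N\}\to\mathbb P\{\tau^2T\in N\}\) provided \(\mathbb P\{\tau^2T\in\partial N\}=0\). The step needing care is this boundary condition, in two cases.

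\textbf{Case \(\tau^2>0\).} The boundary \(\partial N\) is contained in the countable, locally finite set of squared zeros \(\{j_{\nu,m}^2\}\). Here \(\tau^2T\) has a continuous distribution because \(K\ge2\) makes \(\chi^2_{K-1}\) atomless, so the boundary carries no mass and portmanteau applies. I will also check that \(z_B=0\) is harmless: \(F=1>0\) there, and \(0\notin N\).

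\textbf{Case \(\tau^2=0\).} Here the limit is the point mass at \(0\), and \(q_{\rm G}^-(K,0)=0\). Since \(N\subset(j_{\nu,1}^2,\infty)\), I argue directly:
\[
    q_B^-\le \mathbb P\{\tau_B^2T_B> j_{\nu,1}^2\}\to0,
\]
because \(T_B\) is tight (indeed \(\E T_B=K-1\)) and \(\tau_B^2\to0\).

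The ranges of \(\nu\) need separate attention. For \(K=2\), \(\nu=-1/2\) and \(J_{-1/2}(z)\propto\cos z/\sqrt z\), whose zeros are still simple and isolated. For \(K\ge3\), \(\nu\ge0\) and the standard facts apply: the positive zeros of \(J_\nu\) are simple and \(J_\nu\) changes sign at each one. These facts confirm that \(N\) is open with the stated boundary, and that Lemma~\ref{lem:sign_stability}'s series is exactly \(\mathbb P\{\tau^2T\in N\}\).
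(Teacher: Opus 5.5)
Your proposal is correct and follows essentially the same route as the paper: the CLT with independence across blocks gives joint normal convergence, continuous mapping and Slutsky give $z_B^2\Rightarrow\tau^2\chi^2_{K-1}$, and the portmanteau theorem applies because the boundary of the negative-sign set is the countable set of squared Bessel zeros. Your separate handling of $\tau^2=0$ and of $K=2$ tidies up the paper's proof rather than changing its route. In particular, the paper's assertion that the limiting law is continuous fails when $\tau^2=0$, though the conclusion still holds because the point mass at $0$ does not charge $\partial N$.
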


\begin{theorem}[Finite-block sign convergence for independently randomized RQMC]
\label{thm:rqmc_sign_convergence}
Fix \(K\) and \(\theta\).  Suppose the \(K\) RQMC blocks use independent randomizations and, for each \(k\),
\[
    \frac{\widehat\ell_B^{\;(k)}(\theta,\xi_{k})-\ell(\theta)}
    {\sigma_B(\theta)}
    \ \Rightarrow\ N(0,1),
\]
and \(\tau_B^2(\theta)\to\tau^2(\theta)<\infty\).  Then
Eq.~\eqref{eq:mc_negative_prob_convergence} holds.
The standardized RQMC central limit theorem is an assumption on the chosen
integrand and randomization; it is not implied by the variance rate alone.
\end{theorem}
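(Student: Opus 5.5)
The proof is a continuous-mapping argument, with Theorem~\ref{thm:sign_convergence} as the template. Write the squared Bessel argument as $z_B^2=\tau_B^2(\theta)T_B(\theta)$, where $T_B=\sum_k(Y_{B,k}-\bar Y_B)^2$ is a continuous function of the standardized block errors $(Y_{B,1},\ldots,Y_{B,K})$. The steps, in order, are as follows.

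\textbf{Step 1: joint convergence.} The hypothesis gives, for each $k$, the marginal limit $Y_{B,k}\Rightarrow N(0,1)$. The $K$ blocks use independent randomizations, so the joint law of $(Y_{B,1},\ldots,Y_{B,K})$ is the product of the marginals. Weak convergence of each marginal then gives joint weak convergence to $N(0,I_K)$; this is immediate from characteristic functions, since a product of convergent characteristic functions converges.

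\textbf{Step 2: continuous mapping.} Applying the continuous mapping theorem gives $T_B\Rightarrow T\sim\chi^2_{K-1}$. Since $\tau_B^2\to\tau^2<\infty$ is deterministic, Slutsky's lemma gives $z_B^2\Rightarrow\tau^2T$. This limit is exactly the law of $z^2$ under the Gaussian block model of Lemma~\ref{lem:sign_stability}, because there $z^2=\tau^2 Y$ with $Y\sim\chi^2_{K-1}$.

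\textbf{Step 3: the sign probability.} Write $q_B^-(\theta)=\mathbb P\{z_B^2\in A\}$, where
\[
  A=\{x\ge0:J_\nu(\sqrt x)<0\}=\bigcup_{i\ge0}\bigl(j_{\nu,2i+1}^2,\;j_{\nu,2i+2}^2\bigr).
\]
The boundary of $A$ consists of the squared zeros $j_{\nu,m}^2$, which form a countable set with no finite accumulation point. By the portmanteau theorem it suffices that the limit law of $\tau^2 T$ puts no mass on $\partial A$.
\begin{itemize}
  \item If $\tau^2>0$, the limit $\tau^2T$ has a density, so $\partial A$ is null and $q_B^-\to\mathbb P\{\tau^2T\in A\}$. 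This is the series~\eqref{eq:negative_prob_series}.
  \item If $\tau^2=0$, the limit is the point mass at $0$. The point $0$ is not in $\partial A$, because $0<j_{\nu,1}^2$. Hence $q_B^-\to0=q_{\rm G}^-(K,0)$.
\end{itemize}

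\textbf{Main obstacle.} I expect Step 3 to need the most care. Three points have to be checked. The first is that $J_\nu$ changes sign at each of its positive zeros, so that $A$ really is the displayed union of intervals. The second is the degenerate case $\tau^2=0$ with the convention $F=1$ at $z=0$. The third is the small-$K$ cases: for $K=2$ we have $\nu=-1/2$ and $J_{-1/2}(z)\propto\cos z/\sqrt z$, whose zeros are still simple and isolated, so the same argument goes through.

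The RQMC feature that points within a block are dependent causes no difficulty. That dependence is absorbed entirely into the assumed standardized central limit theorem, and only independence across blocks is used in Step 1.
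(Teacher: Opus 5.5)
Your proof is correct and follows the paper's argument essentially step for step: independence across blocks upgrades the marginal CLTs to $N_K(0,I_K)$, continuous mapping plus the deterministic limit $\tau_B^2\to\tau^2$ gives $z_B^2\Rightarrow\tau^2\chi^2_{K-1}$, and the portmanteau theorem applies because the boundary of the negative-sign set is the countable set of squared Bessel zeros. Your separate treatment of $\tau^2=0$ is a small improvement. In that case the limit is the point mass at $0$, which lies outside the closure of the negative set because $0<j_{\nu,1}^2$. The paper's proof instead asserts that the limiting law is continuous, which fails in this degenerate case.
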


\begin{corollary}[Uniform sign control]
\label{cor:uniform_sign}
Let \(\mathcal O\) be a set of states and suppose
\[
    \sup_{\theta\in\mathcal O}
    \left|
    q_B^-(\theta)-q_{\rm G}^-\{K,\tau^2(\theta)\}
    \right|
    \longrightarrow0.
\]
If
\[
    \sup_{\theta\in\mathcal O}\tau^2(\theta)
    \le
    \tau_{\rm sign}^2(K,\delta_{\rm sign})
\]
for some \(\delta_{\rm sign}<1/2\), then
\[
    \sup_{\theta\in\mathcal O}q_B^-(\theta)
    \le
    \delta_{\rm sign}+o(1).
\]
Hence the fixed-state expected sign is uniformly positive for sufficiently
large \(B\) when the zero-weight probabilities are uniformly negligible.
\end{corollary}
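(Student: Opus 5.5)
The statement to prove is Corollary~\ref{cor:uniform_sign}. The plan is a direct combination of the uniform convergence hypothesis with the Gaussian sign screen of Corollary~\ref{cor:operational_threshold}, applied pointwise in \(\theta\) and then taking suprema.

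First, fix \(\theta\in\mathcal O\). By hypothesis \(\tau^2(\theta)\le\tau_{\rm sign}^2(K,\delta_{\rm sign})\). Corollary~\ref{cor:operational_threshold} then gives \(q_{\rm G}^-\{K,\tau^2(\theta)\}\le\delta_{\rm sign}\). That corollary applies to any \(\tau^2\) satisfying Eq.~\eqref{eq:app_gaussian_sign_screen}, and it needs only that \(\delta_{\rm sign}\in(0,1/2)\). If \(\delta_{\rm sign}\le0\) were allowed, the statement would be vacuous or require \(\tau^2=0\); there \(q_{\rm G}^-=0\) by Lemma~\ref{lem:sign_stability}.

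Next, write
\[
q_B^-(\theta)\le q_{\rm G}^-\{K,\tau^2(\theta)\}+\bigl|q_B^-(\theta)-q_{\rm G}^-\{K,\tau^2(\theta)\}\bigr|\le\delta_{\rm sign}+\epsilon_B,
\]
where
\[
\epsilon_B=\sup_{\theta\in\mathcal O}\bigl|q_B^-(\theta)-q_{\rm G}^-\{K,\tau^2(\theta)\}\bigr|.
\]
The quantity \(\epsilon_B\) does not depend on \(\theta\) and tends to zero by assumption. Taking the supremum over \(\mathcal O\) gives
\[
\sup_{\theta\in\mathcal O}q_B^-(\theta)\le\delta_{\rm sign}+\epsilon_B=\delta_{\rm sign}+o(1).
\]

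For the final sentence, recall from Section~\ref{sec:sign_stability} that the fixed-state expected sign equals \(1-2q_B^-(\theta)-p_B^0(\theta)\), where \(p_B^0(\theta)\) is the zero-weight probability. Here a negative Bessel factor is equivalent to a negative weight, since the exponential factor is positive. Interpret ``uniformly negligible'' as \(\sup_{\mathcal O}p_B^0\to0\). Then the expected sign is at least \(1-2\delta_{\rm sign}-2\epsilon_B-\sup_{\mathcal O}p_B^0\). Because \(\delta_{\rm sign}<1/2\), the constant \(1-2\delta_{\rm sign}\) is strictly positive. Choose \(B_0\) so that the vanishing terms sum to less than half of this constant for all \(B\ge B_0\). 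This gives a uniform positive lower bound \((1-2\delta_{\rm sign})/2\).

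There is no real obstacle. The only point needing care is bookkeeping in the last step: identifying the sign of \(\widehat Z_M\) with that of \(J_\nu(z_B)\), and accounting for zero weights, including the case \(z_B=0\) where \(F=1\). After that, uniformity follows from the \(\theta\)-free error bound.
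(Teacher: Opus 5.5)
Your proposal is correct and follows the same route as the paper. The paper's proof is the one-line combination of the uniform approximation hypothesis with Corollary~\ref{cor:operational_threshold}; you have written that combination out in full, including the $\theta$-free error $\epsilon_B$ and the expected-sign identity $1-2q_B^--p_B^0$ needed for the final positivity claim.
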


\begin{proof}[Proof of Lemma~\ref{lem:sign_stability}]
Under Gaussian blocks,
\[
    T=\frac{(K-1)S_K^2}{\sigma^2}\sim\chi^2_{K-1},
    \qquad
    z^2=\tau^2T.
\]
The function \(J_\nu\) is negative on
\((j_{\nu,2i+1},j_{\nu,2i+2})\), \(i\ge0\).  Summing the corresponding
chi-square probabilities gives Eq.~\eqref{eq:negative_prob_series}.
\end{proof}

\begin{proof}[Proof of Corollary~\ref{cor:operational_threshold}]
For \(\tau=0\), the result is immediate.  For \(\tau>0\), a negative
Bessel factor requires \(z>j_{\nu,1}\).  Therefore
\[
    q_{\rm G}^-(K,\tau^2)
    \le
    \mathbb P\!\left(
        \chi^2_{K-1}>\frac{j_{\nu,1}^2}{\tau^2}
    \right),
\]
and Eq.~\eqref{eq:app_gaussian_sign_screen} makes the right-hand side at most
\(\delta_{\rm sign}\).
\end{proof}

\begin{proof}[Proofs of Theorems~\ref{thm:sign_convergence}
and~\ref{thm:rqmc_sign_convergence}]
For MC-blocks, the univariate central limit theorem and independence of the
blocks give
\((Y_{B,1},\ldots,Y_{B,K})\Rightarrow N_K(0,I_K)\).
For RQMC, the stated one-dimensional convergence and independent
randomizations give the same joint limit.  Hence
\(T_B\Rightarrow\chi^2_{K-1}\), and
\(z_B^2=\tau_B^2T_B\Rightarrow\tau^2\chi^2_{K-1}\).
The boundary of the set on which \(J_\nu(z)<0\) is the countable collection of
Bessel zeros.  The limiting distribution is continuous and assigns this
boundary probability zero.  The portmanteau theorem therefore gives
Eq.~\eqref{eq:mc_negative_prob_convergence}.
\end{proof}

The uniform corollary follows by combining the uniform approximation with
Corollary~\ref{cor:operational_threshold}.

\section{Proof of Lemma~\ref{lem:unified-weight-expansion} and Corollary~\ref{cor:method-rates}}
\label{app:convergence}
In this Appendix we give convergence rates for the uncorrected and corrected variants of both MC-SCBD and RQ-SCBD when the inverse temperature $\beta_n$ and block size $B_n$ vary with $n$.  
We do not separately consider cases where the block-error variance is zero, so
$v_{B}(\theta)=0$. Since the block error is centered,
$\X_{k}(\theta)=0$ almost surely in that case, so all cumulants
vanish and the bounds below hold trivially.

\subsection{Bounds for truncation errors}

\begin{lemma}[Uniform cumulant and Taylor bounds]
\label{lem:cumulant-envelope}
Under Assumption~\ref{ass:unified-analytic}, there is a constant $C<\infty$ such that, uniformly in $n$, $B$, $\theta\in\mathcal O_n$, and $j\geq2$,
\begin{equation}
  \frac{\abs{\chi_{j,B}(\theta)}}{j!}
  \leq C v_{B}(\theta)q_{B}(\theta)^{j-2}\eta^{-j}.
  \label{eq:cumulant-envelope-a}
\end{equation}
For each fixed integer $m\geq3$, whenever
$\abs zq_{B}(\theta)\leq\eta/2$,
\begin{equation}
  \abs{
  \psi_{B,\theta}(z)
  -\sum_{j=2}^{m-1}
  \frac{\chi_{j,B}(\theta)}{j!}z^j}
  \leq C_m v_{B}(\theta)q_{B}(\theta)^{m-2}\abs z^m.
  \label{eq:cgf-tail-a}
\end{equation}
\end{lemma}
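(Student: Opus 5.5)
The plan is to use Cauchy's estimates for the standardized cgf $\varphi_{B,\theta}$ and then transfer them to $\psi_{B,\theta}$ through the factorization \eqref{eq:common-cgf-factorization}. By Assumption~\ref{ass:unified-analytic}, $\varphi_{B,\theta}$ is analytic on a neighbourhood of the closed disc $\{|t|\le\eta\}$ and bounded there by $C_\varphi$, uniformly in $n$, $B$ and $\theta\in\mathcal O_n$. Cauchy's inequality on the circle $|t|=\eta$ then gives
\[
  \frac{|\varphi^{(j)}_{B,\theta}(0)|}{j!}\le C_\varphi\eta^{-j}
  \qquad\text{for all } j\ge0 .
\]
Combining this with the cumulant rescaling \eqref{eq:cumulant-rescaling}, $\chi_{j,B}=v_Bq_B^{j-2}\varphi^{(j)}_{B,\theta}(0)$, yields \eqref{eq:cumulant-envelope-a} immediately with $C=C_\varphi$. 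The case $j=2$ is consistent, since $\varphi''_{B,\theta}(0)=1$ forces $1/2\le C_\varphi\eta^{-2}$. No separate argument for the zero-variance case is needed, as the paper has already set it aside.

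For the truncation bound \eqref{eq:cgf-tail-a}, I will write $t=q_Bz$. The power series of $\varphi_{B,\theta}$ converges on $|t|<\eta$, and on $|t|\le\eta/2$ it gives
\[
  \varphi_{B,\theta}(t)-\sum_{j=2}^{m-1}\frac{\varphi^{(j)}_{B,\theta}(0)}{j!}t^j
  =\sum_{j\ge m}\frac{\varphi^{(j)}_{B,\theta}(0)}{j!}t^j,
\]
where the $j=0,1$ terms vanish because $\varphi_{B,\theta}(0)=\varphi'_{B,\theta}(0)=0$. Using the Cauchy bounds, the tail is at most
\[
  C_\varphi\,|t|^m\eta^{-m}\sum_{i\ge0}2^{-i}=2C_\varphi\eta^{-m}|t|^m .
\]
Multiplying by $v_B/q_B^2$ and substituting $t=q_Bz$ gives, by \eqref{eq:common-cgf-factorization} and \eqref{eq:cumulant-rescaling},
\[
  \psi_{B,\theta}(z)-\sum_{j=2}^{m-1}\frac{\chi_{j,B}}{j!}z^j
  =\frac{v_B}{q_B^2}\Bigl\{\varphi_{B,\theta}(q_Bz)-\sum_{j=2}^{m-1}\frac{\varphi^{(j)}_{B,\theta}(0)}{j!}(q_Bz)^j\Bigr\}.
\]
Its modulus is therefore at most $2C_\varphi\eta^{-m}v_Bq_B^{m-2}|z|^m$, which is \eqref{eq:cgf-tail-a} with $C_m=2C_\varphi\eta^{-m}$.

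The only delicate point is justifying that $\psi_{B,\theta}$, defined as a log-mgf, agrees with $(v_B/q_B^2)\varphi_{B,\theta}(q_Bz)$ as an analytic function on the complex region $|q_Bz|\le\eta/2$, rather than just at real $z$. Since the assumption posits the analytic extension of $\varphi_{B,\theta}$ itself, I will take $\psi_{B,\theta}$ to be defined on that region by the factorization. The expansion is then used only at real $z=-\beta_n/K$ in Lemma~\ref{lem:unified-weight-expansion}, where the identity holds by definition. I also need $q_B>0$, and this holds whenever $v_B>0$ for both MC-blocks and RQ-blocks.
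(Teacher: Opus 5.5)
Your proof is correct and is essentially the paper's argument: Cauchy's estimate on $|t|=\eta$ fed into \eqref{eq:cumulant-rescaling}, then a geometric tail sum. You sum the tail of $\varphi_{B,\theta}$ and rescale by $v_B/q_B^2$, whereas the paper sums $\sum_{j\ge m}|\chi_{j,B}||z|^j/j!$ directly; this is the same computation and gives the same constant $2C_\varphi\eta^{-m}$.

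One correction to your final paragraph: the paper also applies \eqref{eq:cgf-tail-a} at the complex points $z=\beta c_k(u)$ in \eqref{eq:w-spherical-cgf}, not only at real $z=-\beta_n/K$. So defining $\psi_{B,\theta}$ by the factorization needs a short remark. The complex mgf $\E e^{zX_1}$ is analytic on the vertical strip where the real mgf is finite, and it equals $\exp\{(v_B/q_B^2)\varphi_{B,\theta}(q_Bz)\}$ on the real segment. By the identity theorem the two therefore agree on $|q_Bz|\le\eta/2$.
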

\begin{proof}
Cauchy's estimate $|\varphi^{(j)}_{B,\theta}(0)|/j!\leq C_\varphi\eta^{-j}$ applied to \eqref{eq:cumulant-rescaling} on the disc of radius $\eta$ proves \eqref{eq:cumulant-envelope-a}.  For \eqref{eq:cgf-tail-a}, put
$x=\abs zq_{B}(\theta)/\eta\leq1/2$ and sum the resulting geometric tail:
\begin{align*}
  \sum_{j=m}^{\infty}
  \frac{\abs{\chi_{j,B}(\theta)}}{j!}\abs z^j
  &\leq
  C v_{B}(\theta)\abs z^2\eta^{-2}
  \sum_{j=m}^{\infty}x^{j-2}
  \\
  &\leq
  2C\eta^{-m}v_{B}(\theta)q_{B}(\theta)^{m-2}\abs z^m.\\[-0.4in]
\end{align*}
\end{proof}
The bound in \eqref{eq:cgf-tail-a} is used below at
$
  z=-{\beta}/{K}
$
in \eqref{eq:u-cum-bound} and
$
  z=\beta c_k(u)
$
in \eqref{eq:w-spherical-cgf}.
Since $K$ is fixed and the coefficients $c_k(u)$ are uniformly
bounded, the condition $\beta q=o(1)$ ensures that
$|z|q\leq\eta/2$ in both cases, uniformly for all
sufficiently large $n$, so the lemma applies.

\subsection{Proof of Lemma~\ref{lem:unified-weight-expansion}}
\label{app:cumulant-expansions}
\noindent{\small {\bf Lemma~\ref{lem:unified-weight-expansion}} (Cumulant expansions) \it
Fix $K\geq2$ and suppose Assumption~\ref{ass:unified-analytic} holds.  Write
\begin{equation*}
  \beta=\beta_n,
  \qquad v=v_{B}(\theta),
  \qquad q=q_{B}(\theta),
  \qquad \chi_3=\chi_{3,B}(\theta).
\end{equation*}
The following expansions hold uniformly in $\theta\in\mathcal O_n$.
\begin{enumerate}[label=(\alph*)]
\item If $\beta q=o(1)$ and $\beta^2v=o(1)$ uniformly, then
\begin{equation}
  u_{n,M}(\theta)
  =
  1+\frac{\beta^2v}{2K}
  +R^{(u)}_{n,B}(\theta),
  \quad
  \abs{R^{(u)}_{n,B}(\theta)}
  \leq
  C\left\{
    \beta^3vq+(\beta^2v)^2
  \right\}.
  \label{eq:u-variance-expansion-a}
\end{equation}

\item If $\beta q=o(1)$ and $\beta^3vq=o(1)$ uniformly, then
\begin{equation}\label{eq:w-full-expansion}
  w_{n,M}(\theta)
  =1+\frac{\beta^3\chi_3}{3K^2}
  +R^{(w)}_{n,B}(\theta),
  \quad
  \abs{R^{(w)}_{n,B}(\theta)}
  \leq C\left\{\beta^4vq^2+(\beta^3vq)^2\right\}.
\end{equation}
\end{enumerate}
The constant $C$ depends only on $K$, $\eta$, and $C_\varphi$.
}

\begin{proof}
\noindent\textit{(1) Uncorrected factor.}
For Lemma~\ref{lem:cumulant-envelope} to apply at
$z=-\beta/K$, we require
$
  {\beta q}/{K}\leq {\eta}/{2},
$
which holds for all sufficiently large $n$ as
$\beta q=o(1)$. Applying Lemma~\ref{lem:cumulant-envelope}
with $m=3$ gives
\begin{equation}\label{eq:u-cum-bound}
  \log u_{n,M}(\theta)
  =
  K\psi\left(-\frac{\beta}{K}\right)
  =
  \frac{\beta^2v}{2K}
  +\mathcal O(\beta^3vq).
\end{equation}
Since
$
  \beta^3vq
  =
  (\beta^2v)(\beta q),
$
the right-hand side is uniformly of order
$\beta^2v$ and tends to zero under the stated conditions.
The expansion
$
  \exp(x)=1+x+\mathcal O(x^2)
$
is valid and gives \eqref{eq:u-variance-expansion-a}.\\

\noindent\textit{(2) Corrected factor, spherical representation.} The debiasing function $F_\beta$ in \eqref{eq:bessel_correction} has the form of the characteristic function of a random variable uniform on a unit sphere \citep[Lemma~2.1]{yin24besselsphere}. This is actually the spherical form of a Laplace-transform identity underlying Proposition~\ref{prop:exact_debiasing}, taken on a complex domain. Let
\begin{equation*}
  H=\left\{u\in\mathbb R^K:\sum_{k=1}^Ku_k=0\right\}.
  \qquad
  \mathbb S_H=\{u\in H:\norm u=1\}.
\end{equation*}
and let $\sigma_H$ be normalized surface measure on $\mathbb S_H$.
Let \(X=X_B=(\X_{1},\ldots,\X_{K})\in\mathbb R^K\) be the vector of scalar block errors defined above; its dimension is \(K\), irrespective of the dimension \(d\) of the randomized input. Then
\begin{equation}
  \int_{\mathbb S_H}
  \exp\left\{
    \frac{i\beta}{\sqrt K}\sum_{k=1}^Ku_k\X_{k}
  \right\}\sigma_H(\dd u)=F_\beta(S^2,K)
  \label{eq:bessel-spherical}
\end{equation}
for $F_\beta(S^2,K)$ in \eqref{eq:bessel_correction}. We now show this. If \(U\) is uniform on \(\mathbb S^{m-1}\subset\mathbb R^m\) then its characteristic function \citep{schoenberg38,yin24besselsphere} is
\begin{equation}\label{eq:bessel-char-identity}
\E\!\left[e^{i t^\top U}\right]
=
\Gamma\left(\frac m2\right)
\left(\frac{\|t\|}{2}\right)^{-m/2+1}
J_{m/2-1}(\|t\|), \qquad \mbox{$t\in \mathbb R^m$}.
\end{equation}
We want to choose $m$ and $t$ so that the right hand side is $F_\beta(s^2,K)$ in \eqref{eq:bessel_correction}. If we take $m=K-1$ we get $m/2-1=\nu$ in \eqref{eq:bessel-arguments-nu-z}. We therefore need $U$ uniform on $\mathbb S^{m-1}=\mathbb S^{K-2}$ and $\|t\|=z_\beta(S^2,K)$.

We get both of these by taking $t=\beta (X-A\bm1)/\sqrt{K}$ and $u\in \mathbb S_H$. The subspace $H$ has dimension $K-1$, so its unit sphere is
isometric to $\mathbb S^{K-2}$. If $Q$ is a $K\times K-1$ matrix with orthonormal columns spanning $H$, then $y\mapsto Qy$ is an isometry from $\mathbb R^{K-1}$ onto $H$ and carries uniform surface measure on \(\mathbb S^{K-2}\) (embedded in $\mathbb R^{K-1}$) to $\sigma_H$ on \(\mathbb S_H\) (also $K-2$ dimensional but embedded in $\mathbb R^{K}$). Now consider $u^TX$: because $\sum_k u_k=0$, we have $u^TX=u^T(X-A\bm1)$, and $\|(X-A\bm1)\|^2=(K-1)S^2$. It follows on the one hand that replacing $(X-A\bm1)$ with $X$ to give \eqref{eq:bessel-spherical} changes nothing and on the other hand
$\|t\|=\beta\,\sqrt{(K-1)S^2/K}$ so $\|t\|=z_\beta(S^2,K)$ and \eqref{eq:bessel-char-identity} becomes \eqref{eq:bessel-spherical}.\\

\noindent\textit{(2 continued) Corrected factor, expansion.} From the definition of $w_{n,M}$ in \eqref{eq:wM_definition} and $A_B$ in Section~\ref{sec:tv-rate-gain-debias},
we have $w_{n,M}=\E_X\{\exp(-\beta A_B)F_\beta\,\}$. Using the $F_\beta$ representation in \eqref{eq:bessel-spherical} we get
\begin{align}
w_{n,M}(\theta)
&=
\E_X
\left[
\int_{\mathbb S_H}
\exp\left\{-\beta \frac{1}{K}\sum_{k=1}^K \X_{k}+
\frac{i\beta}{\sqrt K}\sum_{k=1}^K u_k\X_{k}
\right\}
\sigma_H(du)
\right],\nonumber\\
\intertext{exchanging integrals and using $\E_X \exp(z\X_{k})=\exp\{\psi(z)\}$ with independent blocks,}
&=\int_{\mathbb S_H}
  \exp\left\{\sum_{k=1}^K\psi\{\beta c_k(u)\}\right\}
  \sigma_H(\dd u),
  \qquad
  c_k(u)=-\frac1K+\frac{i}{\sqrt K}u_k,
  \label{eq:w-spherical-cgf}
\end{align}
Fubini applies as the complex exponential has modulus one, and $\sigma_H$ is normalised so $\E_X e^{-\beta A}=u_{n,M}<\infty$ is sufficient. However, $\beta q=o(1)$ puts \(-\beta/K\) inside the analytic region in Assumption~\ref{ass:unified-analytic}, so $u_{n,M}=\exp\left\{
K\psi\left(-{\beta}/{K}\right)\right\}<\infty$.

Put $G(u)=\sum_{k=1}^K\psi\{\beta c_k(u)\}$ in \eqref{eq:w-spherical-cgf} and let $Q_j(u)=\sum_{k=1}^Kc_k(u)^j$. Expand the cgf,
\begin{equation*}
  G(u)=\frac{\beta^2\chi_2}{2}Q_2(u)+\frac{\beta^3\chi_3}{6}Q_3(u)+T_4(u).
\end{equation*}
The identities $\sum_ku_k=0$ and $\sum_ku_k^2=1$ give
\begin{equation*}
  Q_2(u)
  =\frac1K-\frac{2i}{K\sqrt K}\sum_{k=1}^Ku_k
  -\frac1K\sum_{k=1}^Ku_k^2
  =0
\end{equation*}
and
\begin{equation*}
  Q_3(u)
  =\frac{2}{K^2}
  -\frac{i}{K^{3/2}}\sum_{k=1}^Ku_k^3.
\end{equation*}
For $C_K=K^{-1}+K^{-1/2}$, $\abs{c_k(u)}\leq C_K$ and Lemma~\ref{lem:cumulant-envelope} gives, when $\beta qC_K/\eta\leq1/2$,
\begin{align*}
  \sup_{u\in\mathbb S_H}\abs{T_4(u)}
  &\leq
  K\sum_{j=4}^{\infty}
  \frac{\abs{\chi_j}\beta^j}{j!}C_K^j
  \\
  &\leq
  Cv\beta^2\sum_{j=4}^{\infty}
  (\beta qC_K/\eta)^{j-2}
  \leq C\beta^4vq^2.
\end{align*}
The cubic term is uniformly $\cO(\beta^3vq)$ so $G(u)$ is $o(1)$ under the stated smallness conditions and we can expand
$\exp\{G(u)\}=1+G(u)+O\{|G(u)|^2\}$ and take the expectation in $U\sim \sigma_H$ to get
\begin{align*}
  w_{n,M}(\theta)
  &=1+\frac{\beta^3\chi_3}{6}
  \int_{\mathbb S_H}Q_3(u)\,\sigma_H(\dd u)
  +\cO\left\{\beta^4vq^2+(\beta^3vq)^2\right\}
  \\
  &=1+\frac{\beta^3\chi_3}{3K^2}
  +\cO\left\{\beta^4vq^2+(\beta^3vq)^2\right\},
\end{align*}
where $\E_U Q_3(U)=2/K^2$ by symmetry of $\mathbb S_H$,
which proves the result.
\end{proof}


\subsection{Variation bounds}
\label{app:convergence-TV-bounds-preamble}
Let
$\displaystyle
  \norm{g}_{\infty,n}
  \equiv\esssup_{\theta\sim\pi_n}\abs{g(\theta)}
$
and for $B=B_n$, write
\begin{equation*}
  v_n(\theta)=v_{B_n}(\theta),
  \qquad
  q_n(\theta)=q_{B_n}(\theta),
  \qquad
  \chi_{3,n}(\theta)=\chi_{3,B_n}(\theta),
  \qquad
  M_n=KB_n,
\end{equation*}

\begin{corollary}[Unified total-variation expansions]
\label{cor:unified-TV}
Under Assumption~\ref{ass:unified-analytic}:
\begin{enumerate}[label=(\alph*)]
\item If $\norm{\beta_nq_n}_{\infty,n}\to0$ and
$\norm{\beta_n^2v_n}_{\infty,n}\to0$, then
\begin{align*}
  \norm{\pi^{(0)}_{n,M_n}-\pi_n}_{\TV}
  &=\frac{\beta_n^2}{4K}
  \E_{\pi_n}\abs{v_n-\E_{\pi_n}v_n}
  +
  \cO\left
  (\norm{\beta_n^3v_nq_n}_{\infty,n}
  +\norm{\beta_n^2v_n}_{\infty,n}^2\right)
  \\
  &=\cO\left(\norm{\beta_n^2v_n}_{\infty,n}\right)
  \longrightarrow0.
\end{align*}

\item If $\norm{\beta_nq_n}_{\infty,n}\to0$ and
$\norm{\beta_n^3v_nq_n}_{\infty,n}\to0$, then
$w_{n,M_n}>0$ $\pi_n$-almost everywhere for all sufficiently large $n$, and
\begin{align*}
  \norm{\pi_{n,M_n}-\pi_n}_{\TV}
  &=\frac{\beta_n^3}{6K^2}
  \E_{\pi_n}\abs{\chi_{3,n}-\E_{\pi_n}\chi_{3,n}}
  +
  \cO\left
  (\norm{\beta_n^4v_nq_n^2}_{\infty,n}
  +\norm{\beta_n^3v_nq_n}_{\infty,n}^2\right)
  \\
  &=\cO\left(\norm{\beta_n^3v_nq_n}_{\infty,n}\right)
  \longrightarrow0.
\end{align*}
\end{enumerate}
\end{corollary}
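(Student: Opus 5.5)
The plan is to convert the pointwise expansions of Lemma~\ref{lem:unified-weight-expansion} into total-variation statements with one elementary identity. Let $g$ be a positive $\pi_n$-integrable weight and $\pi_g=\pi_n g/\E_{\pi_n}g$. Then
\[
  \norm{\pi_g-\pi_n}_{\TV}=\tfrac12\E_{\pi_n}\abs{\bar g-1},
  \qquad \bar g=g/\E_{\pi_n}g.
\]
Suppose $g=1+a+R$ with $\norm{a}_{\infty,n}\to0$ and $\norm{R}_{\infty,n}\le\epsilon_n$. Then $\E_{\pi_n}g=1+\E_{\pi_n}a+\cO(\epsilon_n)$. Writing $\bar g-1=\{a-\E_{\pi_n}a+R-\E_{\pi_n}R\}/\E_{\pi_n}g$ gives
\[
  \tfrac12\E_{\pi_n}\abs{\bar g-1}
  =\tfrac12\E_{\pi_n}\abs{a-\E_{\pi_n}a}
  +\cO\bigl(\epsilon_n+\norm{a}_{\infty,n}^2\bigr).
\]
The error term absorbs both the perturbation $\abs{R-\E R}$ and the denominator correction $1/(1+\E a+\cdots)=1+\cO(\norm a_\infty)$ multiplying a quantity of size $\cO(\norm a_\infty)$. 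Uniformity over $\theta\in\mathcal O_n$, a $\pi_n$-full set, lets every pointwise bound pass into $\esssup$ and into expectations.

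For part (a), take $g=u_{n,M_n}$ with $a=\beta_n^2v_n/(2K)$ and $R=R^{(u)}$. The hypotheses $\norm{\beta_nq_n}_{\infty,n}\to0$ and $\norm{\beta_n^2v_n}_{\infty,n}\to0$ are exactly those of Lemma~\ref{lem:unified-weight-expansion}(a) in uniform form. We then get $\epsilon_n=C(\norm{\beta_n^3v_nq_n}_\infty+\norm{\beta_n^2v_n}_\infty^2)$, and $\norm a_\infty^2$ is of the same order as the second piece. Positivity and integrability of $u$ are automatic: $u\ge1$, and it is bounded once the remainder is small. The leading term is $\frac12\cdot\frac{\beta_n^2}{2K}\E\abs{v_n-\E v_n}=\frac{\beta_n^2}{4K}\E\abs{v_n-\E v_n}$. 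For the final $\cO(\norm{\beta_n^2v_n}_\infty)$ claim, use $\E\abs{v-\E v}\le2\norm v_\infty$ and $\beta^3vq=(\beta^2v)(\beta q)$ with $\beta q\to0$.

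For part (b), take $g=w_{n,M_n}$ with $a=\beta_n^3\chi_{3,n}/(3K^2)$. By Lemma~\ref{lem:cumulant-envelope}, $\abs{\chi_3}\le Cvq$, so $\norm a_\infty=\cO(\norm{\beta_n^3v_nq_n}_\infty)\to0$. Also $\beta^4vq^2=(\beta^3vq)(\beta q)=o(\beta^3vq)$, so $\norm{R^{(w)}}_\infty\to0$. Hence $w_{n,M_n}\ge1-o(1)>0$ $\pi_n$-a.e. for large $n$, which gives Assumption~\ref{ass:finite_block_target} and $0<\E_{\pi_n}w<\infty$. The identity above applies with $\bar w_M=d\pi_{n,M}/d\pi_n$ from \eqref{eq:posterior_normalized_factor}. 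The leading term becomes $\frac12\cdot\frac{\beta_n^3}{3K^2}\E\abs{\chi_3-\E\chi_3}$, and the error is $\cO(\norm{\beta^4vq^2}_\infty+\norm{\beta^3vq}_\infty^2)$, with $\norm a^2_\infty$ absorbed into the second term. The final order follows as in (a).

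The main obstacle is the bookkeeping in the perturbation identity. We must check that the normalising-constant correction only produces second-order terms, i.e. that it multiplies $\E\abs{a-\E a}$, which is itself $\cO(\norm a_\infty)$, rather than creating a new first-order term. We must also check that Lemma~\ref{lem:unified-weight-expansion}'s ``uniform $o(1)$'' hypotheses are implied by the sup-norm conditions stated here, with the same constant $C$ for all large $n$. Both should follow directly from the uniformity of Assumption~\ref{ass:unified-analytic}.
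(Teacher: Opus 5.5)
Your proposal is correct and follows essentially the same route as the paper. Both use the exact identity $\norm{\pi^f_n-\pi_n}_{\TV}=\frac12\E_{\pi_n}\abs{f_n/\E_{\pi_n}f_n-1}$, expand the normalised weight to first order (the normalising-constant correction is second order), substitute the expansions of Lemma~\ref{lem:unified-weight-expansion} with $\abs{\chi_{3,n}}\le Cv_nq_n$, and finish with the product bounds $\beta^3vq=(\beta^2v)(\beta q)$ and $\beta^4vq^2=(\beta^3vq)(\beta q)$. Your explicit check that $w_{n,M_n}\ge1-o(1)$ supplies the positivity claim in part (b), which the paper's proof leaves implicit.
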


\begin{proof}
For either target, the exact identity is
\begin{equation}
  \norm{\pi^f_n-\pi_n}_{\TV}
  =\frac12\E_{\pi_n}
  \abs{\frac{f_n}{\E_{\pi_n}f_n}-1},
  \qquad
  \pi^f_n(\dd\theta)
  =\frac{f_n(\theta)}{\E_{\pi_n}f_n}\pi_n(\dd\theta).
\end{equation}
If $f_n=1+a_n+r_n$ with
$\norm{a_n}_{\infty,n}+\norm{r_n}_{\infty,n}=o(1)$, then
\begin{equation}
  \begin{aligned}\frac{f_n}{\E_{\pi_n}f_n}-1
  &=a_n-\E_{\pi_n}a_n\\
  &\quad+\cO_{\infty,n}
  \left\{\norm{r_n}_{\infty,n}+\norm{a_n}_{\infty,n}^2\right\}.
  \end{aligned}\label{eq:normalization-step}
\end{equation}
Apply \eqref{eq:normalization-step} to \eqref{eq:u-variance-expansion-a} with
$a_n=\beta_n^2v_n/(2K)$, and to \eqref{eq:w-full-expansion} with
$a_n=\beta_n^3\chi_{3,n}/(3K^2)$.  The bound
$\abs{\chi_{3,n}}\leq Cv_nq_n$ gives the displayed remainders.  Finally,
$\norm{\beta_n^3v_nq_n}_{\infty,n}
\leq\norm{\beta_n^2v_n}_{\infty,n}
\norm{\beta_nq_n}_{\infty,n}$, and
$\norm{\beta_n^4v_nq_n^2}_{\infty,n}
\leq\norm{\beta_n^3v_nq_n}_{\infty,n}
\norm{\beta_nq_n}_{\infty,n}$.
\end{proof}

\subsection{Proof of Corollary~\ref{cor:method-rates}}
\label{app:convergence-TV-bounds-main}

\noindent{\small {\bf Corollary~\ref{cor:method-rates}} (MC and RQMC target rates) \it
Let $B=B_n\to\infty$, $M_n=KB_n$ with $K$ fixed.
\begin{enumerate}[label=(\alph*)]
\item \textbf{MC-SCBD.}
For MC-blocks satisfying Assumption~\ref{ass:unified-analytic}, let
$\kappa_{2}(\theta)=\sigma^2(\theta)$ and $
  \kappa_3(\theta)
  =
  \operatorname{cum}_3\{\V_{k,1}(\theta)\}$ be cumulants of $\V_{k,b}$. If
$
  {\beta_n^2}/{M_n}\longrightarrow0,
$
then
\begin{equation*}
  \left\|
    \pi^{(0)}_{n,M_n}-\pi_n
  \right\|_{\mathrm{TV}}
  =
  \frac{\beta_n^2}{4M_n}
  \E_{\pi_n}
  \left|
    \kappa_{2}
    -
    \E_{\pi_n}\kappa_{2}
  \right|
  +
  o\left(
    \frac{\beta_n^2}{M_n}
  \right).
\end{equation*}
If
$
  {\beta_n^3}/{M_n^2}\longrightarrow0,
$
then
\begin{equation*}
  \left\|
    \pi_{n,M_n}-\pi_n
  \right\|_{\mathrm{TV}}
  =
  \frac{\beta_n^3}{6M_n^2}
  \E_{\pi_n}
  \left|
    \kappa_{3}
    -
    \E_{\pi_n}\kappa_{3}
  \right|
  +
  o\left(
    \frac{\beta_n^3}{M_n^2}
  \right).
\end{equation*}
When $\beta_n\asymp n$, $M_n/n^2\!\to\! \infty$ is sufficient for convergence of the uncorrected posterior and $M_n/n^{3/2}\to\infty$ in the debiased case.\\

\item \textbf{RQ-SCBD.}
For RQMC blocks satisfying Assumption~\ref{ass:unified-analytic}, suppose
\begin{equation}\label{eq:RQ-variance-rate-a}
  \|v_{B_n}\|_{\infty,n}
  \leq
  C B_n^{-\alpha}
  \{\log(2+B_n)\}^{d-1}
\end{equation}
for constants $C<\infty$, $\alpha>0$, and fixed
randomized-input dimension $d$.  Since $K$ is fixed and
$M_n=KB_n$, define
\[
  \delta^{\mathrm{RQ}}_{n,M_n}
  \equiv
  \beta_n^2M_n^{-\alpha}
  \{\log(2+M_n)\}^{d-1}.
\]
Then, if
$\delta^{\mathrm{RQ}}_{n,M_n}\to0$,
\begin{align}
  \left\|
    \pi^{(0)}_{n,M_n}-\pi_n
  \right\|_{\mathrm{TV}}
  &=
  \frac{\beta_n^2}{4K}
  \E_{\pi_n}
  \left|
    v_{B_n}-\E_{\pi_n}v_{B_n}
  \right|
  +
  O\left\{
    \left(\delta^{\mathrm{RQ}}_{n,M_n}\right)^{3/2}
  \right\}
  \notag\\
  &=
  O\left(\delta^{\mathrm{RQ}}_{n,M_n}\right),
  \label{eq:RQ-u-TV-a}
  \\
  \left\|
    \pi_{n,M_n}-\pi_n
  \right\|_{\mathrm{TV}}
  &=
  \frac{\beta_n^3}{6K^2}
  \E_{\pi_n}
  \left|
    \chi_{3,B_n}-\E_{\pi_n}\chi_{3,B_n}
  \right|
  +
  O\left\{
    \left(\delta^{\mathrm{RQ}}_{n,M_n}\right)^2
  \right\}
  \notag\\
  &=
  O\left\{
    \left(\delta^{\mathrm{RQ}}_{n,M_n}\right)^{3/2}
  \right\}.
  \label{eq:RQ-w-TV-a}
\end{align}
When $\beta_n\asymp n$, the sufficient condition
$\delta^{\mathrm{RQ}}_{n,M_n}\to0$ is equivalently
\[
  \frac{M_n}
  {n^{2/\alpha}
   \{\log(2+M_n)\}^{(d-1)/\alpha}}
  \longrightarrow\infty.
\]
\end{enumerate}
}

\begin{proof}
Only substitution is required.  In the MC case,
\[
  v_{B_n}
  =
  \frac{\kappa_{2}}{B_n},
  \qquad
  q_{B_n}
  =
  \frac{\sigma}{B_n},
  \qquad
  \chi_{3,B_n}
  =
  \frac{\kappa_{3}}{B_n^2}.
\]
Since $M_n=KB_n$ and $K$ is fixed, Corollary~\ref{cor:unified-TV}
gives the displayed expansions.  In the uncorrected case, the
remainder is
\[
  \mathcal O\left\{
    \frac{\beta_n^3}{M_n^2}
    +
    \left(\frac{\beta_n^2}{M_n}\right)^2
  \right\}
  =
  o\left(\frac{\beta_n^2}{M_n}\right),
\]
whereas in the corrected case it is
\[
  \mathcal O\left\{
    \frac{\beta_n^4}{M_n^3}
    +
    \left(\frac{\beta_n^3}{M_n^2}\right)^2
  \right\}
  =
  o\left(\frac{\beta_n^3}{M_n^2}\right).
\]
For RQMC, $q_n=v_n^{1/2}$ and \eqref{eq:RQ-variance-rate-a} gives
\begin{equation*}
  \begin{aligned}\norm{\beta_n^2v_n}_{\infty,n}&=\cO(\delta^{\mathrm{RQ}}_{n,M_n}),\\
  \qquad
  \norm{\beta_nq_n}_{\infty,n}&=\cO\{(\delta^{\mathrm{RQ}}_{n,M_n})^{1/2}\},\\
  \qquad
  \norm{\beta_n^3v_nq_n}_{\infty,n}&=\cO\{(\delta^{\mathrm{RQ}}_{n,M_n})^{3/2}\}.\end{aligned}
\end{equation*}
Also,
\begin{equation*}
  \beta_n^4v_nq_n^2=(\beta_n^2v_n)^2,
  \qquad
  (\beta_n^3v_nq_n)^2=(\beta_n^2v_n)^3.
\end{equation*}
Substitution into Lemma~\ref{lem:unified-weight-expansion} and Corollary~\ref{cor:unified-TV} gives \eqref{eq:RQ-u-TV-a} and \eqref{eq:RQ-w-TV-a}.  
\end{proof}

\section{Random-Phase Implementation and Validation}
\label{app:random_phase_validation}

This appendix records the details needed to interpret the comparison in
Section~\ref{sec:random_phase_simulator}.  The stochastic-gradient Zig--Zag
sampler follows the construction of \citet{frazier2025exact}.  The phase
distribution is independent of \(\theta\), and the waveform and its derivative
are bounded, so
\[
    \E_\phi\{\nabla_\theta L(\theta,\phi)\}
    =
    \nabla_\theta\ell(\theta).
\]
A fresh phase vector is generated at every candidate event.

Let
\[
    M_0=\sup_t|w_\eta(t)|,\qquad M_1=\sup_t|w_\eta'(t)|.
\]
For coordinate \(j\), set
\[
    D_j=\|\mathsf A_{\cdot j}\|+
    \epsilon\kappa M_1\|\mathsf C_{\cdot j}\|.
\]
Along a Zig--Zag segment \(\theta(t)=\theta+vt\),
\[
    \|Y(\theta(t),\phi)-y\|
    \le
    \|\mathsf A\theta-y\|
    +t\|\mathsf A v\|
    +\epsilon M_0\sqrt{d_y}.
\]
Consequently,
\begin{equation}
    \left|\partial_jL(\theta(t),\phi)\right|
    \le
    \frac{D_j\|W^\top W\|_2}{d_y}
    \left(
      \|\mathsf A\theta-y\|
      +\epsilon M_0\sqrt{d_y}
      +t\|\mathsf A v\|
    \right).
    \label{eq:random_phase_affine_bound}
\end{equation}
After multiplication by \(\beta_n\), the right-hand side is an affine
candidate-event bound.  It is independent of the fresh phase and valid for all
future times on the segment, so its integrated rate and inverse are available
analytically.  No empirical quantile, clipping, or post hoc safety factor is
used.

The common RQ-SCBD configuration is selected at \(\kappa=60\).  The first
five target-audited candidates are
\[
\begin{array}{c|ccccc}
(K,B,M) &(5,16,80)&(6,16,96)&(7,16,112)&(8,16,128)&(9,16,144)\\
\hline
\widehat{\rm TV}_{\rm disc}
&0.01340&0.01225&0.00529&0.00247&0.00449\\
{\rm MCSE}
&0.00334&0.00157&0.00303&0.00216&0.00085
\end{array}
\]
The first four decisions remain too uncertain under the pre-specified
Algorithm~\ref{alg:finite_block_audit} rule; the fifth is selected.  The
learning-scale selections plotted in
Figure~\ref{fig:random_phase_sign_scaling} are
\[
\begin{array}{c|rrrrrrrr}
\beta_n &5000&7500&10000&15000&20000&30000&40000&50000\\
\hline
M&80&112&112&112&160&160&288&256.
\end{array}
\]

The pathwise derivative was checked against finite differences at
posterior-scale states; the maximum absolute discrepancy was
\(7.4\times10^{-11}\).  The maximum standardized discrepancy in the
unbiased-gradient check was \(1.23\).  Across fifteen production Zig--Zag
chains and six million candidate events, there were no bound violations and no
clipped probabilities.  The maximum observed ratio of the stochastic rate to
the bound was \(0.418\).  An independent Gaussian-target validator gave a
maximum posterior-mean error of \(0.0078\) target standard deviations and a
relative covariance error of \(0.0176\).

Each production method uses \(400000\) stochastic evaluations per regime.
Five independent seeds are used for every method--frequency pair.  RQ-SCBD
records the retained sign and uses Eq.~\eqref{eq:signed_ratio_estimator};
the pooled sign denominator is one in all three regimes.  Wall-clock
measurements are implementation-specific and are included only as a secondary
diagnostic.

\section{Additional Experimental Details and Checks}
\label{app:experimental_details}

This appendix records the selection rules, implementation settings, and
finite-block checks used in Section~\ref{sec:experiments}.  The
Algorithm~\ref{alg:adaptive_tuning} pilot and the Algorithm~\ref{alg:finite_block_audit} audit serve distinct roles.  The
former records log-weight variability and screens signs using generated
corrected weights; the latter combines reference losses with
posterior-mass weights to assess the finite-block target.  Reference
losses are required only at the frozen audit states and may therefore be
computed by an expensive offline procedure.  Pilot, audit, and holdout
designs are fixed before the corresponding candidate evaluations.

\subsection{Pilot and finite-block check algorithms}
\label{app:pilot_algorithms}
\label{sec:adaptive_budget}

Algorithm~\ref{alg:adaptive_tuning} records an estimate of the leading log-weight variance $\tau_M^2(\theta)$ in \eqref{eq:stability_budget_condition} and estimates the fixed-state negative probability $q^-_{K,B}(\theta)$ in \eqref{eq:negative_probability_def_intro}, at each candidate budget \((K,B)\),
directly at a
fixed collection of posterior-representative pilot states. Candidate selection uses the sign conditions below and the optional variance cap \(C_\tau(K)\).
All statistics calculated in line 7 are output as stress diagnostics.  The pilot construction and
candidate list are fixed before signed weights are generated.  In the
experiments, \(B\) is ordered from small to large and,
within each \(B\), \(K=5,\ldots,10\).  The first-pass convention favors a
smaller block size when two pairs have the same total point count; another
fixed ordering would change only finite-grid constants.
The Gaussian approximation to the sign-screen from
Appendix~\ref{app:negative_sign_probability} can be used to choose the range of
candidate block sizes, but the sign conditions below are based
on estimated signed weights without a Gaussian assumption.

\begin{algorithm}[htbp]
\caption{Ordered finite-grid sign and optional variance screen}
\label{alg:adaptive_tuning}
\begin{algorithmic}[1]
\Require fixed \(\beta_n\); ordered candidate list
\(\mathcal C=((K_1,B_1),\ldots,(K_L,B_L))\); pilot states
\(\mathcal O_{\rm pilot}\); block generator \(G\); repeat count
\(N_{\rm sign}\); sign tolerance \(q_0\); optional variance cap \(C_\tau(K)\in(0,\infty]\).

\State Set the ordered feasible list \(\mathcal F\leftarrow()\).
\For{\((K,B)\) in \(\mathcal C\), in the given order}
    \For{\(\theta\in\mathcal O_{\rm pilot}\)}
        \For{\(r=1,\ldots,N_{\rm sign}\)}
            \State Generate \(K\) independent blocks of size \(B\); compute
            \(\widehat Z_r(\theta)\) and \(S_{K,B,r}^2(\theta)\).
        \EndFor
        \State Record
        \begin{align*}
        \widehat q^-(\theta)&=N_{\rm sign}^{-1}\sum_r
        \mathbf 1\{\widehat Z_r(\theta)<0\},&
        \widehat p_0(\theta)&=N_{\rm sign}^{-1}\sum_r
        \mathbf 1\{\widehat Z_r(\theta)=0\},
        \\
        \widehat\tau^2_{K,B}(\theta)&=
    \frac{\beta_n^2}{K N_{\rm sign}}
    \sum_{r=1}^{N_{\rm sign}}S_{K,B,r}^2(\theta),&
        \widehat r(\theta)&=
        \frac{\sum_r\widehat Z_r(\theta)}
        {\sum_r|\widehat Z_r(\theta)|},
        \end{align*}
    \EndFor
    \If{\(\max_\theta\widehat\tau^2_{K,B}(\theta)\le C_\tau(K)\),
    \(\max_\theta\widehat q^-(\theta)\le q_0\),
    \(\max_\theta\widehat p_0(\theta)=0\), and
    \(\widehat r(\theta)>0\) at every pilot state
    \newline\hspace*{\algorithmicindent}}
    \State Append \((K,B)\) and its diagnostics to \(\mathcal F\).
    \EndIf
\EndFor
\State \Return the ordered feasible list \(\mathcal F\).  If it is empty,
expand the candidate grid or revise the representation.
\end{algorithmic}
\end{algorithm}

Since $S^2_{K,B}$ is unbiased for the variance of a single block
estimate and the $K$ blocks are independent,
$S^2_{K,B}/K$ estimates
$\Var\{\bar\ell_M(\theta,\xi)\}$. The tolerance \(q_0=0.05\) is a practical screen: in the
absence of zero weights it corresponds to a fixed-state raw expected sign of at
least \(0.90\).  Weighted signs are always reported as well because rare large
negative weights can matter more than their frequency.

\subsection{Optional reference audit and final selection}
\label{sec:finite_block_audit_main}

When reliable loss evaluations are available on a finite design, the optional
audit estimates the normalized variation of \(w_M\) and can refine the pilot
choice.  Let
\[
    \mathcal O_{\rm audit}
    =
    \{\theta_1,\ldots,\theta_J\}
\]
be fixed before the candidate configurations are audited.  This design need
not equal \(\mathcal O_{\rm pilot}\): the pilot set records variance and screens sign
behavior at a small collection of representative states, whereas the audit
design carries posterior-mass weights and is chosen to assess variation in
\(w_M\).  The two designs may overlap or coincide when one set serves both
purposes.

Depending on the application, reference losses
\(\ell_{\rm ref}(\theta)\), \(\theta\in\mathcal O_{\rm audit}\), may be
obtained by high-accuracy quadrature or, for small \(J\), by averaging the loss
estimator at a much larger budget.  The resulting values are frozen before
candidates are compared; these calculations are offline and do not enter the
Markov-chain transition.

For a candidate configuration with total block budget \(M=KB\), let
\(N_{\rm audit}\) be the number of replicate weight evaluations at each audit
state.  Generate
\(\widehat Z_M^{(r)}(\theta_j)=\widehat Z_M(\theta_j,\xi^{(j,r)})\),
\(r=1,\ldots,N_{\rm audit}\), using auxiliary variables \(\xi^{(j,r)}\) that
are independent over audit state indices \(j=1,\dots,J\) and replicate indices \(r=1,\dots,N_{\rm audit}\); within each $(j,r)$-pair, use exactly the block laws for $\xi^{(j,r)}_k$ and $\xi^{(j,r)}_{k,b}$ and Bessel
corrections used by the production estimators.  Define
\begin{equation}
    \widehat w_{M,j}
    =
    e^{\beta_n\ell_{\rm ref}(\theta_j)}
    \frac{1}{N_{\rm audit}}
    \sum_{r=1}^{N_{\rm audit}}
    \widehat Z_M^{(r)}(\theta_j),
    \qquad j=1,\ldots,J.
    \label{eq:wM_audit_estimator}
\end{equation}
If \(\ell_{\rm ref}=\ell\), then \(\widehat w_{M,j}\) estimates the
finite-block target factor \(w_M(\theta_j)\).  If instead
\[
    \ell_{\rm ref}(\theta)
    =
    \ell(\theta)+\Delta(\theta),
\]
then it estimates
\[
    e^{\beta_n\Delta(\theta_j)}w_M(\theta_j).
\]
A common additive error in the reference loss cancels when the audited target
is normalized.  State-dependent variation in \(\Delta\) does not, so the
accuracy of a numerical reference must be judged on the \(\beta_n\)-scale.

A set of audit weights \(\omega_j\), \(j=1,\ldots,J\), measures overall
accuracy by weighting audit states according to posterior mass under the
intended target.  This defines the discrete reference posterior.  If the audit
states are approximate posterior draws or equal-posterior-mass quantiles, take
\(\omega_j=1/J\).  For a deterministic audit design with cell or quadrature
weights \(v_1,\ldots,v_J\), let
\begin{equation}
\begin{aligned}
    \widetilde\omega_j
    &=
    v_j\pi_0(\theta_j)
    \exp\{-\beta_n\ell_{\rm ref}(\theta_j)\},\\
    \omega_j
    &=
    \frac{\widetilde\omega_j}
    {\sum_{m=1}^J\widetilde\omega_m}.
\end{aligned}
    \label{eq:audit_posterior_weights}
\end{equation}
On a uniform grid the \(v_j\) are constant and cancel.  On a nonuniform
deterministic design they are retained because they represent posterior mass
by cell.

The corresponding finite-block posterior masses are
\begin{equation}
    \widehat\pi^{\rm audit}_{M,j}
    =
    \frac{
      \omega_j\widehat w_{M,j}
    }{
      \sum_{m=1}^J\omega_m\widehat w_{M,m}
    },
    \qquad j=1,\ldots,J.
    \label{eq:audited_finite_block_mass}
\end{equation}
Equivalently, writing
\[
    \widehat\mu_M
    =
    \sum_{m=1}^J\omega_m\widehat w_{M,m},
    \qquad
    \widehat{\bar w}_{M,j}
    =
    \frac{\widehat w_{M,j}}{\widehat\mu_M},
\]
we have
\[
    \widehat\pi^{\rm audit}_{M,j}
    =
    \omega_j\widehat{\bar w}_{M,j}.
\]
The finite-state total-variation discrepancy is therefore
\begin{equation}
\begin{aligned}
    \widehat{\operatorname{TV}}_{\rm disc}
    &=
    \frac12
    \sum_{j=1}^J
    \left|
      \widehat\pi^{\rm audit}_{M,j}-\omega_j
    \right|\\
    &=
    \frac12
    \sum_{j=1}^J
    \omega_j
    \left|
      \widehat{\bar w}_{M,j}-1
    \right|.
\end{aligned}
    \label{eq:discrete_tv_audit}
\end{equation}
Thus the audit measures the change in posterior mass induced by the finite-block
factor on the chosen design.

Algorithm~\ref{alg:finite_block_audit} applies this calculation to the
configurations retained by Algorithm~\ref{alg:adaptive_tuning}, in their
prespecified order.  The first candidate for which the estimated factors are
positive and numerically stable, and for which
\(\widehat{\operatorname{TV}}_{\rm disc}\) is below the stated tolerance with
sufficiently small Monte Carlo uncertainty, is used in the production run.
Once selected, \((K,B)\) is fixed for the production run.

In the CRPS calibration, the \(181\) audit states are
equal-posterior-mass quantiles and hence have weight \(1/181\).  In the FTIR
example, the audit uses all \(41\times41\) compatibility-grid states with
normalized reference-posterior weights.  The nine posterior medoids used by
Algorithm~\ref{alg:adaptive_tuning} in that example belong only to the pilot
screen.  A holdout audit, when reported, is performed after selection and does
not alter the chosen configuration.

\begin{algorithm}[htbp]
\caption{Finite-block target check and final candidate choice}
\label{alg:finite_block_audit}
\begin{algorithmic}[1]
\Require ordered sign-feasible list \(\mathcal F\); audit states
\(\theta^{(1)},\ldots,\theta^{(J)}\); reference losses
\(\ell_{\rm ref}(\theta^{(j)})\); weights \(\omega_j\ge0\),
\(\sum_j\omega_j=1\); repeat count \(N_{\rm audit}\); tolerance
\(\epsilon_w\)
\For{\((K,B)\) in \(\mathcal F\), in order}
    \For{\(j=1,\ldots,J\)}
        \State Generate independent copies
        \(\widehat Z_{K,B}^{(r)}(\theta^{(j)})\),
        \(r=1,\ldots,N_{\rm audit}\), and compute
        \[
        \widehat w_j
        =
        e^{\beta_n\ell_{\rm ref}(\theta^{(j)})}
        \frac1{N_{\rm audit}}\sum_r
        \widehat Z_{K,B}^{(r)}(\theta^{(j)}),
        \qquad
        \widehat r_j
        =
        \frac{\sum_r\widehat Z_{K,B}^{(r)}(\theta^{(j)})}
        {\sum_r|\widehat Z_{K,B}^{(r)}(\theta^{(j)})|}.
        \]
        \State Estimate the Monte Carlo standard error of \(\widehat w_j\).
    \EndFor
    \If{all \(\widehat w_j>0\) and the estimates are numerically stable}
        \State Set \(\bar w=\sum_j\omega_j\widehat w_j\) and compute
        \[
        \widehat{\rm TV}_{\rm disc}
        =
        \frac12\sum_j\omega_j
        \left|\frac{\widehat w_j}{\bar w}-1\right|,
        \qquad
        \widehat{\rm CV}_{\rm disc}
        =
        \frac{\{\sum_j\omega_j(\widehat w_j-\bar w)^2\}^{1/2}}
        {\bar w}.
        \]
        \State Estimate the Monte Carlo uncertainty of
        \(\widehat{\rm TV}_{\rm disc}\), preferably from independent
        batches of the audit repetitions or from independent reruns.
        \If{\(\widehat{\rm TV}_{\rm disc}\le\epsilon_w\) and its Monte Carlo
        uncertainty is small relative to the decision margin}
            \State \Return this \((K,B,M=KB)\), the pointwise factors,
            weighted signs, TV/CV summaries, and changes in the posterior
            summaries of interest.
        \EndIf
    \EndIf
\EndFor
\State \Return ``no audited configuration on the current grid.''
\end{algorithmic}
\end{algorithm}

Table~\ref{tab:screen_audit_comparison} compares the first
Algorithm-2-feasible configuration with the final reference-audited
choice under the rules stated above.

\begin{table}[htbp]
    \centering
    \scriptsize
    \setlength{\tabcolsep}{3pt}
        \caption{Effect of the reference audit on block selection.  The
    first configuration is the earliest Algorithm-2-feasible pair in
    the prespecified candidate order.  Audit values in parentheses are
    Monte Carlo standard errors.  Tuples are \((K,B,M)\).}
    \label{tab:screen_audit_comparison}
    \begin{tabular}{@{}p{0.20\textwidth}p{0.11\textwidth}
                    p{0.18\textwidth}p{0.27\textwidth}
                    p{0.17\textwidth}@{}}
        \toprule
        Example & Blocks & Pilot choice &
        Audit at pilot choice & Audited choice \\
        \midrule
        CRPS, \(n=800\) &
        RQMC &
        \((9,16,144)\) &
        TV \(0.0255\;(0.0016)\) &
        \((5,32,160)\) \\

        Random phase \phantom{lll}
        ($\kappa=60$/stress) &
        RQMC &
        \((5,16,80)\) &
        TV \(0.01340\;(0.00334)\) &
        \((9,16,144)\) \\

        \(g\)-and-\(k\) & RQMC & \((5,128,640)\) & TV \(0.0121\;(0.0010)\) & \((7,128,896)\) \\ \(g\)-and-\(k\) & MC & \((8,2048,16384)\) & TV \(0.00212\;(0.00123)\) & unchanged \\ Real FTIR &
        RQMC &
        \((10,8,80)\) &
        normalization not $>0$ &
        \((8,32,256)\) \\


        \bottomrule
    \end{tabular}
\end{table}

The reference audit 
increases the CRPS budget only modestly. In other experiments which we omit for brevity, there was no change. The
random-phase audit roughly doubled the budget in order to drive the estimated TV below threshold. However, the TV-value ($0.0134$) at the pilot budget was already only slightly over-threshold ($0.01$). The FTIR example gives a sharper distinction: configurations
passing the pilot screen need not give an acceptable finite-block
target.  This failure is not controlled by the recorded variance diagnostic alone. 

The audit remains reference-relative.  If
\(\ell_{\rm ref}(\theta)=\ell(\theta)+\Delta(\theta)\), variation of
\(\Delta\) over the audit states changes the normalized factor, whereas
a common additive component cancels.  We use
\(\epsilon_w=0.01\) only when the reference calculation is
accurate on the \(\beta_n\)-scale.  This tolerance is deliberately stringent and may be smaller than the Monte Carlo error of routine posterior summaries.  Otherwise we omit the audit and compare the selected configuration with one or more larger budgets, as described in Section~\ref{sec:block_configuration_screen}.
The variance rate is used only to set the candidate budgets.
Closeness to the intended posterior is assessed separately by the reference audit.  %
%
%

\subsection{Common implementation conventions}

RQMC blocks use independent randomizations, and the sample variance is computed
across independent blocks at one parameter value.  Fresh randomizations are
used at each proposal; no common random numbers are carried from one MCMC
transition to the next.  When a proposal is rejected, the current parameter and
the complete auxiliary state are retained.  Signed posterior summaries use
Eq.~\eqref{eq:signed_ratio_estimator}.  Effective sample size is computed
after burn-in using Geyer's initial-monotone-sequence truncation of the sample
autocorrelation.  Coordinate ESS values are computed within each independentchain and then summed; multivariate tables report the minimum coordinate ESS.
Wall-clock results are implementation-specific and are not used as complexity
claims.

For the random-phase comparison, the Zig--Zag process uses the
auxiliary-independent affine bound in
Eq.~\eqref{eq:random_phase_affine_bound}.  A fresh phase is drawn at every
candidate event.  RQ-SCBD and Zig--Zag start from the same posterior-scale
state, and both use the common stochastic-evaluation budget described in the
main text.

For the \(g\)-and-\(k\) benchmark, RQ-SCBD and MC-SCBD use a
Gaussian random-walk proposal with covariance
\[
    0.70\widehat\Sigma_{\rm ref}
    +\operatorname{diag}(10^{-5},10^{-6}),
\]
whereas the MMD-ABC runs use
\[
    \widehat\Sigma_{\rm ref}
    +\operatorname{diag}(10^{-5},10^{-6}).
\]
In both cases, the scalar proposal scale is adapted during burn-in from
the reference-grid mode and fixed thereafter.

The posterior summaries in
Figure~\ref{fig:gandk_posterior_comparison} are sign-corrected for
RQ-SCBD, MC-SCBD, and Russian roulette.  Panel~(a) uses the resulting
signed joint density estimates to draw the \(80\%\) HPD
boundaries.  Panels~(b) and~(c) use a common Gaussian kernel and
bandwidth for the marginal density estimates.  The MMD-ABC tolerance
sweep is reported separately in
Figure~\ref{fig:gandk_abc_tradeoff}.

For the real FTIR comparison in Appendix~\ref{sec:real_ftir}, a proposal covariance and scalar scale are
calibrated in a separate run on the exhaustive reference target and then frozen.
The same proposal is used for RQ-SCBD, positive RQMC pseudo-marginal MCMC, and
iid-MC pseudo-marginal MCMC.  Each method uses five independent chains and five
million pair-score lookups per chain.

\subsection{Experiment-specific settings}

\paragraph*{CRPS calibration.}
The master seed is \(20260622\).  The sample-size grid is
\(n=100,200,\ldots,6400\), and the threshold figure uses \(n=800\).  The
variance fit uses \(1200\) independent randomizations at each block size and is
reported over \(B=16,\ldots,1024\).  The sign pilot uses \(1800\) repetitions
at the seven exact-posterior quantiles
\(0.08,0.22,\ldots,0.92\).  Algorithm~\ref{alg:finite_block_audit} uses
\(181\) equally weighted posterior-quantile states and \(2200\) repetitions;
the independent holdout uses \(181\) states and \(1500\) repetitions.  The
candidate grid is \(B=2,4,\ldots,2048\) and \(K=5,\ldots,10\).

\paragraph*{Random phase.}
The output dimension is \(10\), \(\beta_n=20000\),
\(\epsilon=0.04\), \(\eta=0.8\), and \(\kappa=12,20,60\).  The matrices
\(\mathsf A\), \(\mathsf C\), and \(W\), together with the fixed vector \(y\),
are stored in the reproduction output.  The candidate grid is
\(B=4,8,\ldots,256\), \(K=5,\ldots,10\).  The common selected configuration is
\(K=9,B=16\).  Each method receives \(400000\) stochastic evaluations at each
frequency, and every reported row summarizes five independent chain seeds.

\paragraph*{\(g\)-and-\(k\) MMD.}
The run uses seed \(42\), \(\n=500\), generating parameter
\((0.10,0.05)\), uniform prior \([0,5]\times[0,2]\), bandwidth
\(h=3.9494\), and \(\beta_n=2500\).  The numerical reference uses a
\(54\times46\) grid and \(4096\) pair evaluations per grid point.  The
SCBD candidate grid is \(B=4,8,\ldots,16384\) and
\(K=5,\ldots,10\).  The variance calculation uses \(64\)
randomizations at each displayed block size.  Each SCBD and ABC result
pools five chains of \(1400\) iterations with \(350\) burn-in
iterations.  Candidate selection uses nine posterior-weighted medoids, with \(300\) sign-screen repetitions per state.  The target audit uses eight independent scrambled references of \(2^{16}\) pairs per state and \(256\) independent batches of \(50\) repetitions.  In the MC production code, the fixed one-dimensional function \(m_y\) is evaluated by a deterministic cubic spline on a grid of spacing \(0.01\); a run-time quarter-grid check gives maximum absolute error \(2.74\times10^{-13}\), or \(1.37\times10^{-9}\) after both terms are scaled by \(\beta_n\).  The Russian-roulette result also pools five chains of
\(1400\) iterations with \(350\) burn-in iterations; its pilot and
expected-cost calculation are given in
Appendix~\ref{app:gandk_rr}.

\paragraph*{Real FTIR spectra.}
The processed data contain \(60\) specimens in three groups and
\(448\) spectral channels.  The fixed model settings are
\(\rho=22\), \(\eta=0.018\), and \(\gamma=1.5\), with
\(\beta_n=60\).  Algorithm~\ref{alg:adaptive_tuning} uses the
nine-state posterior-medoid pilot in
Appendix~\ref{app:real_ftir_details}.  Algorithm~\ref{alg:finite_block_audit}
uses all \(1681\) states of the uniform \(41\times41\) grid on
\([-2.4,2.4]^2\), with the weights in
Eq.~\eqref{eq:ftir_audit_weights}.  The candidate grid is
\(B=8,16,\ldots,512\), \(K=5,\ldots,10\).  The audit is computed in
independent batches of \(1000\) repetitions per state, with additional
batches only when the decision is unresolved; the selected
configuration uses \(3000\) repetitions per state.

\subsection{Finite-block checks for examples in Sections~\ref{sec:crps_calibration}, \ref{sec:random_phase_simulator} and \ref{sec:g_and_k_benchmark}}

For the CRPS calibration, the selected configurations have no negative weights
on the selection states over the displayed \(n\)-grid.  The largest holdout
negative frequency is \(6.7\times10^{-4}\), and the posterior-weighted negative
probability is \(3.7\times10^{-6}\).  Closed-form score evaluations agree with
numerical quadrature to machine precision, so the finite-block checks do not
inherit a reference-loss error $\Delta(\theta)$ above.

For the random-phase example, the selected stress-regime configuration has
\(\widehat{\rm TV}_{\rm disc}=0.00449\) with Monte Carlo standard error
\(0.00085\).  The production RQ-SCBD chains contain no negative or zero retained
weights, so their pooled sign denominators equal one.  This does not make the
finite-block target exact; its agreement with the intended target is supplied
by the separate Algorithm~\ref{alg:finite_block_audit} calculation.

For the \(g\)-and-\(k\) example, the reference is numerical and the check is
reference-relative.  Table~\ref{tab:gandk_audit} gives additional diagnostics for the selected
configurations.

\begin{table}[htbp]
    \centering
    \scriptsize
    \setlength{\tabcolsep}{3pt}
    \caption{Reference-relative finite-block check for the
    \(g\)-and-\(k\) example.  The nine audit states are posterior-weighted
    medoids, and the TV and CV values come from line 7 of Algorithm~\ref{alg:finite_block_audit} and use independent high-precision    reference-loss evaluations.}
    \label{tab:gandk_audit}
    \begin{tabular}{@{}lrrr@{}}
        \toprule
        Method & audit TV & CV & posterior-mean displacement \((\Delta g,\Delta k)\) \\
        \midrule
        MC-SCBD & 0.00212 & 0.00521 & \((0.00025,0.00019)\) \\
        RQ-SCBD & 0.00765 & 0.01758 & \((0.00060,0.00076)\) \\
        \bottomrule
    \end{tabular}
\end{table}

The batch-jackknife 95\% intervals for audit TV are \([0.00624,0.00905]\) for RQ-SCBD and \([0,0.00454]\) for MC-SCBD. Russian roulette is not included in this table because it does not use
the finite-\((K,B)\) SCBD construction or its target factor \(w_M\).
Its sign and cost diagnostics are reported in
Appendix~\ref{app:gandk_rr}.


\subsection{Real FTIR Replicate Spectra}
\label{sec:real_ftir}

Randomized
Sobol points give such strong variance reduction that the effect of debiasing, the effect of moving from P-MCMC with RQ-block integration converging as in \eqref{eq:RQ-u-TV} to RQ-SCBD converging as in \eqref{eq:RQ-w-TV}, may be hard to see. The final example uses observed duplicate spectra from the MIRFreshMeats
archive \citep{aljowder1997midinfrared,quadramFTIRmeat} to illustrate a setting where the
debiasing correction is important.  The loss is smooth in the model
parameters, but one stochastic evaluation is obtained by looking up a
specimen--channel-pair contribution from a large finite population.  The
resulting integrand is discontinuous in the randomized indices, so randomized
Sobol points do not by themselves remove the bias created by exponentiation.  This low-dimensional example uses the exhaustive grid only for offline pilot construction and validation; the production transitions use stochastic lookup estimates.

\subsubsection{Data, working model, and stochastic loss}
\label{sec:real_ftir_model}

The data contain duplicate spectra from \(60\) independent specimens in three
groups, with \(J=448\) channels per spectrum.  Let \(Y_{irj}\) denote channel
\(j\) of replicate \(r\in\{1,2\}\) for specimen \(i\).  We form the processed
replicate difference
\begin{equation}
    D_{ij}
    =
    \frac{(Y_{i1j}-Y_{i2j})/\sqrt2-\bar D_i}{s_2},
    \qquad
    \bar D_i=\frac1J\sum_{j=1}^J
    \frac{Y_{i1j}-Y_{i2j}}{\sqrt2},
    \label{eq:ftir_difference}
\end{equation}
where \(s_2\) is the replicate-difference standard deviation in the middle
group.  Differencing removes the specimen-specific mean spectrum; centering
and scaling make the analysis concern relative repeatability rather than the
spectral level.

Denote by $\Sigma_c$ the $J\times J$ cross-channel covariance matrix for specimen in group \(c\in\{1,2,3\}\). The working model for $D_i=(D_{i,1},\dots,D_{i,J})$ is
\[
    D_i\mid c(i)=c\sim N_J(0,\Sigma_c),\qquad
    (\Sigma_c)_{jk}
    =
    \sigma_c^2\left[
      (1-\eta)\exp\{-|j-k|/\rho\}
      +\eta\,\mathbf 1\{j=k\}
    \right].
\]
The empirical correlation fit gives \(\rho=22\) and \(\eta=0.018\).  The scale
of the middle group is fixed at \(\sigma_2=1\) by the preprocessing and inference concerns the remaining two scales,
\[
    \theta=(\log\sigma_1,\log\sigma_3),
    \qquad
    \theta_j\stackrel{\rm ind}{\sim}N(0,0.8^2).
\]
These correspond to the chicken and turkey groups in
the archive.

The loss for GBI will be based on the density-power score of \citet{basu1998densitypower} with power \(\gamma=1.5\),
\begin{equation}
    s_\gamma(f,z)
    =
    \frac{1}{1+\gamma}\int f(x)^{1+\gamma}\,dx
    -\frac1\gamma f(z)^\gamma.
    \label{eq:ftir_dpd_score}
\end{equation}
This score would usually take the likelihood as its $f$-argument. We will modify the likelihood to replace the joint $J$-dimensional $D$-dependence with marginal pairwise dependence. Let
\(f_{\theta,c,h}\) be the bivariate normal density of
\((D_{ij},D_{i,j+h})\) under the working model.  The contribution from specimen
\(i\), lag \(h\), and starting channel \(j\) is
\[
    L_{i,j,h}(\theta)
    =
    s_\gamma\!\left(
      f_{\theta,c(i),h},
      (D_{ij},D_{i,j+h})
    \right),
\]
where the integral in \eqref{eq:ftir_dpd_score} is analytic for a bivariate normal density.  The exact loss for GBI is defined to be the average loss over all unordered channel pairs,
\begin{equation}
    \ell_{\rm FTIR}(\theta)
    =
    \frac{1}{N_{\rm pair}}
    \sum_{i=1}^{60}\sum_{h=1}^{J-1}\sum_{j=1}^{J-h}
    L_{i,j,h}(\theta),
    \qquad
    N_{\rm pair}=60\binom{J}{2}=6{,}007{,}680.
    \label{eq:ftir_full_loss}
\end{equation}
Distinct pairs within a spectrum often reuse one channel measurement, and all
pair contributions are computed from the same specimen-level spectrum.  The
learning scale is therefore
the number of independent specimens,
\(\beta_n=60\), rather than \(N_{\rm pair}\).

A stochastic evaluation uses three uniform random variables \(U\sim{\rm Unif}([0,1]^3)\) to select a specimen uniformly, a lag with probability
proportional to the number \(J-h\) of pairs at that lag, and then a starting
channel uniformly among \(1,\ldots,J-h\). This gives
\begin{equation}
    L_{\rm FTIR}(\theta,U)
    =
    L_{I(U),J_0(U),H(U)}(\theta),
    \qquad
    \E_U\{L_{\rm FTIR}(\theta,U)\}
    =
    \ell_{\rm FTIR}(\theta).
    \label{eq:ftir_lookup}
\end{equation}
For fixed indices the score is smooth in \(\theta\). However, RQMC works in the space of the $U$-variables, and as a function of \(U\),
however, Eq.~\eqref{eq:ftir_lookup} is piecewise constant with jumps at index
boundaries.  This is the computational feature studied below.

\subsubsection{Reference-audited block selection}
\label{sec:real_ftir_selection}

The exhaustive loss in Eq.~\eqref{eq:ftir_full_loss} is evaluated once
on a uniform \(41\times41\) grid over \([-2.4,2.4]^2\).  The reference
grid enters the two selection stages in different ways.

Algorithm~\ref{alg:adaptive_tuning} uses nine
posterior-representative pilot states: the grid mode and eight
deterministic posterior-weighted medoids.  The medoids are constructed
within the smallest grid set containing \(95\%\) of the reference
posterior mass, using weighted marginal-rank coordinates, and are frozen
before any candidate weights are generated.

Algorithm~\ref{alg:finite_block_audit}, by contrast, uses all
\(1681\) grid states.  Since the grid is uniform, their posterior-mass
weights are
\begin{equation}
    \omega_j
    =
    \frac{
      \pi_0(\theta_j)
      \exp\{-\beta_n\ell_{\rm FTIR}(\theta_j)\}
    }{
      \displaystyle
      \sum_{m=1}^{1681}
      \pi_0(\theta_m)
      \exp\{-\beta_n\ell_{\rm FTIR}(\theta_m)\}
    }.
    \label{eq:ftir_audit_weights}
\end{equation}
Thus the nine medoids define the pilot screen, whereas the full grid
defines the target audit.  Appendix~\ref{app:real_ftir_details} gives
the pilot coordinates and the remaining implementation details.

Candidates are ordered by increasing \(B\) and then by
\(K\in\{5,\ldots,10\}\).  Algorithm~\ref{alg:adaptive_tuning}
returns \(35\) feasible configurations, which are audited in the same
order.  The \(M=224\) candidate remains unresolved after \(12000\)
repetitions per state.  The next candidate,
\[
    (K,B,M)=(8,32,256),
\]
has TV \(0.00316\) with Monte Carlo standard error \(0.00050\) and is
selected.  The value \(M=256\) is therefore determined by the
prespecified screen and audit rather than fixed in advance.

\begin{table}[htbp]
    \centering
    \scriptsize
    \setlength{\tabcolsep}{3pt}
    \caption{Ordered reference audit of the Algorithm-2-feasible
    configurations in the real FTIR example.  A dash denotes a non-positive
    or numerically unstable posterior-weighted signed normalization, for
    which the discrete TV is undefined.  The \(M=224\) candidate remains
    too close to the decision boundary relative to its audit uncertainty.}
    \label{tab:ftir_selection}
    \begin{tabular}{rrrrrc}
        \toprule
        \(K\) & \(B\) & \(M\) & repetitions/state & TV (MCSE) & result \\
        \midrule
        10 & 8  & 80  & 3000  & -- & non-positive normalization \\
        7  & 16 & 112 & 5000  & -- & non-positive normalization \\
        8  & 16 & 128 & 3000  & \(0.0514\;(0.0067)\) & fail \\
        9  & 16 & 144 & 3000  & \(0.0387\;(0.0065)\) & fail \\
        10 & 16 & 160 & 7000  & \(0.0359\;(0.0053)\) & fail \\
        5  & 32 & 160 & 5000  & \(0.0313\;(0.0051)\) & fail \\
        6  & 32 & 192 & 3000  & \(0.0274\;(0.0030)\) & fail \\
        7  & 32 & 224 & 12000 & \(0.0121\;(0.0016)\) & unresolved \\
        8  & 32 & 256 & 3000  & \(0.00316\;(0.00050)\) & selected \\
        \bottomrule
    \end{tabular}
\end{table}

The first pilot-feasible configuration is
\((K,B,M)=(10,8,80)\), but its posterior-weighted signed normalization
is non-positive.  This illustrates the distinction between the two
stages: stable signs on the finite pilot do not imply a well-behaved
posterior-normalized target factor.

Writing \(q_m^-=N^{-1}\sum_r\mathbf 1\{\widehat Z_{mr}<0\}\), the three diagnostics are \(\max_m q_m^-\), \(\sum_m\omega_m q_m^-\), and \(R_{\rm glob}=\{\sum_m\omega_m\sum_r\widehat Z_{mr}\}/\{\sum_m\omega_m\sum_r|\widehat Z_{mr}|\}\).  The maximum negative frequency over the complete \(41\times41\) grid is
\(0.591\), but it occurs in a low-posterior-mass edge state.  At the selected
configuration the posterior-weighted negative probability is \(0.00297\) and
the global weighted sign is \(0.9976\).  These quantities answer different
questions: the pilot maximum controls the operating configuration, while the
full-grid maximum is a stress diagnostic and the weighted summaries describe
the signed target in the posterior region.  Seven- and thirteen-medoid
post-selection checks also retain \((8,32)\) as sign feasible.

\subsubsection{Same-budget pseudo-marginal comparison}
\label{sec:real_ftir_comparison}

We compare three retained-auxiliary pseudo-marginal chains at the common
per-proposal budget \(M=256\).

\begin{enumerate}
    \item RQ-SCBD uses \(K=8\) independently scrambled blocks of size \(32\)
    and the Bessel correction.
    \item Positive RQMC pseudo-marginal MCMC uses one independently scrambled
    \(256\)-point net and no Bessel correction.
    \item Naive iid-MC pseudo-marginal MCMC replaces the scrambled net by
    \(256\) independent lookup draws and directly exponentiates their mean.
\end{enumerate}

This design separates the Bessel correction at fixed RQMC input (items 1 and 2) from scrambled and iid input without correction (items 2 and 3); MC-SCBD is examined in Section~\ref{sec:g_and_k_benchmark}.  The two positive methods are exact for their own implicit finite-\(M\) targets,
not for the intended posterior.  On rejection they retain both the parameter
and the current auxiliary randomization.  All three methods use the same
Gaussian random-walk proposal, calibrated on the reference target before the
production runs.  Each row below pools five chains, each assigned five million
pair-score lookups.

The independently estimated finite-block target of RQ-SCBD has TV \(0.00484\)
from the intended grid posterior.  Direct exponentiation of the RQMC estimate without debiasing
gives TV \(0.0762\) at the same point budget and shifts the first posterior mean
by \(-0.126\) reference standard deviations.  Thus the variance reduction from
scrambling does not by itself remove the Jensen distortion.  For the iid-MC
estimator, the finite-population moment generating function can be evaluated
exactly.  On an expanded \(161\times161\) grid its \(M=256\) implicit target has TV
numerically equal to one.  The production iid-MC
chains did not reach this rare-event-dominated target;
their empirical density is therefore not used as a target or mixing estimate.  The exact finite-population
calculation evaluates the corresponding grid target without MCMC error.

\begin{table}[htbp]
    \centering
    \small
    \caption{Same-budget target comparison for the real FTIR example.
    Posterior-mean displacements are for the two marginals and are measured in reference
    posterior standard deviations.  The iid-MC TV is obtained from the exact
    finite-population factor on the expanded \(161\times161\) grid; its
    posterior means are reported relative to the reference posterior.}
    \label{tab:ftir_target_comparison}
    \begin{tabular}{@{}lrrc@{}}
        \toprule
        Method & \(M\) & target TV & mean displacements \((\log\sigma_1,\log\sigma_3)\) \\
        \midrule
        RQ-SCBD & 256 & 0.00484 & \((0.0048,-0.0012)\) \\
        Positive RQMC PM & 256 & 0.0762 & \((-0.126,-0.018)\) \\
        Naive iid-MC PM & 256 & 1.000 & \((-3.46,0.084)\) \\
        \bottomrule
    \end{tabular}
\end{table}

For the two RQ methods, the production-chain diagnostics are similar.  RQ-SCBD retains an
estimated \(20.1\) sign-reweighted effective draws per \(10^5\) pair lookups,
compared with \(17.1\) for positive RQMC pseudo-marginal MCMC.
The RQ-SCBD sign denominator is \(0.9963\), so sign
reweighting has negligible cost at the selected split.

\begin{table}[htbp]
    \centering
    \small
    \caption{Five-chain production diagnostics at \(M=256\).  ESS is the
    minimum coordinate value per \(10^5\) pair-score lookups.  The mean displacement is the first coordinate of the chain posterior mean relative to the reference posterior, in reference standard-deviation units.  These are sampling diagnostics, not measures of target error.  The
    iid-MC chains are omitted because they
    did not reach their finite-\(M\)
    implicit target.  Table~\ref{tab:ftir_target_comparison} gives the
    corresponding exact target calculation.}
    \label{tab:ftir_chain_comparison}
    \begin{tabular}{@{}lrrrr@{}}
        \toprule
        Method & acceptance & ESS/\(10^5\) & mean displacement & \(\widehat R_M\) \\
        \midrule
        RQ-SCBD & 0.259 & 20.1 & \(0.005\) & 0.9963 \\
        Positive RQMC PM & 0.255 & 17.1 & \(-0.126\) & 1 \\
         
        \bottomrule
    \end{tabular}
\end{table}

\begin{figure}[htbp]
    \centering
    \includegraphics[width=\textwidth]{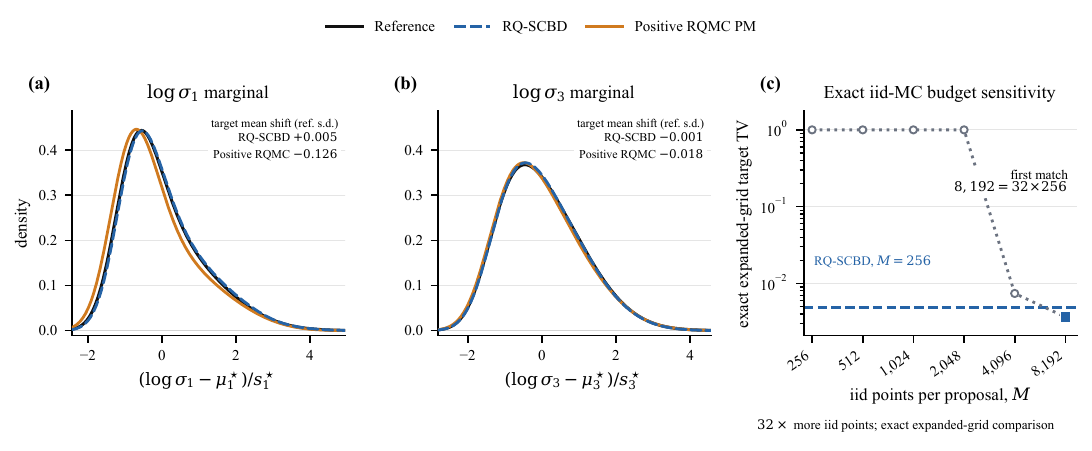}
    \caption{Real FTIR comparison.  Panels (a) and (b) show the two posterior
    marginals at the common budget \(M=256\), standardized by the reference
    posterior standard deviations.  RQ-SCBD follows the reference; direct
    exponentiation produces a visible displacement, particularly for
    \(\log\sigma_1\).  The iid-MC chain is not shown because it did not reach
    the exact
    finite-\(M\) implicit target.  Panel (c) uses the independent exact
    finite-population calculation on the expanded grid and shows that the first dyadic budget
    matching the selected RQ-SCBD TV is \(M=8192\), \(32\) times as many lookup
    evaluations.}
    \label{fig:ftir_posterior_comparison}
\end{figure}

For iid lookup draws,
\[
    \E\!\left[
      e^{-\beta_n\{\widehat\ell_M(\theta)-\ell(\theta)\}}
    \right]
    =
    \left[
      \E\!\left\{
        e^{-\beta_n\{L(\theta,U)-\ell(\theta)\}/M}
      \right\}
    \right]^M,
\]
so the corresponding finite-grid target factor is available without Monte
Carlo error.  On the expanded grid its TV is numerically one through \(M=2048\), \(0.00740\) at \(M=4096\),
and \(0.00366\) at \(M=8192\).  The first matching budget remains \(M=8192\), or \(32\) times the selected RQ-SCBD budget; the calculation is stable on the
checked expanded grids and domains.
The conclusion needed here is narrower: for this
observed-data loss, the Bessel correction controls a target distortion that is
not removed by an equal-cost positive loss estimate.

\subsection{Real FTIR selection and validation}
\label{app:real_ftir_details}

\subsubsection{Posterior-representative pilot}

The reference posterior is computed by exhaustive evaluation of
Eq.~\eqref{eq:ftir_full_loss} on the \(41\times41\) compatibility grid.
Algorithm~\ref{alg:adaptive_tuning} uses a deterministic nine-state pilot, $|\mathcal O_{\rm pilot}|=9$. Those points are defined as follows. First, the 0.95 posterior-mass region is defined as the smallest collection of grid states with cumulative mass at least 0.95. This region contains
\(460\) grids and has cumulative mass \(0.95008\). Each coordinate is then transformed by its weighted marginal mid-CDF. The grid mode is fixed as one medoid, and eight further medoids are defined by minimising the posterior-weighted within-cluster squared distance in the transformed coordinates. The construction is completed and hashed before any candidate weights are generated.

\begin{table}[htbp]
    \centering
    \small
    \caption{Frozen Algorithm~\ref{alg:adaptive_tuning} pilot for the real
    FTIR example.  Coordinates are \((\log\sigma_1,\log\sigma_3)\).}
    \label{tab:ftir_pilot_states}
    \begin{tabular}{crr}
        \toprule
        Pilot state & \(\log\sigma_1\) & \(\log\sigma_3\) \\
        \midrule
        mode & -0.96 & -0.48 \\
        2 & -1.08 & -0.72 \\
        3 & -1.08 & 0.48 \\
        4 & -0.84 & 0.00 \\
        5 & -0.60 & -0.60 \\
        6 & -0.60 & 0.72 \\
        7 & -0.12 & -0.12 \\
        8 & 0.12 & 0.48 \\
        9 & 0.24 & -0.60 \\
        \bottomrule
    \end{tabular}
\end{table}

This pilot represents posterior mass rather than a covariance ellipse or a
credible-set boundary.  The latter constructions are inappropriate for the
strongly skewed first marginal: an ellipse based on the global covariance can
place a nominal posterior-scale point at essentially zero posterior mass.
After selection, deterministic seven- and thirteen-medoid variants give
maximum negative frequencies \(0\) and \(0.002\), respectively, and both retain
the selected configuration as sign feasible.

\subsubsection{Primary audit, edge diagnostics, and holdout}

The primary Algorithm~\ref{alg:finite_block_audit} calculation uses all
\(1681\) compatibility-grid states and their normalized reference posterior
weights.  Independent random streams are used for the sign screen,
each candidate audit, and the holdout.  The selected \(K=8,B=32\) calculation
uses \(3000\) repetitions per state.

\begin{table}[htbp]
    \centering
    \small
    \caption{Primary and non-selecting checks for \(K=8,B=32\).  The holdout
    uses \(401\) systematic states, of which \(387\) are distinct after mapping
    to the grid.  Its uncertainty is too large to resolve the \(0.01\) threshold
    and it does not alter the pre-specified selection.}
    \label{tab:ftir_audit_details}
    \begin{tabular}{@{}lrrrr@{}}
        \toprule
        Audit & TV & MCSE & CV & mean shifts \((\log\sigma_1,\log\sigma_3)\) \\
        \midrule
        Primary \(41\times41\) grid &
        0.00316 & 0.00050 & 0.0664 & \((-0.0052,0.0006)\) \\
        Independent holdout &
        0.01555 & 0.00452 & 0.0653 & \((0.0254,0.0027)\) \\
        \bottomrule
    \end{tabular}
\end{table}

The primary estimate has 95\% interval \([0.00103,0.00529]\).  The holdout
point estimate is larger, but its standard error is nearly one third of the
estimate and its interval spans the selection threshold.  It is therefore
reported as a sensitivity diagnostic rather than evidence for reversing the
primary decision.

A separate full-grid sign calculation gives maximum negative frequency
\(0.591\), posterior-weighted negative probability \(0.00297\), and global
weighted sign \(0.99760\).  On an expanded \(101\times101\) grid the
corresponding values are \(0.585\), \(0.00351\), and \(0.99160\).  The large
maxima occur in low-mass edge states and are retained as stress diagnostics.
They did not enter the nine-state Algorithm~\ref{alg:adaptive_tuning} maximum.

\subsubsection{Production chains and implicit-target checks}

All methods use five chains and five million pair-score lookups per chain.
The proposal is calibrated on the reference posterior before the method-specific
runs and then fixed.  Chain means and covariances are compared with independent
finite-block implicit-target calculations.  The maximum absolute standardized
mean discrepancies are \(1.70\) and \(0.97\) for RQ-SCBD and positive
RQMC pseudo-marginal MCMC, respectively.  No
chain extension is required for these two methods under the pre-specified validation rule.  The iid-MC chains did not reach their exact finite-\(M\) implicit target and are excluded from this validation.

For the iid-MC estimator, exact finite-population target factors on the expanded
grid are available.  Table~\ref{tab:ftir_iid_budget} gives the dyadic budget
calculation used in Figure~\ref{fig:ftir_posterior_comparison}(c).

\begin{table}[htbp]
    \centering
    \small
    \caption{Exact iid-MC implicit-target TV on the expanded \(161\times161\)
    grid.  The comparison value for selected RQ-SCBD is \(0.00484\).}
    \label{tab:ftir_iid_budget}
    \begin{tabular}{rrrc}
        \toprule
        \(M\) & multiple of 256 & iid target TV & matches RQ-SCBD \\
        \midrule
        256  & 1  & 1.0000 & no \\
        512  & 2  & 1.0000 & no \\
        1024 & 4  & 1.0000 & no \\
        2048 & 8  & 1.0000 & no \\
        4096 & 16 & 0.00740 & no \\
        8192 & 32 & 0.00366 & yes \\
        \bottomrule
    \end{tabular}
\end{table}

The entries equal one at the four smallest budgets because the exact
expanded-grid iid factor concentrates mass away from the intended reference
region.  The table remains a finite-grid statement and should not be interpreted as
a global characterization of the implicit target on the unbounded parameter
space.

\subsection{Additional checks for the \texorpdfstring{\(g\)-and-\(k\)}{g-and-k} example}

This subsection records the target specification and reference calculations for
the \(g\)-and-\(k\) example.  The empirical MMD target depends on the observed
data, the kernel bandwidth, and the learning-rate multiplier; the ABC
comparisons additionally require a tolerance pilot.  Table~\ref{tab:gandk_lambda_sensitivity}
shows how the finite-temperature MMD-GBI reference changes with the multiplier
\(\lambda\), while the remaining details describe the variance run, fixed
reference grid, and ABC tolerance construction used in
Section~\ref{sec:g_and_k_benchmark}.

The observed dataset is generated once, with seed \(42\), from the \(g\)-and-\(k\) model with \(A_{\rm GK}=3\), \(B_{\rm GK}=1\), \(c_{\rm GK}=0.8\), \(\theta_0=(g_0,k_0)=(0.10,0.05)\), and \(\n=500\).  The prior is uniform on \([0,5]\times[0,2]\).  The Gaussian kernel bandwidth is
\[
    h=3.949,
\]
equal to four times the median non-zero pairwise absolute distance among the first \(300\) observations.  This broad bandwidth is part of the empirical MMD-GBI target specification and is held fixed for the reference posterior, MC-SCBD, RQ-SCBD, and all ABC tolerance runs.

\begin{table}[htbp]
    \centering
    \caption{Target-scale check for the \(g\)-and-\(k\) empirical MMD-GBI posterior.  The observed dataset, bandwidth, grid, and prior are held fixed while only the learning-rate constant in \(\beta_n=\lambda \n\) changes.  The table illustrates how \(\lambda\) changes the finite-temperature target; it is not an algorithmic tuning comparison.}
    \label{tab:gandk_lambda_sensitivity}
    \begin{tabular}{ccccc}
        \toprule
        \(\lambda\) & posterior mean \(g\) & posterior mean \(k\) & posterior sd \(g\) & posterior sd \(k\) \\
        \midrule
        2  & 0.233 & 0.108 & 0.166 & 0.073 \\
        5  & 0.160 & 0.082 & 0.122 & 0.060 \\
        10 & 0.117 & 0.062 & 0.088 & 0.047 \\
        \bottomrule
    \end{tabular}
\end{table}

As expected, larger \(\lambda\) concentrates the Gibbs posterior closer to the
empirical MMD minimizer \(\widehat\theta_{\rm MMD}=(0.043,0.011)\).  The
main experiment uses \(\lambda=5\) as a moderate, pre-specified
finite-temperature target and then conditions all algorithmic comparisons on
that target.

The variance-decay calculation used in
Figure~\ref{fig:gandk_rqmc_diagnostics} is run at the reference-grid mode with
\(64\) independent randomizations at each displayed block size. The
finite-range $\hat\alpha$-fits are \(1.03\) for MC and \(2.26\) for RQMC.  The wider pilot
cloud used by Algorithm~\ref{alg:adaptive_tuning} is represented by nine
posterior-weighted medoids.  The same states, weighted by their posterior
cluster masses, are used for the reference-relative target audit.  The selected RQ-SCBD and MC-SCBD configurations have audit TV values \(0.00765\) and
\(0.00212\), respectively, below the default \(\epsilon_w=0.01\); the corresponding 95\% interval upper endpoints are \(0.00905\) and \(0.00454\).

ABC tolerances are computed before running the ABC chains from \(180\) pilot
synthetic datasets, each using \(S_{\rm ABC}=1792\) simulator draws.  This value
matches the simulator-call accounting \(S_{\rm ABC}=2K_{\rm RQ}B_{\rm RQ}\) for
the selected RQ-SCBD pair-integrand budget.  Half of the pilot parameters are
drawn from a broad rectangle, \(g\sim{\rm Unif}(0,0.90)\) and
\(k\sim{\rm Unif}(0,0.35)\); the other half are drawn from a Gaussian centered
at the empirical MMD-GBI reference mean with covariance \(5\) times the
reference covariance, clipped to the prior support.  Table~\ref{tab:gandk_abc_tolerances}
records the resulting thresholds.  These quantiles define the ABC tolerance
posteriors used in Section~\ref{sec:g_and_k_benchmark}; they are not tuned to
match the empirical MMD-GBI posterior.

\begin{table}[htbp]
    \centering
    \caption{ABC tolerance thresholds used in the \(g\)-and-\(k\) benchmark.  Each threshold is a pilot quantile of the squared empirical MMD discrepancy between the observed and synthetic samples of size \(S_{\rm ABC}=1792\).  The main text reports three representative cases, \(50\%\), \(30\%\), and \(10\%\); the full sweep is retained here to show the tolerance--reference-distance--efficiency trade-off.}
    \label{tab:gandk_abc_tolerances}
    \begin{tabular}{cc}
        \toprule
        Pilot quantile & Tolerance \(\epsilon_q\) \\
        \midrule
        \(80\%\) & \(2.635\times 10^{-3}\) \\
        \(50\%\) & \(1.038\times 10^{-3}\) \\
        \(30\%\) & \(3.934\times 10^{-4}\) \\
        \(20\%\) & \(2.186\times 10^{-4}\) \\
        \(10\%\) & \(3.351\times 10^{-5}\) \\
        \(5\%\)  & \(1.356\times 10^{-5}\) \\
        \bottomrule
    \end{tabular}
\end{table}

The reference posterior in Figure~\ref{fig:gandk_target_geometry} and
Figure~\ref{fig:gandk_posterior_comparison} is computed on a uniform
\(54\times46\) grid,
\[
    g\in[0,0.66319],
    \qquad
    k\in[0,0.27056],
\]
using one scrambled Sobol evaluation with \(4096\) simulator pairs at each grid
point.  
The scrambled reference grid is frozen before the SCBD, Russian-
roulette, and ABC comparisons, so
the numerical distances in the main text are reference-relative to this fixed
finite grid rather than adjusted for an additional grid-replication error.
The empirical MMD minimizer \(\widehat\theta_{\rm MMD}\), the
finite-temperature posterior mean, and the ABC posterior means are therefore
different summaries of different targets.  The SCBD and Russian-roulette posterior means reported in
Table~\ref{tab:gandk_diagnostics} are sign-corrected.  The HPD boundaries and
marginal densities in Figure~\ref{fig:gandk_posterior_comparison} use
the corresponding signed density estimates.

\subsection{Russian-roulette comparator for the
\texorpdfstring{\(g\)-and-\(k\)}{g-and-k} example}
\label{app:gandk_rr}

The comparator uses the iid-MC Russian-roulette construction of
\citet{lyne2015russian}.  It does not use RQMC or an SCBD block split,
so its cost is reported as the expected number \(M'\) of MMD-pair
evaluations per weight evaluation within the prior support.

Let $L$ be a fixed prefix length used to define the roulette survival schedule: survival probabilities out to the $L$'th trial are
specified by the pilot construction, and beyond that
they decay geometrically with factor $1/2$. A separate preliminary calculation used two independent iid-MC pilot splits, each with \(32768\) MMD pairs per state, to compare \(L\in\{16,20,24,28,32,40\}\). It chose the smallest \(L\) for which the selected budgets were unchanged over all larger candidates and the expected cost and validation CV agreed with the \(L=40\) values to relative and absolute tolerances \(10^{-6}\), respectively. This gave \(L=24\). The operating point was then fixed before the production chains using
\(11\) pilot states.  The selection calculation did not use the
numerical reference loss, reference-posterior weights, or production
output.  Two further independent iid-MC pilot splits, each containing
\(32768\) MMD pairs per state, were used for this calculation.

A weight evaluation uses \(20000\) iid MMD pairs to construct a local
centre and \(12500\) pairs for each Taylor factor.  The roulette
survival distribution has a positive geometric tail after level \(L=24\),
with tail factor \(0.5\).  Its expected depth is
\[
    \E(N)=7.3120726439,
\]
and the expected cost is
\[
    M'
    =
    20000+12500\,\E(N)
    =
    111400.9080
\]
MMD pairs, or \(222801.8161\) simulator calls.  On the independent
validation split, the largest estimated coefficient of variation was
\(0.97276\), and the largest statewise one-sided Gaussian 95\% upper
bound on the negative-weight probability was \(0.02502\).

Five independent chains were run for \(1400\) iterations, with the
first \(350\) iterations discarded.  The \(5250\) retained draws have
acceptance \(0.3198\), negative-sign fraction \(0.00229\), and mean
sign \(0.99543\).  The sign-corrected posterior mean is
\[
    (0.160583,\ 0.090040),
\]
compared with numerical-reference mean
\[
    (0.160296,\ 0.082272).
\]
The standardized posterior-mean differences are \(0.0024\) for \(g\)
and \(0.1292\) for \(k\), giving joint mean error \(0.0914\).  The
minimum ESS is \(0.0272\) per \(10^5\) simulator calls.

Across the \(4643\) weight evaluations within the prior support, the
realized mean cost was \(111288.0\) pairs and the realized mean
roulette depth was \(7.3030\).  These values are within \(0.10\%\) of
their prospective expectations.  Proposals outside the prior support
required no simulator calls.

\subsection{ABC tolerance sweep for the \texorpdfstring{\(g\)-and-\(k\)}{g-and-k} benchmark}
\label{app:gandk_abc_sweep}

The MMD-ABC comparison uses \(S_{\rm ABC}=1792\) simulator draws per
proposal, matching the simulator count of the selected RQ-SCBD
configuration.  For a tolerance \(\epsilon\), it targets
\[
    \pi_{\rm ABC,\epsilon,S}(\theta\mid y)
    \propto
    \pi_0(\theta)
    \mathbb P_\theta\!\left\{
      \widehat{\rm MMD}^{\,2}\le\epsilon
    \right\}.
\]
The tolerance therefore indexes a family of posteriors rather than an
approximation sequence for the fixed MMD-GBI posterior in
Eq.~\eqref{eq:gk_mmd_posterior}. Table~\ref{tab:gandk_abc_sweep} and Figure~\ref{fig:gandk_abc_tradeoff} report
the ABC tolerance sweep; see
Section~\ref{sec:g_and_k_benchmark}.  ABC targets tolerance posteriors rather
than the empirical MMD-GBI target, so the sweep is a likelihood-free benchmark
under matched simulator-call accounting; it should not be read as an estimator of the empirical MMD-GBI posterior. The smallest mean error (the root mean square of the two posterior-mean differences after standardization by the marginal numerical-reference posterior standard deviations) is $0.137$. For comparison, the SCBD methods in Table~\ref{tab:gandk_diagnostics} in Section~\ref{sec:g_and_k_benchmark} have mean errors smaller than half this. The tendency for the mean error to grow at very small $\epsilon$ is due to the low acceptance rate and fixed sample budget.

\begin{table}[htbp]
    \centering
    \scriptsize
    \setlength{\tabcolsep}{3pt}
    \begin{tabular}{@{}lrrr@{}}
        \toprule
        ABC tolerance & Acceptance & ESS/\(10^5\) sims & Ref.-scaled mean distance \\
        \midrule
        \(\epsilon_{80}\) & 0.268 & 5.37 & 1.356 \\
        \(\epsilon_{50}\) & 0.224 & 4.30 & 0.466 \\
        \(\epsilon_{30}\) & 0.221 & 4.93 & 0.137 \\
        \(\epsilon_{20}\) & 0.221 & 4.61 & 0.377 \\
        \(\epsilon_{10}\) & 0.142 & 1.60 & 0.770 \\
        \(\epsilon_{5}\) & 0.092 & 1.89 & 0.811 \\
        \bottomrule
    \end{tabular}
    \caption{ABC tolerance sweep for the \(g\)-and-\(k\) benchmark under the
same simulator-call budget used by RQ-SCBD\@.  The fourth column is the distance
from the numerical MMD-GBI posterior mean, scaled by its marginal posterior
standard deviations; it is not an estimation error for a common target.}
    \label{tab:gandk_abc_sweep}
\end{table}

\begin{figure}[htbp]
    \centering
    \includegraphics[width=\textwidth]{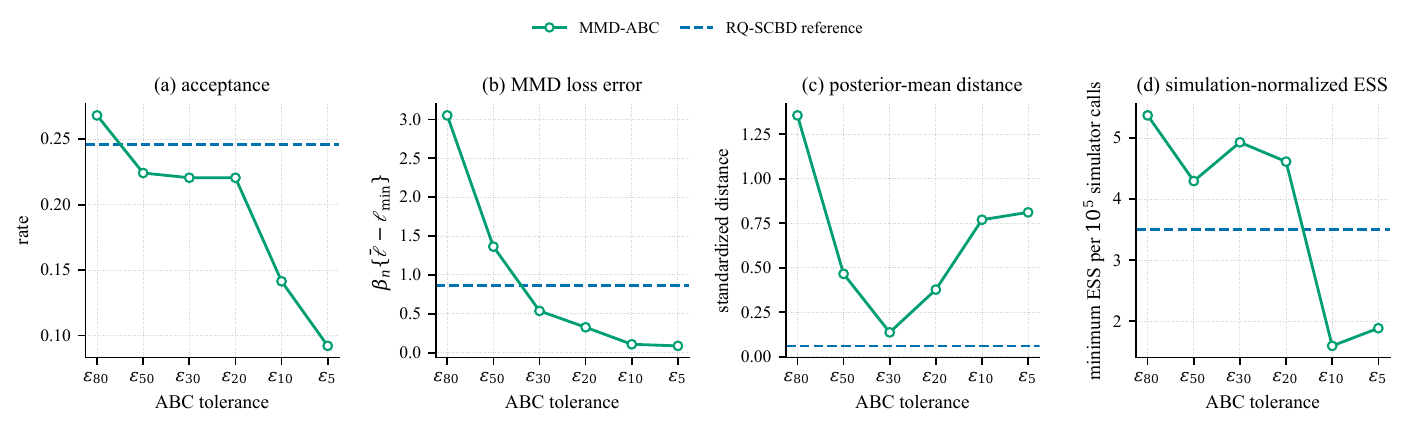}
    \caption{ABC tolerance and reference-distance trade-off under the same simulator and MMD discrepancy.  Tightening the tolerance changes the ABC posterior and does not target the finite-temperature empirical MMD-GBI posterior.}
    \label{fig:gandk_abc_tradeoff}
\end{figure}

\end{document}